\documentclass[fleqn,12pt]{article}
\usepackage{natbib}
\usepackage{amsthm,amsfonts,amsmath,amssymb,vmargin}
\usepackage[onehalfspacing]{setspace}
\usepackage{enumerate}
\usepackage{pgf,pgfplots,tikz}
\pgfplotsset{compat=1.18}
\usepackage{booktabs}
\usepackage{dsfont}
\usepackage{bm}

\RequirePackage[pagebackref=true]{hyperref}
\hypersetup{colorlinks=true,linkcolor=blue,urlcolor=blue,citecolor=red,
     pdftitle={Interdependent Hitting Times},
    pdfauthor=Jaap H. Abbring and Yifan Yu,
    pdfsubject=Duration Analysis,
	pdfkeywords={duration analysis, identifiability, mixed hitting-time model, stopping game, synchronization game},
    pdfdisplaydoctitle=true,
    bookmarksopen=true}

\setpapersize{A4} \setmargrb{1.25in}{1.25in}{1.25in}{1.25in}

\theoremstyle{plain}
\newtheorem{theorem}{Theorem}
\newtheorem{lemma}{Lemma}

\theoremstyle{definition}
\newtheorem{assumption}{Assumption}

\newtheorem{case}{Case}

\newcommand{\Lap}{{\cal L}}
\newcommand{\s}{z} 
\newcommand{\salt}{\xi} 

\newcommand{\cites}[1]{\citeauthor{#1}'s (\citeyear{#1})}

\newcommand{\val}{v}
\newcommand{\brval}{V}

\newcommand{\err}{\varepsilon}
\newcommand{\bigerr}{\bm{\varepsilon}}

\newcommand{\sX}{{\cal X}}
\newcommand{\X}{X}
\newcommand{\x}{x}
\newcommand{\vX}{\mathbf{X}}
\newcommand{\vx}{\mathbf{x}}
\newcommand{\np}{I} 
\newcommand{\nsims}{Q} 

\newcommand{\epsin}{\delta}

\newcommand{\R}{\mathbb{R}}
\newcommand{\Rp}{[0,\infty)}
\newcommand{\Rpp}{(0,\infty)}

\newcommand{\barYw}{\widehat{Y}}

\newcommand{\MCone}{A}
\newcommand{\MCtwo}{B}
\newcommand{\MCthree}{C}
\newcommand{\MCfour}{D}

\newcommand{\computeplatform}{The final row of each panel gives the average computation time per Monte Carlo iteration  in seconds. This is a conservative estimate of typical computing times, as we only parallelized Monte Carlo iterations and not the multistart estimation routine within each iteration. Moreover, simulations were run using MATLAB 2025b on a Windows Server 2019 virtual machine with 60 allocated virtual CPUs hosted on AMD EPYC 9754 processors, which offers effective parallelization, but only modest per-core performance.}

\begin{document}

\title{Interdependent Hitting Times}
\author{Jaap H. Abbring\thanks{%
Department of Econometrics \& OR, Tilburg University, P.O. Box 90153, 5000 LE Tilburg, The Netherlands; and CEPR. E-mail \href{mailto:jaap@abbring.org}{jaap@abbring.org}. Web: \href{http://jaap.abbring.org}{jaap.abbring.org}.}\and 
Yifan Yu\thanks{%
School of Economics, Zhejiang University,  Hangzhou, China. E-mail: \href{mailto:yifanyu@zju.edu.cn}{yifanyu@zju.edu.cn}. 
\newline\newline
We thank \'{A}ureo de Paula and participants in numerous conferences and seminars for helpful discussion. This research was financially supported by the Dutch Research Council (NWO) through Vici grant 453-11-002. It used the Dutch national e-infrastructure with the support of the SURF Cooperative and two Small Compute grants (EINF-10757 and EINF-15783) in the Computing Time on National Computer Facilities programme of NWO.
\newline
\newline\emph{Keywords:}
duration analysis, identifiability, mixed hitting-time model, stopping game, synchronization game. \newline\emph{JEL codes:} C14, C41, C73, L13. }}


\maketitle
\vspace*{-1em}

\begin{abstract}
This paper studies interdependent durations as equilibrium outcomes of a synchronization game, a continuous-time stopping game in which the incentive to stop increases when other players stop.  We allow the payoffs to vary with both common shocks and observed and unobserved agent characteristics. The common shocks follow a spectrally negative L\'{e}vy process, a semiparametric process that includes Brownian motion as a special case but may also have jumps. We show that equilibrium outcomes can be represented as interdependent hitting times and use this to establish the game's nonparametric identification from data on stopping times and covariates. We develop maximum simulated likelihood and method of simulated moments estimators and evaluate their finite-sample and computational performance in Monte Carlo experiments. The results provide a tractable framework for identifying and estimating synchronization games from interdependent duration data.
\end{abstract}

\thispagestyle{empty}

\newpage\setcounter{page}{1}

\section{Introduction}

Agents are often observed to synchronize their stopping choices. This could be because they complement each other in an activity, so that one agent abandoning it triggers the exit of others, but also because they face common shocks or have related characteristics. For example, mall operators use anchor stores, such as flagship stores, to generate consumer traffic \citep{vitorino2012empirical}. One anchor store's exit may reduce profits of the mall's remaining anchor stores and make them leave as well. Alternatively, stores may exit together because their profits are hit by common negative shocks, such as a mass layoff that reduces local incomes. Similarity of stores within malls, for example because they sort on quality into malls, amplifies synchronization driven by complementarities and common shocks. If stores differ across malls, such sorting may itself create the appearance of synchronization. 

Distinguishing these sources of synchronization matters because they represent different mechanisms and have distinct policy implications. In particular, the causal effects due to complementarities are instances of ``endogenous effects'' \citep{manski_identification_1993}. These can create ``social multipliers'' that amplify other determinants of behavior and thereby change the effects of policy interventions. For example, if anchor stores complement each other, avoiding the first exit may be particularly effective, and mall operators may want to direct their efforts there.

In this paper, we develop an econometric framework to study these problems. We specify a continuous-time optimal stopping game with strategic complementarities; that is, in which the incentive to stop increases when other players stop. We allow for both a semiparametric (spectrally negative L\'{e}vy) process of common shocks and nonparametric observed and unobserved heterogeneity. Our key contribution is to show that the game's equilibrium outcomes can be represented in terms of the times at which the L\'{e}vy process hits heterogeneous, interdependent thresholds. This allows us to derive nonparametric identification results for the L\'{e}vy process and the  thresholds that characterize equilibrium and to construct computationally feasible estimators. This yields a full econometric development of a standard class of synchronization games. It also shows how results for mixed hitting-time models can be extended to multivariate settings and applied to stopping games with heterogeneous agents facing ongoing uncertainty.

Our contribution is closest to \citet{de2009inference}, which proposed a similar model and applied it to study the synchronized desertion of Union Army soldiers during the American Civil War. This model includes strategic complementarities, reflecting the negative effects of desertion on the morale and combat capabilities of a company’s remaining soldiers, and common payoff shocks inflicted by, for example, sustained artillery fire. A key assumption is that the shocks enter as correlated {\em privately observed} Brownian motions. In equilibrium, players form beliefs about other players' stochastically evolving payoff states based on their own Brownian motion observations and the commonly observed survival of the other players. These beliefs will generally be that there is a nonzero hazard of any remaining players exiting, which, if there are complementarities, implies a nontrivial expected value loss term in the Hamilton-Jacobi-Bellman equation for the best-response problem. This is recognized by \citeauthor{de2009inference}'s Proposition 1, which covers the best-response problem for given exit hazards of the other players, but is not carried through in the equilibrium analysis in \citeauthor{de2009inference}'s Proposition 2. It may be possible to characterize equilibrium strategies in \citeauthor{de2009inference}'s game as fixed points of Proposition 1's best-response mapping with consistency between players' beliefs and the exit hazards induced by the resulting strategies, but this is nontrivial. Our complete-information approach removes this complication and facilitates a rigorous and full equilibrium characterization in terms of threshold rules. 

In turn, this simplifies the econometric analysis, so that we can provide point identification results for a model with a semiparametric shock process and nonparametric observed and unobserved heterogeneity. In terms of \citeauthor{de2009inference}'s application to desertion, this not only allows us to more flexibly distinguish complementarities and common shocks, but also to measure whether their effects are particularly strong because soldiers sort into companies based on unobservables like courage. Moreover, we not only provide and evaluate a method of simulated moments (MSM) estimator inspired by \citeauthor{de2009inference}'s minimum simulated distance approach, but also a maximum simulated likelihood (MSL) estimator based on a custom recursive simulator that exploits our model's threshold structure.

Later contributions pursued tractability by specifying deterministic payoff processes. \citet{qe18:honoredepaula} studied complementarities in retirement with a model in which couples cooperatively pick the Nash bargaining solutions to their retirement problems. The assumption of deterministic payoffs, with appropriate functional form choices, allowed \citeauthor{qe18:honoredepaula} to apply identification results for generalized accelerated failure-time models \citep{ridder_non-parametric_1990}. \citet{honore_interdependent_2010} studied a noncooperative variant of this game with complete information and \citet{lin_liu_2021} developed one with incomplete information. In contrast, we study games with ongoing payoff uncertainty. 

We also contribute to econometric duration analysis. Traditional duration models are typically specified using hazard rates. They include multivariate models that allow durations to be dependent through both direct effects of one duration on the hazards of others and related unobserved characteristics \citep{ecma03:abbringvandenberg}. Unlike these hazard models, our interdependent hitting-times model links observed durations to economic primitives. This allows us to separate the mechanisms underlying synchronization and facilitates counterfactual analysis. 

The applied literature on dynamic games often assumes discrete time.  We work in continuous time because this is common in traditional duration analysis. Moreover, continuous time simplifies the theoretical analysis \citep{honore_interdependent_2010} and may address some of the computational problems that plague discrete-time games \citep{doraszelski2012avoiding}.  Indeed, \citet{res16:arcidiaconoetal} specified a dynamic discrete-choice game in which agents can only take actions at (almost surely distinct) Poisson times and discussed how standard estimation methods can be applied to this game. In contrast, the process underlying our game is not a Poisson process but a more general L\'{e}vy process. Moreover, we provide identification results that are not available from that work. This is all facilitated by a close link to mixed hitting-time (MHT) models. \citet{are10:abbring,abbring_mixed_2012} studied such models and discussed their application to single-agent optimal stopping problems. Our paper extends this to games.

Our theoretical analysis builds on the literature on continuous-time optimal stopping models. In economics, these models are used to study the optimal timing of irreversible decisions under uncertainty. They are closely related to option-pricing models in finance and often referred to as real-options models. \cite{dixit_investment_1994} and \cite{stokey_economics_2009} analyzed and reviewed various models based on Brownian motion; \cite{cont_financial_2004}, \cite{kyprianou_introductory_2006}, and \cite{boyarchenko_irreversible_2007} provided extensions to general L\'{e}vy processes. 

Our model is a real-options \emph{game}. Real-options games are multivariate extensions of single-agent real-options models with strategic interactions. They introduce ongoing uncertainty into stopping games with deterministic payoffs, such as \cites{fudenberg_preemption_1985} pre-emption game and \cites{fudenberg_theory_1986} war of attrition. Our theoretical analysis is closest to \cite{murto_exit_2004}, which studied a war of attrition in which two firms decide when to abandon a declining, uncertain market. \citeauthor{murto_exit_2004}'s model is a two-player stopping game with strategic substitutes, with uncertainty modelled using Brownian motion; ours is a multi-player stopping game with strategic complements driven by a more general L\'{e}vy process. Our contribution is to characterize equilibrium outcomes of such games in terms of interdependent hitting times, thereby turning a class of real-options games into econometric models for duration data.

The paper is organized as follows. In the next section, we highlight  the paper's main ideas using a simple example of an empirical synchronization game. Section \ref{s:theory} introduces the general synchronization game and characterizes its equilibrium outcomes as interdependent hitting times. Section \ref{s:identification} discusses the model's econometric implementation and identification. Section \ref{s:estimation} proposes and evaluates estimators. Section \ref{s:conclusion} concludes by discussing the application to other games. Appendices \ref{app:characterization} and \ref{app:identGen} offer proofs omitted from the main text. An online supplement collects supporting material \citep{ecsgsupp26:abbringyu}.

\section{Example}
\label{s:example}

We first illustrate this paper's main ideas with a simple example of an empirical synchronization game, applied to the analysis of spillovers between anchor stores in shopping malls. Mall operators use anchor stores, for example big department or flagship stores, to generate consumer traffic. Other stores (including other anchor stores) may benefit from the presence of  an anchor store, even if it sells close substitutes. Indeed, \cite{vitorino2012empirical} estimated a static model of anchor store entry, using a sample of regional shopping centers in the United States, and found that positive spillovers dominate competition between anchor stores. Unlike \citeauthor{vitorino2012empirical}'s model, our example game of anchor store survival is  {\em dynamic}, which, for example, would allow us to explore the effects of uncertainty and option values. 

\subsection{A Simple Synchronization Game}

Consider a shopping mall with, at time $t=0$, two anchor stores, $A$ and $B$. The mall is dwindling: No new anchor stores will find it profitable to enter and the incumbent stores are each choosing the best time $t\in\Rp$ to permanently close. Specifically, if both are still active at time $t$, they decide simultaneously, in a ``Joint'' decision node, whether to exit or not. If store $i$ exits but the other store, which we denote with $-i$, stays in the Joint node at time $t$, store $-i$ can respond immediately by exiting in a subsequent ``Lone'' decision node; $i=A,B$ (in Section \ref{s:theory}, we make this more precise). If store $-i$ nonetheless continues alone, it can exit in a Lone decision node at a future time in $(t,\infty)$, or stay forever.

Following the exit decisions at time $t$, payoffs accrue. If store $i$ has closed in $[0,t]$, it receives the outside payoff of $0$. If store $i$ is still active, it receives a flow profit $R^J_i-C^J_i\exp(\gamma_i Y_t)$ if store $-i$ is still active and $R^L_i-C^L_i\exp(\gamma_i Y_t)$ if store $-i$ has closed. Here, $Y_t=\mu t + \sigma W_t$ is a common driver of profits that is not affected by the stores' choices (the mall's ``external'' state), with $\{W_t\}$ a standard Brownian motion, $\mu\geq 0$ a drift parameter, and $\sigma>0$ a dispersion parameter. We let  $R_i^J>R_i^L>0$ and $C_i^L>C_i^J>0$, so that store $i$ finds it less profitable to be active once store $-i$ has closed, and $\gamma_i>0$, so that profits decrease in $Y_t$.

The stores have complete and almost perfect information: When deciding on exit at time $t$, store $i$ knows the game's parameters, the history $\{Y(\tau);0\leq\tau\leq t\}$ of the mall's profitability state, and the stores' actions in all preceding nodes. They use pure stopping strategies: rules that tell them for each possible history to either exit or continue. Because $\{Y\}$ is a Markov process, it is natural to restrict attention to Markov strategies, which are only contingent on payoff-relevant state variables. Apart from the game's parameters, at time $t$, these are simply the current $Y_t$ and whether both stores are still active. Thus, a Markov strategy for store $i$ can be represented by stopping sets ${\cal Y}^J_i\subseteq\R$ and ${\cal Y}^L_i\subseteq\R$: If store $i$ employs a strategy $({\cal Y}^J_i, {\cal Y}^L_i)$, it exits in a Joint decision node at time $t$ if and only if $Y_t\in{\cal Y}^J_i$ and in a Lone decision node at time $t$ if and only if $Y_t\in{\cal Y}^L_i$.

We focus on Markov-perfect equilibria, subgame-perfect equilibria in Markov strategies. In particular, each store $i$ picks a strategy $({\cal Y}^J_i, {\cal Y}^L_i)$ that maximizes her expected flow of payoffs discounted at a rate $\rho_i>\psi(\gamma_i)\equiv\mu\gamma_i+\sigma^2\gamma_i^2/2>0$ given store $-i$'s strategy $({\cal Y}^J_{-i}, {\cal Y}^L_{-i})$. We can analyze these best responses and their equilibria using well-known approaches to the corresponding single-agent optimal stopping problem \citep[\emph{e.g.},][Section 6]{stokey_economics_2009}. 

\subsection{Equilibrium}

Because exit is permanent and there is no entry, we can solve for our game's equilibria backwards, starting with Lone subgames.

\subsubsection{Lone subgames}

Consider a subgame starting in time $t$'s Lone node with store $i$ active and store $-i$ closed. Because store $-i$ has closed for good, store $i$ faces a single agent stopping problem in this subgame, with a flow of payoffs $R^L_i-C^L_i\exp(\gamma_i Y_t)$ until it exits (if ever), which returns a zero payoff. This problem has a textbook solution \citep[][Section S2.3]{ecsgsupp26:abbringyu}: Store $i$ closes as soon as $Y_t$ increases above the threshold
\begin{equation*}
\overline{Y}^L_i\equiv\gamma_i^{-1}\ln\left(\frac{R_i^L}{C_i^L}\cdot\frac{\Lambda(\rho_i)}{\Lambda(\rho_i)-\gamma_i}\cdot 
\frac{\rho_i-\psi(\gamma_i)}{\rho_i}\right),
\end{equation*} 

\noindent where $\Lambda(\rho_i)\equiv\left(-\mu+\sqrt{\mu^2+2\rho_i\sigma^2}\right)/\sigma^2>\gamma_i$ solves $\psi\left(\Lambda(\rho_i)\right)=\rho_i$. This threshold rule's properties are intuitive. Store $i$ exits faster if it generates less profit in any given state; that is, if $R^L_i/C^L_i$ is smaller.  If it is myopic ($\rho\rightarrow\infty$), it is not interested in any upward potential in staying active and closes as soon as it no longer makes a short term profit: $R^L_i-C^L_i\exp(\gamma_i\overline{Y}^L_i)=0$. It also applies this rule in the limit $\sigma\downarrow 0$, where there {\em is} no upward potential to care about. Away from these extremes, stores will tolerate short term losses that are small enough relative to the mall's upward potential:  $R_i^L-C_i^L\exp\left(\gamma_i\overline{Y}^L_i\right)<0$. They are more tolerant of such losses ($\overline{Y}^L_i$ is higher) if they are more patient ($\rho$ is lower) or see more upward potential ($\sigma$ is higher).  

In any Markov-perfect equilibrium, ${\cal Y}^L_A=[\overline{Y}^L_A,\infty)$ and ${\cal Y}^L_B=[\overline{Y}^L_B,\infty)$.\footnote{It would also be optimal for stores to apply the threshold rule strictly and use open stopping sets ${\cal Y}^L_i=(\overline{Y}^L_i,\infty)$. This would give the same duration distributions.} This settles the equilibrium stopping behavior in the Lone decision nodes, in which only one of the stores is still active. 

\subsubsection{Joint subgames and equilibrium outcomes}

Stopping behavior in the Joint decision nodes is more complicated, because of the positive spillovers between the stores. Equilibrium strategies need not specify threshold rules for Joint nodes. Nevertheless, because Brownian motion is continuous, provided that the lower end of the stopping set $\inf {\cal Y}^J_i\geq 0$, the first time $\{Y\}$ hits ${\cal Y}^J_i$ equals the first time $T\left(\inf {\cal Y}^J_i\right)$ it hits $\inf {\cal Y}^J_i$, where $T\left(y\right)\equiv\inf\{t\geq 0\mid Y_t\geq y\}$.\footnote{We use the convention that $\inf \emptyset \equiv \infty $. In particular, we set $T(y)=\infty$ if $\{Y\}$ never hits $y$.} So, for the purpose of econometrically modelling exit times, it is not very restrictive to focus on equilibria in threshold strategies. 

To characterize such equilibria, first consider the auxiliary single agent stopping problem of a store $i$ that receives a flow of payoffs $R^J_i-C^J_i\exp(\gamma_iY_t)$ until it exits, which returns zero payoff. This problem is identical to the problem of a lone surviving store studied above, with a change of parameters from $(R^L_i,C^L_i)$ to $(R^J_i,C^J_i)$. The stopping problem of a lone surviving store is at one extreme that is particularly favorable to closing; this new auxiliary problem is at the other extreme, which is most favorable to remaining active. The solution to the auxiliary problem is again a threshold rule, with threshold
\begin{equation*}
\overline{Y}^J_i\equiv\gamma_i^{-1}\ln\left(\frac{R_i^J}{C_i^J}\cdot\frac{\Lambda(\rho_i)}{\Lambda(\rho_i)-\gamma_i}\cdot 
\frac{\rho_i-\psi(\gamma_i)}{\rho_i}\right)>\overline{Y}^L_i.
\end{equation*} 
With these thresholds for the auxiliary stopping problem in hand, we can analyze our game's Markov-perfect equilibria.  Denote $\min \overline{Y}^J\equiv\min\{\overline{Y}_A^J,\overline{Y}_B^J\}$ and $\max \overline{Y}^L\equiv\max\{\overline{Y}_A^L,\overline{Y}_B^L\}$. We distinguish two cases.

\def\ylow{0.7}
\def\ymid{1.0}
\def\yhigh{1.8}
\def\yfour{1.7}

\def\simBM{  1}
\def\simmu{  1}
\def\simsigma{  1}
\def\simCP{  0}
\def\simlambda{1.00}
\def\simmshock{1.00}
\def\simtmin{  0}
\def\simtmax{  1}
\def\simhmin{-0.15}
\def\simhmax{2.24}
\def\simlegx{0.780000}
\def\simlegya{0.821335}
\def\simlegyb{1.992376}
\def\simhtaa{0.537500}
\def\simhtxaa{0.54}
\def\simhtab{0.775000}
\def\simhtxab{0.78}
\def\simhtac{      }
\def\simhtxac{   }
\def\simhtad{      }
\def\simhtxad{   }
\def\simhtba{0.232000}
\def\simhtxba{0.23}
\def\simhtbb{0.271000}
\def\simhtxbb{0.27}
\def\simhtbc{0.566000}
\def\simhtxbc{0.57}
\def\simhtbd{0.562000}
\def\simhtxbd{0.56}

\def\ylowseq{0.5}
\def\yhighseq{2}
\begin{figure}[tb]
	\begin{tikzpicture}
	\begin{axis}[width=4.8in,height=3in,
	name=sim1,
	ylabel={},
	xlabel={$t\rightarrow$},
	xlabel style={at={(ticklabel cs:1)},yshift=0.8em,anchor=south east, gray},
	axis x line=bottom,
	every outer x axis line/.append style={-,color=gray,line width=1pt},
	axis y line=left,
	every outer y axis line/.append style={-,color=gray,line width=1pt},
	xtick={\simtmin,\simhtbb,\simhtbd},
	xticklabels={$\simtmin$,$T^{1}$,$T^{2}$,},
	ytick={\simhmin,0,\ylowseq,\ymid,\yfour,\yhighseq},          
	yticklabels={,0,$\overline{Y}^L_i$,$\min\overline{Y}^J=\overline{Y}^J_i$,$\max\overline{Y}^L=\overline{Y}^L_{-i}$,$\overline{Y}^J_{-i}$},
	major tick length=2.5mm,
	every tick/.append style={line width=0.75pt},
	ymin=\simhmin,
	ymax=\simhmax,
	scaled ticks=false,
	/pgf/number format/precision=2,
	/pgf/number format/set thousands separator={}]
	\addplot[color=black,line width=0.05pt] table[x=t,y=p2,col sep=comma]{compute/mhtmsim.csv};
	\draw[color=green!75,line width=1pt,solid] (axis cs:\simtmin,\ymid) -- (axis cs:\simtmax,\ymid);
	\draw[color=red!75,line width=1pt,solid] (axis cs:\simtmin,\yfour) -- (axis cs:\simtmax,\yfour);
	\draw[color=green!75,line width=1pt,dashed] (axis cs:\simtmin,\yhighseq) -- (axis cs:\simtmax,\yhighseq);
	\draw[color=red!75,line width=1pt,dashed] (axis cs:\simtmin,\ylowseq) -- (axis cs:\simtmax,\ylowseq);
	\draw[color=green!75,line width=1pt,dotted] (axis cs:\simhtbb,\ymid) -- (axis cs:\simhtbb,\simhmin);
	\draw[color=red!75,line width=1pt,dotted] (axis cs:\simhtbd,\yfour) -- (axis cs:\simhtbd,\simhmin);
	\end{axis}
	\end{tikzpicture}
\caption{Sequential Exit\label{fig:seqexit}}

\vspace*{5pt}
{\footnotesize
Note: This figure plots one realization of $\{Y\}$ and thresholds that reflect that heterogeneity dominates complementarity (Case \ref{case:sequential}): $\overline{Y}_i^L< \overline{Y}_i^J <\overline{Y}_{-i}^L<\overline{Y}_{-i}^J$ for $i\in\{A,B\}$. In equilibrium, store $i$ exits at $T^1=T(\min\overline{Y}^J)$ and store $-i$ at time $T^2= T(\max\overline{Y}^L)>T^J$.
}
\end{figure}

\begin{case}  {\bf Heterogeneity dominates complementarity}\label{case:sequential}\\
If $\min \overline{Y}^J<\max \overline{Y}^L$, $\overline{Y}_i^L< \overline{Y}_i^J <\overline{Y}_{-i}^L<\overline{Y}_{-i}^J$ for $i\in\{A,B\}$; see Figure \ref{fig:seqexit}.  Store  $-i$ will always continue at $Y_t<\overline{Y}_{-i}^L$ in equilibrium. Therefore, store $-i$ will be around for at least as long as $Y_t\leq\overline{Y}_i^J<\overline{Y}_{-i}^L$, so that store $i$'s best response solves the auxiliary decision problem with Joint payoffs and ${\cal Y}_i^J=\left[\overline{Y}_i^J,\infty\right)$ in any equilibrium. In contrast, store $-i$'s equilibrium exit behavior in Joint nodes when $Y_t> \overline{Y}_{-i}^L$ is not uniquely determined: At such $Y_t$, store $i$ leaves and store $-i$ wishes to leave as well, but it can either achieve this by exiting in the Joint node or by continuing in the Joint node and exiting immediately after in the Lone node. However, these multiple equilibria all give the same outcome: The first store exits when $Y_t$ hits $\min\overline{Y}^J$, at time $T^1=T(\min\overline{Y}^J)$, and the second (lone surviving) store will leave when $Y_t$ subsequently hits $\max\overline{Y}^L$, at $T^2=T(\max\overline{Y}^L)$. We refer to this as ``sequential exit.'' 
\end{case}

\def\ylowsim{0.3}
\def\yhighsim{2.1}
\begin{figure}[tb]
	\begin{tikzpicture}
	\begin{axis}[width=4.8in,height=3in,
	name=sim2,
	anchor={origin},
	ylabel={},
	xlabel={$t\rightarrow$},
	xlabel style={at={(ticklabel cs:1)},yshift=0.8em,anchor=south east, gray},
	axis x line=bottom,
	every outer x axis line/.append style={-,color=gray,line width=1pt},
	axis y line=left,
	every outer y axis line/.append style={-,color=gray,line width=1pt},
	xtick={\simtmin,\simhtba,\simhtbc},
	xticklabels={$\simtmin$,,$T^{1}$,},
	ytick={\simhmin,0,\ylowsim,\ylow,\yhigh,\yhighsim},          
	yticklabels={,0,$\overline{Y}^L_i$,$\max\overline{Y}^L=\overline{Y}^L_{-i}$,$\min\overline{Y}^J=\overline{Y}^J_i$,$\overline{Y}^J_{-i}$},
	major tick length=2.5mm,
	every tick/.append style={line width=0.75pt},
	ymin=\simhmin,
	ymax=\simhmax,
	scaled ticks=false,
	/pgf/number format/precision=2,
	/pgf/number format/set thousands separator={}]
	\addplot[color=black,line width=0.05pt] table[x=t,y=p2,col sep=comma]{compute/mhtmsim.csv};
	\draw[color=green!75,line width=1pt,solid] (axis cs:\simtmin,\yhigh) -- (axis cs:\simtmax,\yhigh);
	\draw[color=red!75,line width=1pt,solid] (axis cs:\simtmin,\ylow) -- (axis cs:\simtmax,\ylow);
	\draw[color=green!75,line width=1pt,dashed] (axis cs:\simtmin,\yhighsim) -- (axis cs:\simtmax,\yhighsim);
	\draw[color=red!75,line width=1pt,dashed] (axis cs:\simtmin,\ylowsim) -- (axis cs:\simtmax,\ylowsim);
	\draw[color=green!75,line width=1pt,dotted] (axis cs:\simhtbc,\yhigh) -- (axis cs:\simhtbc,\simhmin);
	\draw[color=red!75,line width=1pt,dotted] (axis cs:\simhtba,\ylow) -- (axis cs:\simhtba,\simhmin);
	\draw[color=blue!75,line width=5pt,solid] (axis cs:\simhtba,\simhmin) -- (axis cs:\simhtbc,\simhmin) node[pos=0.5, color=blue, anchor = south]{$T(\overline{Y})\in$};
	\draw[color=blue!75,line width=5pt,solid] (axis cs:\simtmin,\ylow) -- (axis cs:\simtmin,\yhigh)  node[pos=0.5, color=blue, anchor = west]{$\ni\overline{Y}$};
	\end{axis}
	\end{tikzpicture}
\caption{Simultaneous Exit\label{fig:simexit}}

\vspace*{5pt}
{\footnotesize
Note: This figure plots one realization of $\{Y\}$ and thresholds that reflect that complementarity dominates heterogeneity (Case \ref{case:synchronization}): $\overline{Y}_i^L< \overline{Y}_{-i}^L <\overline{Y}_i^J<\overline{Y}_{-i}^J$ for $i\in\{A,B\}$. In equilibrium, both stores exit at the time $T(\overline{Y})$ at which $\{Y\}$ hits $\overline{Y}\in\left[\max\overline{Y}^L,\min\overline{Y}^J\right]$. Both firms would be best off in  the ``efficient'' equilibrium in which they exit at $T^1=T(\min\overline{Y}^J)$.
}
\end{figure}

\begin{case} {\bf Complementarity dominates heterogeneity}\label{case:synchronization}\\
If $\min \overline{Y}^J\geq\max \overline{Y}^L$, the complementarity of the payoffs is large enough, relative to the differences between the two stores, to ensure that neither store would continue alone once the other store has closed. Consequently, both stores exit at the same time. We refer to this as ``simultaneous exit.''  Figure \ref{fig:simexit} shows one such scenario, in which $\overline{Y}_i^L< \overline{Y}_{-i}^L <\overline{Y}_i^J<\overline{Y}_{-i}^J$. There are multiple equilibria and, unlike in the sequential exit case, these correspond to multiple outcomes. In general, simultaneous exit at any threshold $\overline{Y}\in\left[\max\overline{Y}^L,\min\overline{Y}^J\right]$ can be supported as an equilibrium outcome. However, a natural equilibrium refinement would pick those equilibria in which both stores enjoy their positive spillovers as much as they can and only exit when $Y_t$ hits the highest among these thresholds, $\min\overline{Y}^J$, at time $T^1=T(\min\overline{Y}^J)$. In our econometric analysis, we apply this refinement. 
\end{case}

\subsection{Empirical Implementation}

Suppose we have a sample from a population of regional shopping malls, with data on exit times $(T^{J},T^{L})$ and possibly the identities of the first and second store exiting and some background characteristics of each store. We assume that the  observed exit times are equilibrium outcomes of the previous section's game, with primitives that may vary across shopping malls. Specifically, we take the distribution of $\{Y\}$ to be common across shopping malls, but allow the payoff parameters $(R^J_A,C^J_A,R^L_A,C^L_A,\gamma_A,\rho_A)$ and $(R^J_B,C^J_B,R^L_B,C^L_B,\gamma_B,\rho_B)$, and therefore the thresholds that characterize the equilibrium strategies, to vary across malls with the stores' observed and unobserved (by the econometrician) characteristics. 

Complementarities in a store's payoffs from positive spillover effects, the common shocks to these payoffs through $\{Y\}$, and observed and unobserved heterogeneity all affect the observed exit times and their dependence between anchor stores within a shopping mall. Specifically, stores may exit together--- ``synchronize'' their exit behavior--- because of strong complementarities in the payoffs, common shocks to those payoffs, or sorting of similar stores into shopping malls (so that co-located stores have similar profitabilities, but potentially very different payoffs from those in other shopping malls). It is not obvious to what extent the data allow us to uniquely determine (``identify'') the model's primitives and, in particular, to separate these sources of synchronization, even in the absence of sampling error.

The key insight we leverage in this paper is that equilibrium exit times are first hitting times of heterogeneous thresholds. Specifically, in both equilibrium cases, the time at which the first store in a mall closes is the first time that $Y_t$ hits $\min\overline{Y}^J$, so that \cites{abbring_mixed_2012} identification results for the mixed hitting-time model, and \cites{jem21:abbringsalimans} stronger results for the case that $\{Y\}$ has a nontrivial Gaussian component,  can be applied to the identification of the parameters $\mu$ and $\sigma^2$ of $\{Y\}$ and the determinants of the threshold $\min\overline{Y}^J$. 

If the data are stratified into groups of shopping malls that share the same payoff parameters (where the parameters may continue to be different across stores within each mall), then \citeauthor{abbring_mixed_2012}'s Theorem 2 ensures identification of the distributions of the payoff shocks $\{Y\}$ and the threshold $\min\overline{Y}^J$. Under additional assumptions--- \emph{e.g.},  that stores have been randomly assigned to shopping malls, so that their store characteristics and thresholds are identically and independently distributed--- this identifies the distribution of $(\overline{Y}_A^J,\overline{Y}_B^J)$. If we want to learn about the decisions of lone survivors, and thus $\overline{Y}_A^L$ and $\overline{Y}_B^L$,  as well, we need to bring in data on lone surviving stores, which is complicated by the fact that these are only observed for the selected group of stores that exit sequentially:  $\min \overline{Y}^J<\max \overline{Y}^L$.

Without stratified data, identification can use variation with the stores' observed characteristics, using variants of \citeauthor{jem21:abbringsalimans}'s Theorem 1. As in \citeauthor{abbring_mixed_2012}'s examples of single-agent optimal stopping problems, this requires that we structure the dependence of the primitives on the observed and unobserved characteristics to give  sufficiently convenient expressions for the implied thresholds. We fully develop all this in Section \ref{s:identification}, for Section \ref{s:theory}'s general model.

\section{Theory}
\label{s:theory}

The anchor store example introduces the class of games that we study and highlights the key role of interdependent hitting times in their empirical analysis. Sections \ref{ss:game}--\ref{ss:equilibrium} extend the simple example to general profit (or, rather, utility) functions and an external state that follows a spectrally negative L\'{e}vy process. In Section \ref{ss:mpg}, we consider an extension to more than two players. 

\subsection{Synchronization Game}
\label{ss:game}

Consider two players $A$ and $B$ who live in continuous time with infinite horizon. Both are active at time $0$ and can irreversibly terminate their activity (``exit'') at a time of their choice. In particular, at each time $t$, the players first sequentially pass through two decision nodes.\footnote{We introduce these discrete decision nodes at each continuous point in time $t$ to model the possibility that a player responds {\em immediately} to the exit  of the other player. This can be formalized as the limit of an appropriate discrete time game \citep{simon1989extensive}.}  

\vspace*{-10pt}\paragraph{Joint:} If neither player has exited in $[0,t)$, then they simultaneously decide on continuation. 

\vspace*{-10pt}\paragraph{Lone:} For $i=A,B$: If player $i$ is still active, but player $-i$ exited in $[0,t)$ or the time-$t$ Joint node, then player $i$ decides on continuation.\\
\vspace*{-10pt}

\noindent Next, each player $i$ that is still active earns utility $u^J_i(Y_t)$ if player $-i$ is still active and $u^L_i(Y_t)$ if player $-i$ exited in $[0,t]$. We assume that the utility flows are weakly decreasing in the external state $Y_t$, can take both negative and positive values, and are weakly consistent with strategic complementarities. 

\begin{assumption}[Monotonicity] \label{ass:M}
	For $S=J,L$; $i=A,B$: $u_i^S(y)$ weakly decreases in $y$, with $u_i^S(y)>0$ for low enough $y$ and $u_i^S(y)<0$ for high enough $y$.
\end{assumption}

\begin{assumption}[Complementarity] \label{ass:C}
For $i=A,B$; $u_i^J\left(y\right)\geq u_i^L\left(y\right) $ for all $y\in\R$.
\end{assumption}

\noindent Each inactive player earns outside payoffs $0$.

The external state $Y_t$ follows a spectrally negative L\'{e}vy process. A L\'{e}vy process $\{Y\}\equiv\{Y_t;t\geq 0\}$ is a stochastic process that starts at $Y_0=0$, with sample paths that are continuous from the right and have limits from the left,  and stationary and independent increments. It is the sum of a Brownian motion, with drift parameter $\mu\in\R$ and dispersion parameter $\sigma\in\Rp$, and an independent pure-jump process, characterized by its L\'{e}vy measure $\Pi$ \citep[for further details, see][Section S1]{ecsgsupp26:abbringyu}. The restriction to {\em spectrally negative} L\'{e}vy processes excludes positive jumps. This ensures that $\{Y\}$ attains and does not jump across a threshold when it first hits it from below, so that its hitting times can be analyzed like those of Brownian motion. Throughout, we maintain that $\{Y\}$ has a nontrivial Brownian motion component, $\sigma>0$.\footnote{This suffices to exclude the trivial case that $\{Y\}$ is never positive. It also implies a technical (ACP) condition of our optimal stopping analysis \citep[][Section S2.1]{ecsgsupp26:abbringyu}. Finally, it aids identification of some model specifications, along the lines of \citet{jem21:abbringsalimans}.} We do not parametrically restrict the L\'{e}vy measure $\Pi$ of $\{Y\}$, so our specification in terms of $(\mu,\sigma,\Pi)$ is a semiparametric extension of Section \ref{s:example}'s Brownian motion with drift. 

Both players have complete information. They value states and choices with their expected sum of utilities discounted at rates $\rho_A>0$ and $\rho_B>0$.\footnote{Throughout, we maintain that $\rho_A$ and $\rho_B$ are large enough to ensure convergence of expected discounted utilities \citep[Assumption S1]{ecsgsupp26:abbringyu}.} They employ pure Markov strategies that form a subgame-perfect (and therefore Markov-perfect) equilibrium. We will represent these strategies as pairs of node-specific Borel stopping sets  $(\mathcal{Y}^J_A,\mathcal{Y}^L_A)$ and $(\mathcal{Y}^J_B,\mathcal{Y}^L_B)$.

To solve for Markov perfect equilibrium, we need to find equilibrium strategies $(\mathcal{Y}^J_A,\mathcal{Y}^L_A)$ and $(\mathcal{Y}^J_B,\mathcal{Y}^L_B)$ such that each player's strategy solves the optimal stopping problem given the other player's strategy. As in Section \ref{s:example}'s simple example, our equilibrium analysis builds on the well-studied solutions of auxiliary optimal stopping problems and recurses backwards from Lone to Joint subgames. 

\subsection{Auxiliary Optimal Stopping Problems}
\label{ss:auxiliary}

We first characterize the optimal stopping strategies and the corresponding value functions in auxiliary optimal stopping problems in which players either stay and receive utility flow $u_{i}^{S}(Y_t)$, for either $S=J$ or $S=L$, or exit and receive $0$ forever. This serves two purposes. First,  the optimal stopping sets for the auxiliary problems with $S=L$ are $\mathcal{Y}_{A}^{L}$ and $\mathcal{Y}_{B}^{L}$, the equilibrium stopping sets for the Lone nodes. Second, their optimal values $\val_i^J(y)$ and $\val_i^L(y)$ of being active in state $y\in\R$ provide upper and lower bounds on the equilibrium values in Joint nodes. 

Our characterization builds closely on the literature on single-agent optimal stopping problems, notably the approach of \cite{boyarchenko_irreversible_2007} and, in particular, \cites{boyarchenko2005american} results for general L\'{e}vy processes. In \citet[Section S2.2]{ecsgsupp26:abbringyu}, we use this to show that, under Assumption \ref{ass:M},  the agent's optimal strategies are to exit as soon as $\{Y\}$ hits the exit set $[\overline{Y}_i^S,\infty)$.\footnote{The {\em open} exit set $(\overline{Y}_i^S,\infty)$ is also optimal. Because $\sigma>0$, its first hitting time is almost surely the same. As our focus is on implied durations, we will only present closed exit sets.} Here, the thresholds $\overline{Y}_i^S$; $S=J,L$; are uniquely determined by an intuitive condition. The details of this condition are not relevant to our equilibrium analysis, but we need some results on the optimal thresholds and on 
\begin{equation*} 
	\val _i^S\left( y;\overline{Y}\right) \equiv\mathbb{E}\left[ \int_{0}^{T\left(\overline{Y}-y\right)}\mathrm{e}^{-\rho_i t}u_i^S(y+Y_{t})dt\right],
\end{equation*}
the expected present value of a utility flow $u_i^S(y+Y_t)$ that is abandoned when $y+Y_t$ hits $\left[\overline{Y},\infty\right)$, at time $T\left(\overline{Y}-y\right)$. Note that the optimal value $\val_i^S(y)=\val_i^S(y;\overline{Y}_i^S)$. In \citet[][Section S2.2]{ecsgsupp26:abbringyu}, we prove\footnote{\label{fn:rho}See \citet[][Lemma S1]{ecsgsupp26:abbringyu}, which is Lemma \ref{lemma:aux} with a precise statement of the condition on $\rho_i$, Assumption S1.}
\begin{lemma}\label{lemma:aux}
If Assumption \ref{ass:M} holds, then (i) $\val^S_i\left( y;\overline{Y}\right) >0$ for $y<\overline{Y}\leq \overline{Y}^S_i$ and (ii) $\val^S_i\left( y;\overline{Y}\right) <0$ for $\overline{Y}^S_i<y<\overline{Y}$; $i=A,B$. If moreover Assumption \ref{ass:C} holds, then (iii) the optimal thresholds satisfy $\overline{Y}^{L}_i\leq \overline{Y}^{J}_i$; $i=A,B$.
\end{lemma}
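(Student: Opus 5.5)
The plan is to work with the perpetual-threshold value $\val_i^S(y;\overline{Y})$ through the strong Markov property of $\{Y\}$ at hitting times of levels. Spectral negativity is what makes this clean: a level is reached exactly when it is first crossed from below, so $Y_{T(x)}=x$ on $\{T(x)<\infty\}$. For $y<\overline{Y}'<\overline{Y}$ we then have $T(\overline{Y}-y)=T(\overline{Y}'-y)+T'$, where $T'$ is an independent copy of $T(\overline{Y}-\overline{Y}')$ started from $\overline{Y}'$. This gives the composition identity
\begin{equation*}
\val_i^S(y;\overline{Y})=\val_i^S(y;\overline{Y}')+\mathbb{E}\left[\mathrm{e}^{-\rho_i T(\overline{Y}'-y)}\right]\val_i^S(\overline{Y}';\overline{Y}),
\end{equation*}
and the Laplace factor $\mathbb{E}[\mathrm{e}^{-\rho_iT(x)}]=\mathrm{e}^{-\Phi(\rho_i)x}$ is strictly positive.

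I will also use the facts established in the supplement's characterization of $\overline{Y}_i^S$, taken as given.
\begin{itemize}
\item $\overline{Y}_i^S$ is optimal, so $\val_i^S(y)=\val_i^S(y;\overline{Y}_i^S)\geq \val_i^S(y;\overline{Y})$ for every $\overline{Y}$.
\item $\val_i^S(y)\geq 0$, since exiting immediately is feasible.
\item The threshold is determined by a first-order condition of the form $\mathbb{E}[u_i^S(\overline{Y}_i^S+\underline{Y}_{e})]=0$, where $\underline{Y}_e$ is the running infimum at an independent exponential time. This is the Wiener--Hopf characterization in \cites{boyarchenko2005american} approach.
\end{itemize}

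For (ii), take $\overline{Y}_i^S<y<\overline{Y}$. Under optimal play from $y$ the agent stops immediately, so $\val_i^S(y)=0$. I would instead use a direct argument. By the Wiener--Hopf factorization, $\val_i^S(y;\overline{Y})$ can be written as an expectation of $\mathbb{E}[u_i^S(x+\underline{Y}_e)]$ over $x\in[y,\overline{Y})$, weighted by the positive occupation density of the supremum process. Each such term is strictly negative: $x>\overline{Y}_i^S$, the map $x\mapsto\mathbb{E}[u_i^S(x+\underline{Y}_e)]$ is weakly decreasing by Assumption~\ref{ass:M} and zero at $\overline{Y}_i^S$, and it is strictly negative above $\overline{Y}_i^S$ by the uniqueness of the threshold.

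Part (i) follows from the same representation. For $y<\overline{Y}\le\overline{Y}_i^S$, the integrand is positive wherever $x<\overline{Y}_i^S$. Uniqueness of the threshold plus monotonicity gives strictness, because $\sigma>0$ smooths $\underline{Y}_e$ and rules out flat zero stretches.

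For (iii), compare the defining functions. By Assumption~\ref{ass:C}, $\mathbb{E}[u_i^J(x+\underline{Y}_e)]\geq\mathbb{E}[u_i^L(x+\underline{Y}_e)]$ for all $x$. The $J$-function therefore stays positive below $\overline{Y}_i^L$, which forces $\overline{Y}_i^J\geq\overline{Y}_i^L$. As an alternative, one can use the dominance $\val_i^J(y;\overline{Y})\ge\val_i^L(y;\overline{Y})$, which with (ii) for $S=J$ yields the same conclusion.

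The main obstacle is deriving the occupation-density representation and its strict positivity for a general spectrally negative Lévy process. I would rely on the supplement's Lemma S1 setup, including Assumption S1 to guarantee integrability.
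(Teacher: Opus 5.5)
Your proposal is correct and is essentially the Boyarchenko--Levendorskii argument the paper builds on. Set $w_i^S(x)\equiv\mathbb{E}[u_i^S(x+\underline{Y}_e)]$, which is weakly decreasing and vanishes only at $\overline{Y}_i^S$, and let $\overline{M}_e\equiv\sup_{t\le e}Y_t$, which is exponential with rate $\Lambda(\rho_i)$. The Wiener--Hopf representation $\val_i^S(y;\overline{Y})=\rho_i^{-1}\mathbb{E}\bigl[\mathds{1}\{y+\overline{M}_e<\overline{Y}\}\,w_i^S(y+\overline{M}_e)\bigr]$ then gives (i)--(ii) from the sign of $w_i^S$ on $[y,\overline{Y})$ and (iii) from $w_i^J\geq w_i^L$.

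Two small remarks. Your composition identity is never used. The strict sign of $w_i^S$ away from its root is better argued directly than by appeal to uniqueness alone: full support of $\underline{Y}_e$ on $(-\infty,0]$ when $\sigma>0$, together with $u_i^S>0$ for low $y$, rules out a flat stretch of $w_i^S$ at zero.
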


\subsection{Equilibrium}
\label{ss:equilibrium}

The auxiliary problems with Lone payoffs $u_A^L$ and $u_B^L$ give the equilibrium strategies in player $A$ and $B$'s Lone subgames, $\mathcal{Y}_A^L=\left[ \overline{Y}_{A}^{L},\infty \right)$ and $\mathcal{Y}_B^L=\left[ \overline{Y}_{B}^{L},\infty \right)$. Given the corresponding Lone equilibrium values  $\brval_A^L\equiv \val_A^L$ and $\brval_B^L\equiv \val_B^L$, a stopping set $\mathcal{Y}_i^J$ is a best response to $\mathcal{Y}_{-i}^J$ if it attains player $i$'s best response value
\begin{equation*} 
\begin{split}
\brval^J_{i}\left( y\right) \equiv
\sup_{T\in\mathcal{T}}\mathbb{E}\Bigl[ 
	&\int_{0}^{\min\{T,T_{-i}^J(y)\}}\mathrm{e}^{-\rho_i t}u_{i}^{J}\left( y+Y_{t}\right) dt\\
	&~~~~~+\mathrm{e}^{-\rho_i \min\{T,T_{-i}^J(y)\}}\mathds{1}\left\{T>T^J_{-i}(y)\right\} \brval_{i}^{L}\left(y+ Y_{\min\{T,T_{-i}^J(y)\}}\right)
	\Bigr],
\end{split}
\end{equation*}
where $T^J_{-i}(y)\equiv\inf \left\{ t\geq 0\mid y+Y_{t}\in \mathcal{Y}^J_{-i}\right\}$; $i=A,B$; and $\mathds{1}\left\{\cdot\right\}=1$ if $\cdot$ is true and 0 otherwise. Here, $\mathcal{T}$ is the set of $\{Y\}$-stopping times. We first characterize these best responses. We will use that Assumption \ref{ass:C} (complementarity) implies that $\val_i^L\leq\brval_i^J\leq\val_i^J$; $i=A,B$.
\begin{lemma} \label{lemma:br} Under  Assumptions \ref{ass:M} and \ref{ass:C}, player $i$'s best response $\mathcal{Y}^J_i$ to any (equilibrium or out-of-equilibrium) $\mathcal{Y}^J_{-i}$ satisfies
(i) $\inf{\cal Y}^J_{i}\geq \overline{Y}_{i}^{L}$,
(ii) $\inf{\cal Y}^J_{i} \geq \inf{\cal Y}^J_{-i}$ if $\inf{\cal Y}^J_{-i}\leq\overline{Y}_{i}^{J}$, and
(iii) $\inf{\cal Y}^J_{i}=\overline{Y}_{i}^{J}$ if $\inf{\cal Y}^J_{-i} > \overline{Y}_i^J$;
$i=A,B$.
\end{lemma}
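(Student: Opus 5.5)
We prove the three claims by comparing player $i$'s best-response value $\brval_i^J$ with the auxiliary values $\val_i^L$ and $\val_i^J$, using the sandwich $\val_i^L\leq\brval_i^J\leq\val_i^J$ implied by Assumption \ref{ass:C}. Throughout, a best response $\mathcal{Y}^J_i$ must make stopping (value $0$) weakly optimal at every $y\in\mathcal{Y}^J_i$, so $\brval_i^J(y)=0$ there. It must also make continuation weakly optimal off $\mathcal{Y}^J_i$, so $\brval_i^J(y)\geq 0$ for $y\notin\mathcal{Y}^J_i$.

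\emph{Claim (i).} Fix $y<\overline{Y}_i^L$. By Lemma \ref{lemma:aux}(i) with $\overline{Y}=\overline{Y}_i^L$, we have $\val_i^L(y)=\val_i^L(y;\overline{Y}_i^L)>0$. Hence $\brval_i^J(y)\geq\val_i^L(y)>0$, so exiting immediately (value $0$) is strictly suboptimal and $y\notin\mathcal{Y}^J_i$. Therefore $\inf\mathcal{Y}^J_i\geq\overline{Y}_i^L$.

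The lower bound deserves one check. Player $i$ can guarantee $\val_i^L(y)$ by using the Lone threshold rule $[\overline{Y}_i^L,\infty)$ in both node types. Before $-i$ exits, player $i$ earns $u_i^J\geq u_i^L$ flows. After $-i$ exits, player $i$ continues optimally with value $\brval_i^L=\val_i^L$, which is at least what the fixed rule delivers. So this strategy weakly dominates the auxiliary $L$ problem pathwise.

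\emph{Claim (iii).} Suppose $\inf\mathcal{Y}^J_{-i}>\overline{Y}_i^J$. Then for $y<\inf\mathcal{Y}^J_{-i}$ and before $\{y+Y\}$ reaches $\inf\mathcal{Y}^J_{-i}$, player $-i$ stays. Since $\{Y\}$ is spectrally negative, it hits $[\overline{Y}_i^J,\infty)$ exactly at $\overline{Y}_i^J$, strictly before it can reach $\inf\mathcal{Y}^J_{-i}$. So the strategy ``exit at $T(\overline{Y}_i^J-y)$'' earns exactly $\val_i^J(y)$, which attains the upper bound; hence $\brval_i^J=\val_i^J$ on $y<\overline{Y}_i^J$.

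For $y<\overline{Y}_i^J$, Lemma \ref{lemma:aux}(i) gives $\brval_i^J(y)=\val_i^J(y)>0$, so $y\notin\mathcal{Y}^J_i$. At $y=\overline{Y}_i^J$ we have $\val_i^J(\overline{Y}_i^J)=0$, so stopping is optimal and $\overline{Y}_i^J\in\mathcal{Y}_i^J$, possibly up to the open/closed convention in the paper's footnote. More generally, for $y\in(\overline{Y}_i^J,\inf\mathcal{Y}^J_{-i})$, Lemma \ref{lemma:aux}(ii) shows that continuing to any higher threshold yields negative value. Combined with the bound $\brval_i^J\leq\val_i^J$, which is $0$ above $\overline{Y}_i^J$, this makes exit optimal there. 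Hence $\inf\mathcal{Y}^J_i=\overline{Y}_i^J$.

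\emph{Claim (ii).} Let $a\equiv\inf\mathcal{Y}^J_{-i}\leq\overline{Y}_i^J$ and take $y<a$. On $[0,T(a-y)]$ the opponent stays, and at $T(a-y)$ the process sits exactly at $a$ because there are no upward jumps. Consider the strategy ``continue until $T(a-y)$, then follow the Lone rule if $-i$ exits.'' Its payoff is at least the payoff from stopping $u_i^J$ flows at $T(a-y)$, since the continuation value $\brval_i^L(a)=\val_i^L(a)\geq0$. That payoff is at least $\val_i^J(y;a)$, which is strictly positive by Lemma \ref{lemma:aux}(i) because $a\leq\overline{Y}_i^J$. So $\brval_i^J(y)>0$ for all $y<a$. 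Hence no $y<a$ lies in $\mathcal{Y}^J_i$, and $\inf\mathcal{Y}^J_i\geq a$.

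The main obstacle I expect is making the inequality $\brval_i^J\geq\val_i^L$ and the exact-attainment argument in (iii) rigorous within the stopping-time formulation of $\brval_i^J$. This requires the strong Markov property at $T^J_{-i}(y)$, together with the fact that $\{Y\}$ creeps upward and so attains thresholds exactly. A second delicate point is the boundary point itself: whether $\overline{Y}_i^J\in\mathcal{Y}^J_i$ is a matter of indifference, so only the infimum statements are claimed. I would handle both by applying the strong Markov property at hitting times, using the supplement's optimal stopping results.
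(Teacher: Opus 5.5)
Parts (i), (ii), and the lower-bound half of (iii) follow the paper's proof almost exactly: the sandwich bound, letting player $i$ ride along to $\inf\mathcal{Y}^J_{-i}$, and attaining $\val_i^J$ with $[\overline{Y}_i^J,\infty)$, each closed by Lemma \ref{lemma:aux}(i).

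The gap is in the other half of (iii), namely $\inf\mathcal{Y}^J_i\le\overline{Y}_i^J$. You show $\brval_i^J(y)=0$, i.e.\ that exiting is \emph{an} optimal action at $y\ge\overline{Y}_i^J$. That does not force every best response to exit there. At $y=\overline{Y}_i^J$ continuing is equally optimal, which is why $(\overline{Y}_i^J,\infty)$ is also optimal. So neither ``stopping is optimal, hence $\overline{Y}_i^J\in\mathcal{Y}^J_i$'' nor ``exit is optimal there, hence $\inf\mathcal{Y}^J_i=\overline{Y}_i^J$'' follows. What you need is that any candidate with $\inf\mathcal{Y}^J_i>\overline{Y}_i^J$ does \emph{strictly} worse than exiting in some state.

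The paper gets this by contradiction. Let $\overline{Y}\equiv\inf(\mathcal{Y}^J_i\cup\mathcal{Y}^J_{-i})>\overline{Y}_i^J$. From $y<\overline{Y}$ the process creeps up to $\overline{Y}$. There, either $i$ exits, or $-i$ exits and $i$ is left in a Lone node at $\overline{Y}>\overline{Y}_i^J\ge\overline{Y}_i^L$, where she exits at once. So the candidate earns exactly $\val_i^J(y;\overline{Y})$. Lemma \ref{lemma:aux}(ii) makes this strictly negative on $(\overline{Y}_i^J,\overline{Y})$, whereas exiting earns $0$.

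Your remark that ``continuing to any higher threshold yields negative value'' is the right ingredient, but as you use it, it only covers thresholds at which player $i$ herself exits. A candidate that continues past $\inf\mathcal{Y}^J_{-i}$ is cut off by $-i$'s exit there, and its value picks up the Lone continuation $\val_i^L(\inf\mathcal{Y}^J_{-i})$. You must show this term is zero. That requires $\overline{Y}_i^L\le\overline{Y}_i^J$ (Lemma \ref{lemma:aux}(iii)), or equivalently $\val_i^L\le\val_i^J=0$ above $\overline{Y}_i^J$ from your own sandwich. Your proposal never invokes either, and without it the candidate's value need not be negative.
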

\begin{proof} (i) Using Assumption \ref{ass:M}, Lemma \ref{lemma:aux}(i) implies that  $\brval_i^J(y)\geq \val_{i}^{L}(y) > 0$, so that $y\not\in{\cal Y}^J_i$, for $y < \overline{Y}_{i}^{L}$. Therefore, $\inf{\cal Y}^J_i\geq \overline{Y}_{i}^{L}$.

(ii) In response to ${\cal Y}^J_{-i}$, player $i$ could pick ${\cal Y}_i^J=\left[\inf{\cal Y}^J_{-i},\infty\right)$ and earn $\val_{i}^{J}(y; \inf{\cal Y}^J_{-i})$. Player $i$'s best response will pay off no less, so $\brval^J_{i}(y)\geq \val_{i}^{J}(y; \inf{\cal Y}^J_{-i})$.  
If $\inf{\cal Y}^J_{-i}\leq \overline{Y}_{i}^{J}$ then, using Assumption \ref{ass:M}, Lemma \ref{lemma:aux}(i) implies that $\brval^J_{i}(y)\geq\val_{i}^{J}(y; \inf{\cal Y}^J_{-i}) > 0$, and therefore $y\not\in{\cal Y}_i^J$, for $y <\inf{\cal Y}^J_{-i}$. So, $\inf{\cal Y}^J_{i} \geq \inf{\cal Y}^J_{-i}$.

(iii) If $\inf{\cal Y}^J_{-i}>\overline{Y}_{i}^{J}$ then player $i$ could pick ${\cal Y}_i^J=\left[\overline{Y}^J_{i},\infty\right)$ and earn $\val_{i}^{J}$, so $\brval^J_{i}=\val_{i}^{J}$. Using Assumption \ref{ass:M}, Lemma \ref{lemma:aux}(i) implies that $\brval^J_{i}(y)=\val_{i}^{J}(y) > 0$, and therefore $y\not\in{\cal Y}_i^J$, for $y <\overline{Y}^J_{i}$. So, $\inf{\cal Y}^J_{i} \geq \overline{Y}^J_{i}$.

Now suppose that $\inf{\cal Y}^J_{i}>\overline{Y}^J_{i}$. Using Assumptions \ref{ass:M} and \ref{ass:C}, Lemma \ref{lemma:aux}(iii) implies that $\overline{Y} \equiv\inf\left({\cal Y}^J_{i}\cup{\cal Y}^J_{-i}\right)>\overline{Y}_{i}^{J}\geq \overline{Y}_{i}^{L}$, so that $\brval_i^J(y)=\val_i^J(y;\overline{Y})$ for $y<\overline{Y}$. Using Assumption \ref{ass:M}, Lemma \ref{lemma:aux}(ii) implies that $\brval^J_{i}(y)=\val_{i}^{J}(y; \overline{Y}) <0$ for $\overline{Y}^J_{i}<y<\overline{Y}$. So, ${\cal Y}^J_{i}$ with $\inf{\cal Y}^J_{i}>\overline{Y}^J_{i}$ is not a best response and $\inf{\cal Y}^J_{i} =\overline{Y}^J_{i}$.
\end{proof}

Equilibrium requires that $\mathcal{Y}^J_A$ and $\mathcal{Y}^J_B$ are best responses to each other in Joint subgames. As in Section \ref{s:example}'s  example, we distinguish two cases. 

\setcounter{case}{0}
\begin{case} {\bf Heterogeneity dominates complementarity}\label{case:sequentialgen}\\
If $\min \overline{Y}^J<\max \overline{Y}^L$, we can pick $i\in\{A,B\}$ such that $\overline{Y}_{i}^{L} \leq\overline{Y}_{i}^{J} < \overline{Y}_{-i}^{L} \leq \overline{Y}_{-i}^{J}$.
\begin{theorem}\label{th:eqSeq}
Suppose that  Assumptions \ref{ass:M} and \ref{ass:C} hold and that $\overline{Y}_{i}^{L} \leq\overline{Y}_{i}^{J} < \overline{Y}_{-i}^{L} \leq \overline{Y}_{-i}^{J}$. Then, (i)  $\inf{\cal Y}^J_{-i} \geq \overline{Y}_{-i}^{L}$ and (ii) $\inf{\cal Y}^J_{i} = \overline{Y}_{i}^{J}$ in any equilibrium. Moreover, (iii) all strategy profiles such that $\mathcal{Y}_{i}^{L} = [\overline{Y}_{i}^{L}, \infty)$, $\mathcal{Y}_{-i}^{L} = [\overline{Y}_{-i}^{L}, \infty)$, $\mathcal{Y}_{i}^{J} = [\overline{Y}_{i}^{J}, \infty)$, and $\mathcal{Y}_{-i}^{J} = [\overline{Y}_{-i}, \infty)$ for some $\overline{Y}_{-i} \geq \overline{Y}_{-i}^{L}$ form equilibria. 
\end{theorem}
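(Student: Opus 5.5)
Parts (i) and (ii) follow directly from Lemma \ref{lemma:br}; part (iii) needs a direct best-response check.

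For (i), apply Lemma \ref{lemma:br}(i) to player $-i$: every best response satisfies $\inf{\cal Y}^J_{-i}\geq \overline{Y}_{-i}^{L}$. In equilibrium ${\cal Y}^J_{-i}$ is a best response, so this holds.

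For (ii), combine (i) with the ordering $\overline{Y}_{i}^{J} < \overline{Y}_{-i}^{L}$. This gives $\inf{\cal Y}^J_{-i}\geq\overline{Y}^L_{-i}>\overline{Y}_i^J$. Lemma \ref{lemma:br}(iii), applied to player $i$, then yields $\inf{\cal Y}^J_{i}=\overline{Y}_{i}^{J}$.

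For (iii), fix $\overline{Y}_{-i}\geq\overline{Y}^L_{-i}$ and the stated profile.
\begin{itemize}
\item \emph{Lone nodes.} The Lone strategies are optimal by construction, since they solve the auxiliary $S=L$ problems (Section \ref{ss:auxiliary}).
\item \emph{Player $i$ in Joint nodes.} Player $-i$ never exits jointly below $\overline{Y}_{-i}>\overline{Y}_i^J$. For every starting state $y<\overline{Y}^J_i$, the stopping time $T(\overline{Y}^J_i-y)$ attains $\val_i^J(y)$, which is the upper bound on $\brval^J_i$. So $[\overline{Y}^J_i,\infty)$ is a best response. For $y\geq \overline{Y}^J_i$, immediate exit is optimal. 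Continuing earns at most $\val_i^J(y)=0$, because $y$ lies in the optimal exit set of the auxiliary $J$ problem.
\item \emph{Player $-i$ in Joint nodes.} Player $i$ exits in the Joint node at $T(\overline{Y}^J_i-y)$, before player $-i$'s threshold $\overline{Y}_{-i}$ is reached. This holds because $\{Y\}$ is spectrally negative and cannot jump upward past $\overline{Y}^J_i$. Player $-i$ is therefore left in the Lone node at a state equal to $\overline{Y}^J_i<\overline{Y}^L_{-i}$ if $y<\overline{Y}^J_i$, or equal to $y$ if $y\geq\overline{Y}_i^J$.
\end{itemize}

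The main check is that player $-i$ cannot gain by deviating in the Joint node. Consider first $y<\overline{Y}^L_{-i}$. A deviation can only consist of exiting earlier, at some stopping time $T<T^J_i(y)$. On $[0,T^J_i(y))$ player $-i$ earns $u^J_{-i}$, and from then on it earns $\val^L_{-i}$. Since $u^J_{-i}\geq u^L_{-i}$, the candidate strategy's payoff is at least the payoff of continuing under Lone flows with threshold $\overline{Y}^L_{-i}$. It remains to show that exiting at $T$ is worse than continuing. For this I would use the optional sampling representation of the value $\brval^J_{-i}(y)$ of following the candidate strategy, evaluated at $y'=y+Y_T$. This gives $\brval^J_{-i}(y')\geq \val^L_{-i}(y')>0$ whenever $y'<\overline{Y}^L_{-i}$, by Lemma \ref{lemma:aux}(i). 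Because $Y_T\leq \overline{Y}^J_i<\overline{Y}^L_{-i}$ before $T^J_i(y)$, any such early exit forgoes a positive continuation value.

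Now consider $y\geq\overline{Y}^L_{-i}$. Player $i$ exits at once, since $y>\overline{Y}^J_i$. Player $-i$ then either exits jointly, earning $0$, or continues into a Lone node and exits there, also earning $0$. Both give the same payoff, which is why any $\overline{Y}_{-i}\geq\overline{Y}^L_{-i}$ is admissible.

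The obstacle I expect is making this comparison rigorous for arbitrary stopping times. I would handle it with the strong Markov property of $\{Y\}$ at $T$, writing the candidate payoff as the expectation of the flows up to $T$ plus $\mathrm{e}^{-\rho_{-i}T}\brval^J_{-i}(y+Y_T)$. The comparison then reduces to checking that this continuation term is nonnegative at every state $y+Y_T<\overline{Y}^J_i$. That follows from $\brval^J_{-i}\geq \val^L_{-i}>0$ there, using Lemma \ref{lemma:aux}(i) with $\overline{Y}=\overline{Y}^L_{-i}$.
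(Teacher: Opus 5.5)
Your proposal is correct and follows the paper's proof essentially step for step. Parts (i) and (ii) come from Lemma \ref{lemma:br}(i) and (iii). For (iii), you show that player $i$ attains the upper bound $\val^J_i$, that player $-i$'s Joint choice is inconsequential for $y\geq\overline{Y}^L_{-i}$, and that for $y<\overline{Y}^L_{-i}$ complementarity gives her a continuation value of at least $\val^L_{-i}(y)>0$. Your strong-Markov comparison simply makes the paper's terse last step explicit; the deviation set should also include exiting in the Joint node at $T=T^J_i(y)$, but your bound $\val^L_{-i}(y')>0$ for all $y'<\overline{Y}^L_{-i}$ already covers that case.
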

\begin{proof}
(i) Lemma \ref{lemma:br}(i) implies $\inf{\cal Y}^J_{-i} \geq \overline{Y}_{-i}^{L}$. Combining this with $\overline{Y}_{-i}^{L}>\overline{Y}_{i}^{J}$, (ii) Lemma \ref{lemma:br}(iii) implies $\inf{\cal Y}^J_{i} = \overline{Y}_{i}^{J}$. 

(iii) We already established that $\mathcal{Y}_{i}^{L} = [\overline{Y}_{i}^{L}, \infty)$, $\mathcal{Y}_{-i}^{L} = [\overline{Y}_{-i}^{L}, \infty)$ are optimal in Lone subgames. 
In Joint nodes, player $i$ already earns the maximum value $\val_i^J$ by employing $\mathcal{Y}_{i}^{J} = [\overline{Y}_{i}^{J}, \infty)$, so cannot benefit from deviating. 

Player $-i$'s Joint exit behavior does not affect her actual exit behavior, and is trivially a best response, in states $y\geq \overline{Y}_{-i}^{L}>\overline{Y}_{i}^{J}$. 
With Assumption \ref{ass:C}, this also implies that she earns at least $\val_{-i}^L(y)\geq 0$, so cannot improve on $\mathcal{Y}_{-i}^{J}=[\overline{Y}_{-i}, \infty)$ by exiting in states $y\leq\overline{Y}_{-i}^{L}\leq \overline{Y}_{-i}$. Thus, $[\overline{Y}_{-i}, \infty)$ is a best response.
\end{proof}
\end{case}

The strategy profiles in Theorem \ref{th:eqSeq}(iii) are {\em all} threshold strategy profiles consistent with Theorem \ref{th:eqSeq}(i) and (ii). Thus, Theorem \ref{th:eqSeq}(iii) establishes that equilibria exist that meet Theorem \ref{th:eqSeq}(i) and (ii) for any choice of $\overline{Y}_{-i}\equiv\inf\mathcal{Y}_{-i}^J \geq \overline{Y}_{-i}^{L}$. It also provides a full characterization of the set of threshold equilibria in Case \ref{case:sequentialgen}.

The equilibrium multiplicity suggested by Theorem \ref{th:eqSeq}(i) and apparent in Theorem \ref{th:eqSeq}(iii) arises from the game's two-node structure: As noted in the proof, in states $y\geq \overline{Y}_{-i}^{L}$, player $-i$'s choices in a Joint node are inconsequential, because she will exit in the subsequent Lone node if she does not exit in the Joint node. Indeed, even though multiple (threshold and other) equilibria exist in this case, they all generate the same duration outcomes if $\min\overline{Y}^J\geq 0$ (which naturally arises if agents only enter profitable games): Player $i$ exits when $\{Y\}$ hits $\overline{Y}_{i}^{J}=\min \overline{Y}^J$ and player $-i$ leaves when $\{Y\}$ subsequently hits $\overline{Y}_{-i}^{L}=\max \overline{Y}^L>\min \overline{Y}^J$. Note that we are able to derive these empirical implications without further analyzing equilibria involving {\em disconnected} stopping sets, which would be complicated.\footnote{Unlike \emph{e.g.} \cite{murto_exit_2004}, in his analysis of a war of attrition driven by Brownian motion, we would have to account for the possibility that the state jumps down across (and not attain) the upper boundary of a stopping region in our L\'{e}vy case with disconnected stopping sets.}

\begin{case} {\bf Complementarity dominates heterogeneity}\label{case:synchronizationgen}\\
If $\min \overline{Y}^J\geq\max \overline{Y}^L$, we can pick $i\in\{A,B\}$ such that $\overline{Y}_{i}^{J} \leq \overline{Y}_{-i}^{J}$. To avoid unproductive distraction by the border case, we state our result for $\overline{Y}_{i}^{J}<\overline{Y}_{-i}^{J}$, so that either $\overline{Y}_{i}^{L} \leq \overline{Y}_{-i}^{L} \leq \overline{Y}_{i}^{J} < \overline{Y}_{-i}^{J}$ or $\overline{Y}_{-i}^{L} \leq \overline{Y}_{i}^{L} \leq \overline{Y}_{i}^{J} < \overline{Y}_{-i}^{J}$.\footnote{Theorem \ref{th:eqSim} carries over to the border case in which $\overline{Y}_{i}^{J}=\overline{Y}_{-i}^{J}$, but with an additional class of equilibria, a variant of Theorem \ref{th:eqSim}(ii) and (iv) in which $\inf\mathcal{Y}^J_{i} >\overline{Y}_{i}^{J}$ and $\inf\mathcal{Y}^J_{-i} = \overline{Y}_{i}^{J}$.}

\begin{theorem}\label{th:eqSim}
Suppose that  Assumptions \ref{ass:M} and \ref{ass:C} hold and that either $\overline{Y}_{i}^{L} \leq \overline{Y}_{-i}^{L} \leq \overline{Y}_{i}^{J} < \overline{Y}_{-i}^{J}$ or $\overline{Y}_{-i}^{L} \leq \overline{Y}_{i}^{L} \leq \overline{Y}_{i}^{J} <\overline{Y}_{-i}^{J}$. Then, either 
(i) $\inf\mathcal{Y}^J_{i} = \inf\mathcal{Y}^J_{-i} \in [\max\overline{Y}^{L},\overline{Y}_{i}^{J}]$ or 
(ii) $\inf\mathcal{Y}^J_{i} = \overline{Y}_{i}^{J}$ and $\inf\mathcal{Y}^J_{-i} >\overline{Y}_{i}^{J}$
in any equilibrium.  
Moreover, all strategy profiles such that $\mathcal{Y}_{i}^{L} = [\overline{Y}_{i}^{L}, \infty)$, $\mathcal{Y}_{-i}^{L} = [\overline{Y}_{-i}^{L}, \infty)$, and either 
(iii) $\mathcal{Y}_{i}^{J} = \mathcal{Y}_{-i}^{J} = [\overline{Y}, \infty)$ for some $\overline{Y} \in [\max\overline{Y}^{L}, \overline{Y}_{i}^{J}]$ or 
(iv) $\mathcal{Y}_{i}^{J} = [\overline{Y}_{i}^{J}, \infty)$ and $\mathcal{Y}_{-i}^{J} = [\overline{Y}_{-i}, \infty)$ for some $\overline{Y}_{-i} >\overline{Y}_{i}^{J}$
form equilibria. 
\end{theorem}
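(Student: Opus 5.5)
The plan follows the proof of Theorem \ref{th:eqSeq}: first derive necessary conditions on $\inf\mathcal{Y}^J_i$ and $\inf\mathcal{Y}^J_{-i}$ from Lemma \ref{lemma:br}, then verify that the candidate threshold profiles are equilibria. Write $a\equiv\inf\mathcal{Y}^J_i$ and $b\equiv\inf\mathcal{Y}^J_{-i}$. In both orderings of the hypothesis we have $\max\overline{Y}^L\leq\overline{Y}_i^J<\overline{Y}_{-i}^J$, and Lemma \ref{lemma:br}(i) gives $a\geq\overline{Y}_i^L$ and $b\geq\overline{Y}_{-i}^L$.

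For the necessity part (i)--(ii), I would split on $b$.

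If $b>\overline{Y}_i^J$, Lemma \ref{lemma:br}(iii) applied to player $i$ gives $a=\overline{Y}_i^J$. This is (ii).

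If $b\leq\overline{Y}_i^J$, Lemma \ref{lemma:br}(ii) for player $i$ gives $a\geq b$. Since $a\geq b$ means player $i$ never exits first, I then need $a\leq\overline{Y}_i^J$.
\begin{itemize}
\item Suppose $a>\overline{Y}_i^J$. Because $a\geq b$, player $i$'s Joint stopping set never binds before $b$. So for player $-i$ the problem is: $-i$'s own exit ends the game, while $i$ leaves afterward in the Lone node if $y\geq\overline{Y}_i^L$.
\item Now $b\leq\overline{Y}_i^J<\overline{Y}_{-i}^J$, so Lemma \ref{lemma:br}(ii) for player $-i$ gives $b\geq a$. Hence $a=b$.
\item With $a=b>\overline{Y}_i^J$ we get $b>\overline{Y}_i^J$, contradicting the case assumption.
\end{itemize}
Therefore $a\leq\overline{Y}_i^J$. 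Applying Lemma \ref{lemma:br}(ii) to $-i$, using $a\leq\overline{Y}_i^J<\overline{Y}_{-i}^J$, gives $b\geq a$, so $a=b$. Combined with the lower bounds $a\geq\overline{Y}_i^L$ and $b\geq\overline{Y}_{-i}^L$, the common value lies in $[\max\overline{Y}^L,\overline{Y}_i^J]$. This is (i).

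For sufficiency, Lone optimality of $[\overline{Y}_k^L,\infty)$ is already established.

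In profile (iv), player $i$ earns $\val_i^J$ and cannot improve, as in Theorem \ref{th:eqSeq}. For player $-i$:
\begin{itemize}
\item In states $y\geq\overline{Y}_i^J$, player $i$ exits in the Joint node anyway. Player $-i$'s Joint choice is then irrelevant because she exits in the Lone node, since $y\geq\overline{Y}_i^J\geq\overline{Y}_{-i}^L$.
\item In states $y<\overline{Y}_i^J$, stopping yields $0$. Continuing under $[\overline{Y}_{-i},\infty)$ effectively ends joint activity at $T(\overline{Y}_i^J-y)$ and then gives $\val_{-i}^L(Y)\geq0$ at that hitting point. This value is at least $\val^J_{-i}(y;\overline{Y}_i^J)\cdot$-type positive by Lemma \ref{lemma:aux}(i), because $\overline{Y}_i^J<\overline{Y}_{-i}^J$.
\end{itemize}
I would show this positivity by writing the continuation value as $\val_{-i}^J(y;\overline{Y}_i^J)$ plus a nonnegative discounted Lone term. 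Spectral negativity ensures the state at the hitting time is exactly $\overline{Y}_i^J$, where $\val_{-i}^L(\overline{Y}_i^J)=0$ since $\overline{Y}_i^J\geq\overline{Y}_{-i}^L$. Hence $\brval_{-i}^J=\val_{-i}^J(\cdot;\overline{Y}_i^J)>0$ below $\overline{Y}_i^J$, so never stopping there is optimal.

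Profile (iii) is symmetric for each player $k$ with common threshold $\overline{Y}\leq\overline{Y}_i^J<\overline{Y}_{-i}^J$. Below $\overline{Y}$, continuing yields $\val_k^J(y;\overline{Y})>0$ by Lemma \ref{lemma:aux}(i), because $\overline{Y}\leq\overline{Y}_k^J$. The Lone continuation at $\overline{Y}$ is $\val_k^L(\overline{Y})=0$, as $\overline{Y}\geq\overline{Y}_k^L$. At or above $\overline{Y}$, the opponent exits, and Joint choices are inconsequential because $y\geq\overline{Y}_k^L$ forces Lone exit.

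The main obstacle is the necessity argument in the sub-case $b\leq\overline{Y}_i^J$. There I must check that Lemma \ref{lemma:br}(ii) applies to $-i$ (it needs $a\leq\overline{Y}_{-i}^J$) without circularity. If the argument above is shaky, the fallback is to rule out $a>\overline{Y}_i^J$ directly, mimicking the last paragraph of the proof of Lemma \ref{lemma:br}(iii). With $\overline{Y}\equiv\min(a,b)$, if $b\geq a>\overline{Y}_i^J$ then player $i$'s value in states $\overline{Y}_i^J<y<a$ is $\val_i^J(y;a)<0$ by Lemma \ref{lemma:aux}(ii), since the Lone term at exit is zero. This contradicts optimality.
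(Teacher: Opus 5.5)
Your overall structure (split on $b=\inf\mathcal{Y}^J_{-i}$, invoke Lemma~\ref{lemma:br} for the necessity part, then verify profiles (iii) and (iv) directly) follows the paper. Your sufficiency arguments for (iii) and (iv) are correct and essentially the paper's.

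\textbf{The gap.} It sits in the necessity step for the sub-case $b\leq\overline{Y}_i^J$, where you must show $b\geq a$.

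Your second bullet invokes Lemma~\ref{lemma:br}(ii) for player $-i$ because $b\leq\overline{Y}_i^J<\overline{Y}_{-i}^J$. With the roles swapped, however, that lemma reads: $\inf\mathcal{Y}^J_{-i}\geq\inf\mathcal{Y}^J_{i}$ if $\inf\mathcal{Y}^J_{i}\leq\overline{Y}^J_{-i}$. Its hypothesis concerns $a$, not $b$. Under your supposition $a>\overline{Y}_i^J$, nothing guarantees $a\leq\overline{Y}_{-i}^J$, so the contradiction you derive does not follow.

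Your fallback does not repair this, for two reasons.
\begin{enumerate}
\item It analyses the configuration $b\geq a>\overline{Y}_i^J$. That configuration is incompatible with the sub-case hypothesis $b\leq\overline{Y}_i^J$, since there $a>\overline{Y}_i^J$ forces $b<a$.
\item It examines player $i$'s incentives. In the offending configuration, what actually fails is player $-i$'s incentive: she exits at $b$ although $i$ would stay on.
\end{enumerate}

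\textbf{The fix.} The missing ingredient is Lemma~\ref{lemma:br}(iii) applied to player $-i$, which covers exactly the case your argument skips.
\begin{enumerate}
\item If $a>\overline{Y}^J_{-i}$, Lemma~\ref{lemma:br}(iii) forces $b=\overline{Y}^J_{-i}>\overline{Y}^J_i$. This contradicts $b\leq\overline{Y}^J_i$.
\item Hence $a\leq\overline{Y}^J_{-i}$, and Lemma~\ref{lemma:br}(ii) for $-i$ gives $b\geq a$ directly.
\item Combined with $a\geq b$, you get $a=b\leq\overline{Y}^J_i$.
\end{enumerate}
The detour through first proving $a\leq\overline{Y}^J_i$ is therefore unnecessary. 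This is precisely how the paper argues (``by Lemma~\ref{lemma:br}(ii) and (iii)''). With this one-line fix, your proof is complete.
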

\begin{proof}
Lemma \ref{lemma:br}(i) implies that either (i) $\inf\mathcal{Y}^J_{-i}\in [\overline{Y}_{-i}^{L}, \overline{Y}_{i}^{J}]$ or (ii) $\inf\mathcal{Y}^J_{-i}>\overline{Y}_{i}^{J}$. 
(i) If $\inf\mathcal{Y}^J_{-i}\in [\overline{Y}_{-i}^{L}, \overline{Y}_{i}^{J}]$,  $\inf\mathcal{Y}^J_{i}\geq \inf\mathcal{Y}^J_{-i}$ by Lemma \ref{lemma:br}(ii). Conversely, $\inf\mathcal{Y}^J_{-i}\geq \inf\mathcal{Y}^J_{i}$ by Lemma \ref{lemma:br}(ii) and (iii) (as $\inf\mathcal{Y}^J_{-i}=\overline{Y}^J_{-i}>\overline{Y}^J_{i}$ is inconsistent with $\inf\mathcal{Y}^J_{-i}\in [\overline{Y}_{-i}^{L}, \overline{Y}_{i}^{J}]$). By Lemma \ref{lemma:br}(i), $\inf\mathcal{Y}^J_{i}\geq \overline{Y}_{i}^{L}$, so $\inf\mathcal{Y}^J_{i}=\inf\mathcal{Y}^J_{-i}\in [\max\overline{Y}^{L}, \overline{Y}_{i}^{J}]$.
(ii) If $\inf\mathcal{Y}^J_{-i}>\overline{Y}_{i}^{J}$, Lemma \ref{lemma:br}(iii) implies $\inf\mathcal{Y}^J_{i}=\overline{Y}^J_{i}$.

(iii) The  profile $(\mathcal{Y}_{i}^{J},\mathcal{Y}_{-i}^{J}) = ([\overline{Y}, \infty),[\overline{Y}, \infty))$ gives values $\val^J_i(y;\overline{Y})$ and $\val_{-i}^J(y;\overline{Y})$. Because $\overline{Y}\leq \overline{Y}_i^J\leq \overline{Y}_{-i}^J$,  Lemma \ref{lemma:aux}(i) implies that $\val^J_i(y,\overline{Y}_{i}^{J})>0$ and $\val^J_{-i}(y,\overline{Y}_{-i}^{J})>0$, so that neither player profits from exiting instead, at $y<\overline Y$. In states $y\geq\overline Y$, each player's Joint exit behavior is again inconsequential and thus trivially best. 

(iv) If $\mathcal{Y}_{-i}^{J} = [\overline{Y}_{-i}, \infty)$ for some $\overline{Y}_{-i} >\overline{Y}_{i}^{J}$, player $i$ already earns the maximum value $\val_i^J$ by employing $\mathcal{Y}_{i}^{J} = [\overline{Y}_{i}^{J}, \infty)$, so cannot benefit from deviating. 

Player $-i$'s Joint exit behavior does not affect her actual exit behavior, and is trivially a best response, in states $y\geq \overline{Y}_{i}^{J}$. In states $y<\overline{Y}_{i}^{J}\leq\overline{Y}_{-i}^{J}$, player $-i$  continues with value $\val^J_{-i}(y,\overline{Y}_{i}^{J})$, $\val^J_{-i}(y,\overline{Y}_{i}^{J})>0$ by Lemma \ref{lemma:aux}(i), and player $-i$ cannot profit from exiting instead. So, $[\overline{Y}_{-i}, \infty)$ is a best response.
\end{proof}

The strategy profiles in Theorem \ref{th:eqSim}(iii) and (iv) are {\em all} threshold strategy profiles consistent with, respectively, Theorem \ref{th:eqSim}(i) and (ii). Theorem \ref{th:eqSim}(iii) provides examples of equilibria consistent with Theorem \ref{th:eqSim}(i) for any choice of $\overline{Y}\equiv\inf\mathcal{Y}^J_{i} = \inf\mathcal{Y}^J_{-i}\in[\max\overline{Y}^{L}, \overline{Y}_{i}^{J}]$. Theorem  \ref{th:eqSim}(iv) similarly shows that equilibria satisfying Theorem \ref{th:eqSim}(ii) exist for any $\overline{Y}_{-i}\equiv\inf\mathcal{Y}^J_{-i}>\overline{Y}_{i}^{J}$. Together, they fully characterize the set of threshold equilibria in Case \ref{case:synchronizationgen}.%

The equilibrium multiplicity in Theorem \ref{th:eqSim}(ii) and (iv) is similar to that in Case \ref{case:sequentialgen}. It arises from the game's two-node structure and gives unique duration outcomes if $\min\overline{Y}^J\geq 0$: Both players exit when $\{Y\}$ hits $\overline{Y}_i^J=\min\overline{Y}^J$.
The multiplicity in Theorem \ref{th:eqSim}(i) and (iii), however, translates into multiple duration outcomes: If $\overline{Y}\geq 0$, both players exit when $\{Y\}$ hits $\overline{Y} \in [\max\overline{Y}^{L}, \overline{Y}_{i}^{J}]$. Without further restrictions, an empirical implementation of our game for survival data will not fully determine the distribution of the observed durations and thus be {\em incomplete} \citep{tamer_incomplete_2003}. We could leverage Theorems \ref{th:eqSeq} and \ref{th:eqSim} to implement any of the literature's approaches to such incompleteness. We will rely on a carefully motivated selection of a, for expositional convenience, threshold equilibrium.\footnote{Alternatively, we could remain agnostic about equilibrium selection and study (typically) partial identification and set estimation of the model. See \citet{tamer_incomplete_2003}.}

To this end, note that player's $i$ choice of $\mathcal{Y}_{i}^{J} = [\overline{Y}_{i}^{J}, \infty)$ weakly dominates $[ \overline{Y}, \infty)$ for $\overline{Y}<\overline{Y}_{i}^{J}$. So, by eliminating equilibria involving weakly dominated strategies, we select the unique equilibrium such that $\mathcal{Y}_{i}^{J} = [\overline{Y}_{i}^{J},\infty)$. This refinement aligns with imposing trembling-hand perfection.\footnote{A formal trembling-hand implementation would posit that player $-i$ might mistakenly exit at $\inf\mathcal{Y}^J_{-i} + \epsilon_{-i}$, where $\epsilon_{-i}$ is an $\R$-valued random perturbation unknown to both players. Under this perturbation, $\inf\mathcal{Y}^J_{i} = \overline{Y}_{i}^{J}$ emerges as player $i$'s unique best response according to Lemma \ref{lemma:br}.} 
\end{case}

Our econometric analysis will assume this equilibrium refinement, so that the players employ $\mathcal{Y}_{i}^{S} = [\overline{Y}_{i}^{S},\infty)$; $S = J,L$; $i = A,B$. This completely specifies the duration outcomes. 
If the game's primitives are such that $\min \overline{Y}^{J} < \max \overline{Y}^{L}$ (Case \ref{case:sequentialgen}), the player with the lower thresholds exits when $\{Y\}$ hits $\min\overline{Y}^J$ and the other player when $\{Y\}$ subsequently hits $\max\overline{Y}^L$. We will refer to this outcome as ``sequential exit''. 
If the primitives imply $\min \overline{Y}^{J}\geq \max\overline{Y}^{L}$ (Case \ref{case:synchronizationgen}), both players exit when $\{Y\}$ hits $\min \overline{Y}^{J}$. We will refer to this as  ``simultaneous exit''.

\subsection{Extension to More than Two Players} \label{ss:mpg}

We now extend the two-player synchronization game to general sets ${\cal I}$ of $\np\geq 2$ agents. The game structure preserves sequential decision nodes indexed by the current set of  survivors $\mathcal{S} \subseteq \mathcal{I}$.
This embeds the two-player case as $\mathcal{I} = \{A,B\}$, with $\mathcal{S} = \{A,B\}$ for the Joint node and $\mathcal{S} = \{A\}$ or $\{B\}$ for the Lone node. At each instant $t$, survivors sequentially pass through decision nodes. In each node, players simultaneously decide whether to continue or exit permanently based on complete information. When a set $\mathcal{E}$ of players exits in node $\mathcal{S}$, the game transitions to node $\mathcal{S} \setminus \mathcal{E}$. Time continues only after no exits occur in the most recent decision node. Before that, agent $i\in{\cal I}$ receives her flow payoff at time $t$:  $u^{|\mathcal{S}|}_i(Y_t)$ if she is among $\mathcal{S}$ survivors or $0$ once she has exited.\footnote{This specification implicitly assumes homogeneous complementarity: A survivor's payoffs only depend on the number of surviving players, not their identities. Heterogeneity would introduce additional notation without altering the core economic insights.} 

We again assume that utility is monotonic and embodies complementarity.
\begin{assumption} \label{ass:MM}
For $S = 1, 2, \dots, \np$ and $i \in \mathcal{I}$: $u_i^S$ is weakly decreasing in $y$, with $u_i^S(y) > 0$ for sufficiently low $y$ and $u_i^S(y) < 0$ for sufficiently high $y$. Moreover, $u_i^S$ is weakly decreasing in $S$: $u_i^{\np} \geq \cdots \geq u_i^2 \geq u_i^1$.
\end{assumption}
\noindent Assumption \ref{ass:MM} ensures that the solution to the auxiliary optimal stopping problem with utility $u_i^S$ is a threshold rule. As in the two-player case, the optimal thresholds $\overline{Y}_i^S$ and values $\val_i^S$ derived from such problems play a central role in the equilibrium characterization. Using that $u_i^S$ weakly decreases in $S$ (Assumption \ref{ass:MM}), Lemma \ref{lemma:aux}(iii) establishes that $\overline{Y}_i^{\np} \geq \overline{Y}_i^{\np-1} \geq \cdots \geq \overline{Y}_i^2 \geq \overline{Y}_i^1$ for each player $i$.
 
Markov-perfect equilibrium is defined in terms of stopping sets $\{ \mathcal{Y}_i^{\mathcal{S}} \}_{i \in \mathcal{S} \subseteq \mathcal{I}}$.\footnote{Here, ${i \in \mathcal{S} \subseteq \mathcal{I}}$ is shorthand for all pairs $(i,{\cal S})$ such that $i \in \mathcal{S}$ and $\mathcal{S}\subseteq \mathcal{I}$.} As the number of exit scenarios grows factorially with $\np$, considering all possible equilibria is prohibitive.\footnote{In \citet[][Section S3]{ecsgsupp26:abbringyu}, we extend some of Section \ref{ss:equilibrium}'s equilibrium analysis to general ${\cal I}$, but we do not use this here.} We therefore apply Section \ref{ss:equilibrium}'s refinement to select the unique equilibrium with $\mathcal{Y}_{i}^{\mathcal{S}} = [\overline{Y}_{i}^{|\mathcal{S}|}, \infty)$.
\begin{theorem}\label{th:eqGen}
The strategy profile $\{\mathcal{Y}_{i}^{\mathcal{S}} = [\overline{Y}_{i}^{|\mathcal{S}|}, \infty)\}_{i\in \mathcal{S} \subseteq \mathcal{I}}$ forms an equilibrium. 
\end{theorem}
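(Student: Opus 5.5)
We will prove Theorem \ref{th:eqGen} by backward induction on the size of the survivor set $\mathcal{S}$. The aim is to show that, for every node $\mathcal{S}$ and every survivor $i\in\mathcal{S}$, the stopping set $[\overline{Y}_i^{|\mathcal{S}|},\infty)$ is a best response. Here player $i$ takes as given the others' threshold strategies in node $\mathcal{S}$ and in all later nodes $\mathcal{S}'\subsetneq\mathcal{S}$. By the one-shot deviation logic underlying Markov perfection, it suffices to check deviations in a single node class $\mathcal{S}$, holding fixed continuation play in smaller nodes.

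\textbf{Base case.} For $|\mathcal{S}|=1$, the lone survivor faces the auxiliary problem with utility $u_i^1$. The threshold $[\overline{Y}_i^1,\infty)$ is optimal there, exactly as for the Lone nodes in Section \ref{ss:equilibrium}, and the equilibrium value is $\val_i^1$.

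\textbf{Induction step.} Fix $\mathcal{S}$ with $|\mathcal{S}|=S\geq 2$ and assume the profile is an equilibrium in all smaller nodes. Order the survivors by $\overline{Y}_j^S$ and let $m\equiv\min_{j\in\mathcal{S}}\overline{Y}_j^S$.

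First, describe on-path play. For $y<m$ nobody exits. Once $Y$ reaches $[m,\infty)$ (it attains $m$ exactly if started below, by spectral negativity), the players with $\overline{Y}_j^S\leq y$ exit. The survivor set then shrinks to $\mathcal{S}'$. In node $\mathcal{S}'$, player $j$ exits iff $y\geq\overline{Y}_j^{|\mathcal{S}'|}$. Because $\overline{Y}_j^{S'}\leq\overline{Y}_j^S$ for $S'\leq S$ (Lemma \ref{lemma:aux}(iii) under Assumption \ref{ass:MM}), the cascade only continues, and it terminates at some node. Time then resumes with new survivors $\mathcal{S}''$ and current state $y<\min_{j\in\mathcal{S}''}\overline{Y}_j^{|\mathcal{S}''|}$.

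Next, check optimality for player $i\in\mathcal{S}$ in node $\mathcal{S}$. Split into two cases.

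\emph{Case $y\geq\overline{Y}_i^S$.} Exiting yields $0$. Continuing yields a payoff stream in which $i$ earns at most $u_i^{S}$ while all of $\mathcal{S}$ survive, and less afterwards by Assumption \ref{ass:MM}. Hence her value from any continuation rule is bounded above by the auxiliary value with utility $u_i^S$ started at $y\geq\overline{Y}_i^S$. That value is $0$, since stopping immediately is optimal there. So exiting is a best response.

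\emph{Case $y<\overline{Y}_i^S$.} We must show that continuing gives a positive value. Compare with the strategy ``continue until $Y$ hits $[\overline{Y}_i^{k},\infty)$, where $k$ is the current number of survivors.'' Under the candidate profile, continuation play is governed by the induction hypothesis in smaller nodes, and each player's continuation value there equals her equilibrium value. That value is at least $0$, since exiting is always available.

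Until the first exit by others, $i$ receives $u_i^S(y+Y_t)$ with abandonment at the hitting time of $\min(m',\overline{Y}_i^S)$, where $m'$ is the others' minimum threshold. By Lemma \ref{lemma:aux}(i), this truncated stream has value $\val_i^S(y;\min(m',\overline{Y}_i^S))>0$, because the abandonment level is at most $\overline{Y}_i^S$. The continuation value after others exit is nonnegative. Thus the total value of continuing is strictly positive, and exiting (payoff $0$) is not better.

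\textbf{Main obstacle.} The hardest step is the upper bound used for $y\geq\overline{Y}_i^S$. The stopping problem from player $i$'s perspective has regime-switching payoffs, because others' exits lower $u$ from $u^S$ to $u^{S'}$. We need that any stopping time's value is dominated by the single-regime auxiliary problem with $u_i^S$.

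We would argue this pathwise. For each stopping time $T$, the integrand is pointwise at most $u_i^S(y+Y_t)$ by Assumption \ref{ass:MM}, so the value is at most $\mathbb{E}\int_0^T e^{-\rho_i t}u_i^S\,dt\leq\val_i^S(y)$. The simultaneous-exit subtlety, where $i$ might want to stay in node $\mathcal{S}$ but then exit in the next node, is handled by the same bound, since a delayed zero payoff is still zero.

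We would also verify that nodes reached off-path with $y$ above several thresholds are handled. This needs no new argument, because the two cases above are stated for arbitrary $y$.
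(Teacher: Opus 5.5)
Your proof is correct and uses the same ingredients as the paper's: Lemma~\ref{lemma:aux}(i) with abandonment at the lower of $\overline{Y}_i^{|\mathcal{S}|}$ and $\min_{j\in\mathcal{S}\setminus\{i\}}\overline{Y}_j^{|\mathcal{S}|}$, the complementarity bound by the auxiliary value $\val_i^{|\mathcal{S}|}$, and nonnegative continuation values after others exit. The paper splits by whether $i$ holds the minimum threshold in $\mathcal{S}$, whereas you split by the state relative to $i$'s own threshold and make the backward induction explicit. One small fix: when $\min_{j\in\mathcal{S}\setminus\{i\}}\overline{Y}_j^{|\mathcal{S}|}\leq y<\overline{Y}_i^{|\mathcal{S}|}$ the others exit at once, so Lemma~\ref{lemma:aux}(i) does not apply and continuing is worth at least $0$ rather than strictly more, which still suffices for a best response.
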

\begin{proof}
	If $\overline{Y}_{i}^{|\mathcal{S}|}=\min_{j\in \mathcal{S}} \overline{Y}_{j}^{|\mathcal{S}|}$, player $i$ already earns the maximum value $\val_i^{|\mathcal{S}|}$ by employing $\mathcal{Y}_{i}^{\mathcal{S}} = [\overline{Y}_{i}^{|\mathcal{S}|}, \infty)$, so cannot benefit from deviating. If $\overline{Y}_{i}^{|\mathcal{S}|}>\min\overline{Y}_{-i}^{|\mathcal{S}|}\equiv \min_{j\in \mathcal{S}\setminus i} \overline{Y}_{j}^{|\mathcal{S}|}$, player $i$'s actual exit behavior does not change with the strategy in states $y\geq\min\overline{Y}_{-i}^{|\mathcal{S}|}$, so is
	trivially a best response. In states $y<\min\overline{Y}_{-i}^{|\mathcal{S}|}<\overline{Y}_{i}^{|\mathcal{S}|}$, player $i$ continues with at least value $\val_i^{|\mathcal{S}|}(y,\min\overline{Y}_{-i}^{|\mathcal{S}|})$, which is positive by Lemma \ref{lemma:aux}(i). She cannot profit from exiting instead. So, $\{[\overline{Y}_{i}^{|\mathcal{S}|}, \infty)\}_{i\in \mathcal{S} \subseteq \mathcal{I}}$ are mutual best responses.
\end{proof}

This refinement yields equilibrium exit dynamics characterized by $W\leq \np$ distinct waves indexed by $w$, where each wave features synchronized exits triggered when $\{Y\}$ hits an increasing sequence of thresholds.

The game starts with survivors ${\cal S}^1 = {\cal I}$ before the first exit wave ($w=1$). Wave $w$ begins when $\{Y\}$ hits the wave-trigger threshold $\barYw^w\equiv\min_{i\in{\cal S}^w}\overline{Y}^{|{\cal S}^w|}_i$ (this may never happen, in which case wave $w$ is not realized; we return to this in Section \ref{ss:spec}). At this threshold, all players $j$ with individual thresholds matching $\barYw^w$ immediately exit, forming the wave's first exit cohort ${\cal E}^w_1 \equiv \{j\in{\cal S}^w \mid \overline{Y}^{|{\cal S}^w|}_j =\barYw^w\}$. 

Following these initial exits, the game transitions to subsequent decision nodes within the same wave. In node $k \geq 2$, additional players exit if their continuation thresholds satisfy $\overline{Y}^{|{\cal S}^w \setminus \cup_{l=1}^{k-1} {\cal E}^w_l|}_j \leq \barYw^w$, forming cohort ${\cal E}^w_k \equiv \{j\in{\cal S}^w\setminus\cup_{l=1}^{k-1}{\cal E}^w_l \mid \overline{Y}^{|{\cal S}^w \setminus \cup_{l=1}^{k-1} {\cal E}^w_l|}_j \leq \barYw^w\}$. This cascading process continues until no further players exit at $\barYw^w$, with ${\cal E}^w_k = \emptyset$ implying termination of intra-wave exits. The full exit set for wave $w$ is ${\cal E}^w \equiv \cup_k{\cal E}^w_k$.  Crucially, all exits in wave $w$ occur simultaneously when $\{Y\}$ first hits $\barYw^w$. 

Wave $w's$ survivors ${\cal S}^{w+1} \equiv {\cal S}^w \setminus {\cal E}^w$ proceed to wave $w+1$ if ${\cal S}^{w+1} \neq \emptyset$. If there are no survivors (${\cal S}^{w+1}= \emptyset$), the game ends with $W=w$.

\section{Identification}
\label{s:identification}

Consider learning about the game's primitives from data on its duration outcomes in case we observe some but not all of the players' characteristics. Sections \ref{ss:spec}--\ref{ss:iden2} focus on a ``single-game'' setup,  in which we only observe the outcome of one game for each draw of the unobserved characteristics. Section \ref{ss:extensions} considers alternative setups, including a ``grouped-game'' setting, in which we observe the outcomes of multiple games for each draw of the unobserved characteristics.

\subsection{Specification and Characterization} 
\label{ss:spec}

In the general case, denote the set of players with ${\cal I}=\left\{1,\ldots,\np\right\}$. 
Suppose we randomly sample from (and therefore can identify) the distribution of $(\mathbf{T},\mathbf{X})$, where $\mathbf{T}\equiv(T_1,\ldots.T_\np)$ collects the exit times and $\vX\equiv(\X_1,\ldots,X_\np)$ the observed characteristics of the players.\footnote{For expositional simplicity, we assume a fixed number $\np$ of players across games. With the structure on payoffs that we will assume, variation in $\np$ across games could aid identification.} Player $i$'s preferences are fully determined by $\X_i$ and an unobserved scalar random variable $\err_i$; in the sense that $u_i^1(\cdot)=\bar u^1(\cdot;\X_i,\err_i),\ldots,u_i^\np(\cdot)=\bar u^\np(\cdot;\X_i,\err_i),$ and $\rho_i=\bar\rho(\X_i,\err_i)$ for some measurable functions $\bar u^1,\ldots,\bar u^\np$, and $\bar\rho$; $i\in{\cal I}$. The observed characteristics $\vX$ are randomly drawn from some distribution on $\sX^\np$, where $\sX\subseteq\R^K$ is the support of $\X_i$; $\bigerr\equiv(\err_1,\ldots,\err_\np)$ is independent of $\vX$ with distribution $G$ on $\Rp^\np$; and the latent process $\{Y\}$ is independent of $(\vX,\bigerr)$.  The observed exit times $\mathbf{T}$ are Theorem \ref{th:eqGen}'s equilibrium outcomes of the game with characteristics $(\vX,\bigerr)$. The implied distribution of $\mathbf{T}\mid\vX,\bigerr$ depends on 
\begin{enumerate}
\item\label{enum:Y}  $(\mu,\sigma,\Pi)$ through $\{Y\}$ and 
\item\label{enum:Ybar} $(\mu,\sigma,\Pi)$, $\bar u^1,\ldots,\bar u^\np$, $\bar\rho$, and $(\vX,\bigerr)$ through $\overline Y^1_i,\ldots,\overline Y^\np_i$; $i\in{\cal I}$. 
\end{enumerate}

To make \ref{enum:Y} precise, note that the distributions of $\{Y\}$ are fully characterized by its {\em Laplace exponent} $\psi(\s)\equiv \ln\mathbb{E}\left[\exp\left(\s Y_1\right)\right]$ for $\s\geq 0$. The L\'{e}vy-Khintchine formula provides an explicit parameterization of $\psi$ in terms of $(\mu,\sigma,\Pi)$. Because $\psi$ is continuous and convex, with $\psi (0)=0$ and $\lim_{\s\rightarrow \infty }\psi(\s)=\infty$, $\psi(\s)=0$ has one or two solutions. Denote the largest of these with $\s_0\geq 0$. Then, the inverse $\Lambda:\Rp\rightarrow [\s_{0},\infty)$ of $\psi:[\s_0,\infty)\rightarrow\Rp$ is the {\em Laplace exponent} of the hitting-time process $\{T\}\equiv\{T(y);y\geq 0\}$. Through the Laplace transform
\begin{equation}
\label{eq:lapTy}
\s \in [0, \infty)\mapsto\Lap_{T(y)}(\s) \equiv \mathbb{E}\left[ \exp(-\s T(y)) \mathds{1}\left\{T(y)<\infty\right\} \right]=\exp\left(-\Lambda(\s)y\right)
\end{equation}
of $T(y)$, it fully characterizes the distribution of $T(y)$ \citep[][Section XIII.1]{feller_introduction_1971}. If $\psi'(0)\geq 0$, then $\s_0=0$, $\psi:\Rp\rightarrow\Rp$ is strictly increasing, $\Lambda$ is simply its inverse, and $\Pr\left(T(y)<\infty\right)=\Lap_{T(y)}(0)=1$. In particular, if $\{Y\}$ is a Brownian motion ($\Pi=0$) with nonnegative drift ($\mu\geq 0$), as in Section \ref{s:example}'s example, then $\psi(\s)=\mu \s+\sigma^2\s^2/2$ and $\Lambda(\s)=\left(-\mu+\sqrt{\mu^2+2\s\sigma^2}\right)/\sigma^2$. If $\psi'(0)<0$, then $\s_0>0$ and $\Pr\left(T(y)<\infty\right)=\exp\left(-\s_0 y\right)<1$. \citet[][Section S1]{ecsgsupp26:abbringyu} provides details.

To illustrate \ref{enum:Ybar}, consider an $\np$-player version of Section \ref{s:example}'s example with general $\{Y\}$, $R^S_i = R^S(\X_i)\exp(\err_i)$, $C^S_i = C^S(\X_i)$, $\rho_i=\rho>\psi(1)$, and $\gamma_i = 1$. The flow payoffs then simplify to $u_i^S(Y_{t}) = R^S(\X_i)\exp(\err_i) - C^S(\X_i)\exp(Y_t)$. The corresponding exit thresholds are \citep[][Section S2.3]{ecsgsupp26:abbringyu}
\begin{equation}
\overline{Y}_{i}^{S}=\phi ^{S}\left( \X_{i}\right) +\err_{i},  \label{eq:addsep}
\end{equation}%
\noindent where $\phi^{S}(\X_i) \equiv \ln\left( \frac{R^S(\X_i)}{C^S(\X_i)} \cdot \frac{\Lambda(\rho)}{\Lambda(\rho)-1} \cdot \frac{\rho-\psi(1)}{\rho} \right)$. For our main analyses, we adopt this additively separable specification of the thresholds for general measurable $\phi^S$, so without assuming it has the example's structure (in Section \ref{ss:extensions}, we discuss multiplicatively separable and nonseparable alternatives). Throughout, we will focus on identifying $\phi^1,\ldots,\phi^{\np}$. Subsequently determining $\bar u^1,\ldots,\bar u^{\np}$, and $\bar\rho$ is nontrivial, but standard.  Even in our tightly specified example, we cannot determine $R^S$ and $C^S$ from $\phi^S$ without further (exclusion) restrictions. Generally, we can address this with results for {\em single-agent} models as in \citet[][Section 4.3]{are10:abbring}, because our game's thresholds solve auxiliary optimal stopping problems. 

Given \ref{enum:Y} a specification of $\{Y\}$ and \ref{enum:Ybar} the thresholds in \eqref{eq:addsep}, we can characterize the distribution of $\mathbf{T}\mid\vX,\bigerr$. To this end, we represent $\mathbf{T}$ by the $W^*$ distinct exit times $T^1<\cdots<T^{W^*}$ in $\mathbf{T}$ and the identities $\left({\cal E}^1,\ldots,{\cal E}^{W^*-1}\right)$ of the players exiting at the first $W^*-1$ of those times. If $T^{W^*}<\infty$, the  players remaining after the first $W^*-1$ waves, those in ${\cal S}^{W^*}={\cal I}\setminus\bigcup_{w=1}^{W^*-1}{\cal E}^w$, exit at $T^{W^*}$: ${\cal E}^{W^*}={\cal S}^{W^*}$. In this case, all exit waves are realized and observed ($W^*=W$). If  $T^{W^*}=\infty$, the $W^*$-th and any later waves never materialize and $W^*\leq W$. 

Consider the game for a fixed value $\vx\in\sX^\np$ of $\vX$ and $\bigerr$ drawn from $G$. This game's auxiliary thresholds $\phi ^{S}\left( \x_{i}\right) +\err_{i}$ and thus its {\em exit pattern}--- the number of (potential) exit waves $W$ and their compositions $({\cal E}^1,\ldots,{\cal E}^W)$--- are fully determined by $\bigerr$. The exit times $T^1,\ldots,T^{W}$ are the first times $\{Y\}$ hits the increasing sequence of thresholds $\barYw^w(\vx)\equiv\min_{i\in{\cal S}^w}\left\{\phi ^{|{\cal S}^w|}\left( \x_{i}\right) +\err_{i}\right\}$; $w=1,\ldots,W$. Because $\{Y\}$ is a strong Markov process, given $\bigerr$ and on $\left\{T^{w-1}<\infty\right\}$, the inter-wave duration $\Delta T^w\equiv T^w-T^{w-1}$ is independently distributed from $(T^1,\ldots,T^{w-1})$ as the hitting time $T\left(\Delta\barYw^w(\vx)\right)$, where $\Delta\barYw^w(\vx)\equiv\barYw^w(\vx)-\barYw^{w-1}(\vx)>0$ (here, $T^0\equiv 0$ and $\barYw^0(\vx)\equiv 0$). So, using \eqref{eq:lapTy} and for $w\leq W$, the Laplace transform of the distribution of $\left(\Delta T^1,\ldots,\Delta T^{w}\right)\mid\bigerr$ for the game with observed characteristics $\vx$ on $\bigcap_{j=1}^w\left\{\Delta T^j<\infty\right\}=\left\{T^w<\infty\right\}$ is 
\begin{equation*}
\mathbf{\s}^w\mapsto\mathbb{E}\left[ \exp\left(-\sum_{j=1}^{w}\s^j\Delta T^j\right) 
\mathds{1}\left\{T^w<\infty\right\} \;\vline\;\bigerr\right]
=
\exp\left(-\sum_{j=1}^{w}\Lambda(\s^j)\Delta\barYw^j(\vx)\right),
\end{equation*}
\noindent where $\mathbf{\s}^w\equiv(\s^1,\ldots,\s^w)\in\Rp^w$. As ${\cal E}^1,\ldots,{\cal E}^w$ are nonempty and disjoint subsets of ${\cal I}$ that may not cover ${\cal I}$, they form an {\em ordered partial partition} of ${\cal I}$. Let $E^w(\mathbf{e}^w;\vx)\subseteq\Rp^\np$ denote the set of $\bigerr$ such that ${\cal E}^1=e^1,\ldots,{\cal E}^w=e^w$ for such an ordered partial partition $\mathbf{e}^w\equiv\left(e^1,\ldots,e^w\right)$ of ${\cal I}$. Then, the Laplace transform of the same distribution on $\left\{T^w<\infty\right\}\bigcap\left\{{\cal E}^1=e^1,\ldots,{\cal E}^w=e^w\right\}$ is
$\mathbf{\s}^w\mapsto\exp\left(-\sum_{j=1}^{w}\Lambda(\s^j)\Delta\barYw^j(\vx)\right)\mathds{1}_{E^w(\mathbf{e}^w;\vx)}(\bigerr)$, where $\mathds{1}_{A}(\cdot)$ is short for $\mathds{1}\left\{\cdot\in A\right\}$. Integrating over the distribution $G$ of $\bigerr$ gives the corresponding unconditional Laplace transform
\begin{equation}
\label{eq:LapX}
\Lap^w\left(\mathbf{\s}^w;\mathbf{e}^w \mid \vx\right) 
=\mathbb{E}\left[\exp\left(-\sum_{j=1}^{w}\Lambda(\s^j)\Delta\barYw^j(\vx)\right)\mathds{1}_{E^w(\mathbf{e}^w;\vx)}(\bigerr)\right], 
\end{equation}

\noindent which uniquely determines the distribution of $\left(\Delta T^1,\ldots,\Delta T^{w}\right)$, and thus $\left(T^1,\ldots,T^{w}\right)$, on $\left\{T^w<\infty\right\}\bigcap\left\{{\cal E}^1=e^1,\ldots,{\cal E}^w=e^w\right\}$ for the game with characteristics $\vx$. 
In Appendix \ref{app:characterization}, we prove the following characterization result.
\begin{lemma}\label{lemma:characterization}
The distribution of $\mathbf{T}$ for given $\vx$ is one-to-one related to $\Lap^w\left(\cdot\; ;\mathbf{e}^w \mid \vx\right)$ for all ordered partial partitions $\mathbf{e}^w$ of ${\cal I}$; $w=1,\ldots,\np$. 
\end{lemma}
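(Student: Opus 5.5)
The plan is to show both directions of the correspondence: the distribution of $\mathbf{T}$ (for fixed $\vx$) determines every $\Lap^w(\cdot\,;\mathbf{e}^w\mid\vx)$, and conversely the collection of these transforms determines the distribution of $\mathbf{T}$. The natural bridge is a measurable bijection between $\mathbf{T}\in(0,\infty]^\np$ and the ``wave representation'' $(W^*,T^1<\cdots<T^{W^*},{\cal E}^1,\ldots,{\cal E}^{W^*-1})$ introduced before \eqref{eq:LapX}. Given $\mathbf{T}$, we read off the distinct exit times and the sets of players exiting at each finite one. Conversely, given the wave representation, we set $T_j=T^w$ for $j\in{\cal E}^w$, $w<W^*$. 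We then set $T_j=T^{W^*}$ for the remaining players in ${\cal S}^{W^*}$, which equals $\infty$ when the last wave is not realized. So the law of $\mathbf{T}$ is one-to-one related to the joint law of the wave representation. That joint law is, in turn, determined by the sub-probability measures of $(T^1,\ldots,T^w)$ on the events $\{T^w<\infty\}\cap\{{\cal E}^1=e^1,\ldots,{\cal E}^w=e^w\}$, together with the probabilities of the events on which waves fail to materialize.

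For the direction from transforms to the law of $\mathbf{T}$, I will use that \eqref{eq:LapX} is the Laplace transform of the sub-distribution of $(\Delta T^1,\ldots,\Delta T^w)$ on $\{T^w<\infty\}\cap\{{\cal E}^{1:w}=\mathbf{e}^w\}$. By uniqueness of multivariate Laplace transforms of finite measures on $\Rp^w$ \citep[][Section XIII.1]{feller_introduction_1971}, it determines that measure, and hence the measure of $(T^1,\ldots,T^w)$ by the linear change of variables. If $e^1\cup\cdots\cup e^w={\cal I}$, this measure gives exactly the law of $\mathbf{T}$ on the event that all players have exited in these $w$ finite waves. If the partition is only partial, it describes the games in which the first $w$ waves occur at finite times with these compositions. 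The event that after $w$ finite waves (or $w-1$, including $w=1$ with $\Lap^0\equiv 1$) the next wave never arrives has probability equal to the mass on $\{T^{w}<\infty,{\cal E}^{1:w}=\mathbf{e}^w\}$ minus the sum over all one-wave extensions $e^{w+1}$ of the masses on $\{T^{w+1}<\infty,{\cal E}^{1:w+1}=(\mathbf{e}^w,e^{w+1})\}$. All of these masses are $\Lap$ evaluated at $\mathbf{\s}=\mathbf{0}$. On this event, the first $w$ exit times have the law obtained by the same subtraction of measures (the marginal in the first $w$ coordinates of the extended measures), and the remaining players have $T_j=\infty$. Summing over finitely many partial partitions yields the law of $\mathbf{T}$.

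For the converse, I will note that the events $\{T^w<\infty,{\cal E}^{1:w}=\mathbf{e}^w\}$ are measurable functions of $\mathbf{T}$. This holds with one subtlety: when $\bigcup_j e^j\neq{\cal I}$, the composition ${\cal E}^w$ is observed from $\mathbf{T}$ as the set exiting at the $w$-th distinct time, which is well defined because $T^w<\infty$ is itself observed. Hence the law of $\mathbf{T}$ gives $\mathbb{E}[\exp(-\sum_j \s^j\Delta T^j)\mathds{1}\{\cdot\}]$, which equals $\Lap^w(\mathbf{\s}^w;\mathbf{e}^w\mid\vx)$ by the strong Markov derivation preceding \eqref{eq:LapX}, integrated against $G$ by Fubini.

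The main obstacle is bookkeeping rather than analysis. Specifically, I must verify that the event ``wave $w$ exists with composition $e^w$ and is finite'' coincides exactly with what is read from $\mathbf{T}$, including the case where the exit pattern $W$ (determined by $\bigerr$) exceeds $W^*$, and that defective masses (the $\infty$ atoms) are correctly recovered by the inclusion–exclusion step at $\mathbf{\s}=\mathbf{0}$. I would handle this by induction on $w$, defining the law of $\mathbf{T}$ restricted to $\{W^*=w\}$ recursively.
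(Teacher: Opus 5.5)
Your proposal is correct. The direction from the transforms to the law of $\mathbf{T}$ is the paper's argument. The paper also splits the outcome space into terminal events: either a complete ordered partition with all waves finite, or a strictly partial $\mathbf{e}^{w-1}$ followed by a wave that never arrives. It reads the complete ones off \eqref{eq:lapTxcomplete} and gets the defective ones by your subtraction at $\s^{w}=0$, which is \eqref{eq:lapTxdefect}.

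You differ on the converse. The paper never argues that the non-terminal events $\{T^w<\infty\}\cap\{{\cal E}^1=e^1,\ldots,{\cal E}^w=e^w\}$ can be read from $\mathbf{T}$. It uses only the terminal transforms. From them it gets $\Lap^\np$, all of whose partitions are complete, and then runs the triangular system backwards for $w=\np,\ldots,2$: $\Lap^{w-1}(\mathbf{\s}^{w-1};\mathbf{e}^{w-1}\mid\vx)$ equals the defect transform in \eqref{eq:lapTxdefect} plus $\sum_{e^w}\Lap^{w}((\mathbf{\s}^{w-1},0);(\mathbf{e}^{w-1},e^w)\mid\vx)$. You instead read each $\Lap^w(\cdot\,;\mathbf{e}^w\mid\vx)$ off directly, as the Laplace functional of $(\Delta T^1,\ldots,\Delta T^w)$ restricted to a $\mathbf{T}$-measurable event.

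Your route is shorter, and it needs no more than the paper already uses when it rewrites $\mathbf{T}$ in wave form. The needed fact is that the model event $\{T^w<\infty\}\cap\{\bigerr\in E^w(\mathbf{e}^w;\vx)\}$ coincides a.s. with the observable event ``there are at least $w$ distinct finite exit times, and the sets exiting at the first $w$ of them are $e^1,\ldots,e^w$.'' This holds because $\Delta\barYw^w(\vx)>0$, $\{Y\}$ creeps upward so $Y_{T^{w-1}}=\barYw^{w-1}(\vx)$, and $T(y)>0$ a.s. for $y>0$. Together these make realized waves occur at a.s. distinct times.

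What the paper's recursion buys is an explicit invertible triangular map between the two families of transforms. It shows that the $\Lap^w$ for strictly partial partitions are just sums of terminal pieces. Your argument contains the same fact implicitly, since each non-terminal event is a disjoint union of terminal ones.
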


The data only determine the distribution of $\mathbf{T}\mid\vX$ almost surely. We resolve this ambiguity by assuming continuity of the thresholds in the covariates on their support and absolute continuity of $G$.
\begin{assumption}
\label{ass:cont}
Let $B(\x,\epsin)$ be an open ball of radius $\epsin>0$ around $\x\in\R^K$. For each $\x\in\sX$, $\lim_{\epsin\downarrow 0}\sup_{\x'\in B(\x,\epsin)\cap\sX}|\phi^S(\x')-\phi^S(\x)|=0$; $S=1,\ldots,\np$.
\end{assumption}
\begin{assumption} \label{ass:G}
	$G$ is absolutely continuous with density $g > 0$ on $\Rp^\np$.
\end{assumption}
\noindent Assumptions \ref{ass:cont} and \ref{ass:G} ensure that $\Lap^w\left(\cdot\; ;\cdot\mid\vx\right)$ is continuous in $\vx$, so that it is uniquely determined for all $\vx\in\sX^\np$ from the distribution of $(\mathbf{T},\vX)$.\footnote{See \citet[][Lemma S3]{ecsgsupp26:abbringyu}. This result does not use that $g>0$. This additional assumption ensures statistical nondegeneracy and will be used later.} Thus, we will say that some or all of the model primitives $\left( \mu, \sigma, \Pi, \phi^1,\ldots,\phi^{\np}, G \right)$ are {\em identified} if they are uniquely determined from $\Lap^w\left(\cdot\; ;\mathbf{e}^w \mid \vx\right)$ for all $\vx\in\sX^\np$ and ordered partial partitions $\mathbf{e}^w$ of ${\cal I}$; $w=1,\ldots,\np$. Because the parameters $\left( \mu, \sigma, \Pi \right)$ of the latent L\'{e}vy process are one-to-one related to $\Lambda$ \citep[][Section S1.2]{ecsgsupp26:abbringyu}, we can and will alternatively focus on identifying $\left( \Lambda, \phi^1,\ldots,\phi^{\np}, G \right)$.

To appreciate the challenges in identifying our game, consider what we can learn from the distribution of $T^1\mid\vX$, which we can characterize by the Laplace transform $\Lap_{\Delta T^1}\left(\cdot\mid\vx\right)=\sum_{e^1\subseteq{\cal I},e^1\neq\emptyset}\Lap^1\left(\cdot\; ;e^1\mid\vx\right)$, $\vx\in\sX^\np$. Substituting \eqref{eq:LapX} gives 
\begin{equation}
\label{eq:LapOne}
\Lap_{\Delta T^1}\left(\s\mid\vx\right)=\mathbb{E}\left[\exp\left(-\Lambda(\s)\min_{i\in{\cal I}}\left\{\phi^{\np}\left( \x_{i}\right) +\err_{i}\right\}\right)\right].
\end{equation}

\noindent In the single-agent case ${\cal I}=\{i\}$, the (only) exit time $T^1=T\left[\phi^1(\X_i)+\err_i\right]$. Then, \eqref{eq:LapOne} would express the data, represented by $\Lap_{\Delta T^1}\left(\cdot\mid\x_i\right)$, in terms of $\Lambda$, which characterizes $\{Y\}$, and agent $i$'s threshold $\phi^1(\x_i)+\err_i$. This case can be handled by a variant, for additive separable thresholds, of the identification analysis in \cite{abbring_mixed_2012}. Identification in our game with $\np\geq 2$ players is complicated by the fact that it only indirectly relates $T^1$ to the individual thresholds, through $\Delta\barYw^1=\min_{i\in{\cal I}}\left\{\phi^{\np}\left( \x_{i}\right) +\err_{i}\right\}$. Moreover, we only observe exit behavior in states with $S<\np$ remaining players, which is needed to identify $\phi^S$, selectively for those games with exit patterns that imply a wave $w$ with $|{\cal S}^w|=S$ survivors and that actually experience that wave, $T^{w-1}<\infty$. The next section addresses these challenges for two-player games. Section \ref{ss:iden2} returns to the general case.

\subsection{Identification of Two-Player Games} \label{ecsg:s:iden}

In the two-player case, we continue to denote $\mathcal{I}=\{A,B\}$ (instead of $\left\{1,2\right\}$) and refer to the Lone and Joint states $S$ as $L$ and $J$ (instead of 1 and 2). We analyze the identification of $(\Lambda, \phi^L, \phi^J, G)$ from $\Lap^w\left(\cdot;\mathbf{e}^w\mid\vx\right)$ for all ordered partial partitions $\mathbf{e}^w$ of $\{A,B\}$ and $\vx=(x_A,x_B)\in\sX^2$; $w=1,2$; in two steps. 

First, consider determining $\phi^J$ and $\Lambda$ from the distribution of $T^1\mid\vX$, irrespective of whether exit is simultaneous (${\cal E}^1=\{A,B\}$) or sequential (${\cal E}^1=\{A\}$ or ${\cal E}^1=\{B\}$). Recall that $T^1$ is the first time $\{Y\}$ hits $\overline{Y}^J=\min\left\{\phi^J(\X_A)+\err_A,\phi^J(\X_B)+\err_B\right\}$. The distributions of $T^1\mid\vx$, $\vx\in\sX^2$, are characterized by a special case of the Laplace transform in \eqref{eq:LapOne} for $I=2$: 
\begin{align*}
\Lap_{\Delta T^1}\left(\s\mid\vx\right)
&=\sum_{e^1=\{A,B\},\{A\},\{B\}}\Lap^1\left(\s;e^1\mid\vx\right)\\
&=
\mathbb{E}\left[\exp\left(-\Lambda(\s)\min\left\{\phi^J(\x_A)+\err_A,\phi^J(\x_B)+\err_B\right\}\right)\right].
\end{align*}

\noindent If both players have the same covariate values $\x_A=\x_B=\x$, this reduces to
\begin{equation} \label{eq:lap1}
	\Lap_{\Delta T^{1}}(\s \mid (\x,\x)) = \exp\left[ -\phi^{J}(\x) \Lambda(\s) \right] 
	\Lap_{\hat{\err}}\left( \Lambda(\s) \right),
\end{equation}
\noindent where $\Lap_{\hat{\err}}$ is the Laplace transform of $\hat\err\equiv\min\left\{\err_A,\err_B\right\}$. We can separate the latent process and the unobserved heterogeneity if we have some variation in $\phi^J(x)$.
\begin{assumption} \label{ass:variation}
	$\phi^{J}(x)$ takes at least two distinct values for $x\in\sX$.
\end{assumption}
\begin{lemma} \label{lemma:identTwo}
	Under Assumptions \ref{ass:cont}--\ref{ass:variation}, $c\phi ^{J}$, $c^{-1}\Lambda$, and the distribution of $c\hat\err$ are identified up to a scale factor $c\in\Rpp$.
\end{lemma}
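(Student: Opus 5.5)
The plan is to work with equation \eqref{eq:lap1} for games in which both players share the same covariate value, and to exploit the variation in $\phi^J$ guaranteed by Assumption \ref{ass:variation}. By Assumptions \ref{ass:cont} and \ref{ass:G} (and the remark following them), $\Lap_{\Delta T^1}(\cdot\mid(\x,\x))$ is known for every $\x\in\sX$. Fix $\x_0,\x_1\in\sX$ with $\phi^J(\x_0)\neq\phi^J(\x_1)$. Because $\hat\err\geq 0$ and $\Lambda(\s)\geq 0$, the factor $\Lap_{\hat\err}(\Lambda(\s))$ lies in $(0,1]$; so does the exponential factor after normalization. Taking the ratio of \eqref{eq:lap1} at $\x_1$ and $\x_0$ eliminates $\Lap_{\hat\err}$:
\begin{equation*}
-\ln\frac{\Lap_{\Delta T^1}(\s\mid(\x_1,\x_1))}{\Lap_{\Delta T^1}(\s\mid(\x_0,\x_0))}=\left[\phi^J(\x_1)-\phi^J(\x_0)\right]\Lambda(\s),\qquad \s\geq 0 .
\end{equation*}
Hence $\Lambda$ is identified up to the unknown nonzero factor $\phi^J(\x_1)-\phi^J(\x_0)$. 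Since $\Lambda>0$ on $\Rpp$, the sign of this factor is revealed by the sign of the left-hand side. Writing $c^{-1}\equiv|\phi^J(\x_1)-\phi^J(\x_0)|$ yields $c^{-1}\Lambda$ as a known function.

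Next, for an arbitrary $\x\in\sX$, the same log-ratio with $\x_0$ gives $[\phi^J(\x)-\phi^J(\x_0)]\Lambda(\s)$. Dividing by the identified $c^{-1}\Lambda(\s)$ at any $\s>0$ identifies $c[\phi^J(\x)-\phi^J(\x_0)]$.

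It remains to pin down the level $c\phi^J(\x_0)$ and, jointly, the distribution of $c\hat\err$. From \eqref{eq:lap1} at $\x_0$, the data give
\begin{equation*}
\Lap_{\Delta T^1}(\s\mid(\x_0,\x_0))=\mathbb{E}\left[\exp\left(-c^{-1}\Lambda(\s)\cdot c\left(\phi^J(\x_0)+\hat\err\right)\right)\right].
\end{equation*}
As $\s$ ranges over $\Rp$, $c^{-1}\Lambda(\s)$ ranges over $[c^{-1}\s_0,\infty)$, which contains an interval. A Laplace transform on such an interval determines the distribution of the nonnegative random variable $c(\phi^J(\x_0)+\hat\err)$ by analyticity and uniqueness of Laplace transforms. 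The location is then separated from $\hat\err$ using Assumption \ref{ass:G}. Because $g>0$ on $\Rp^2$, the infimum of the support of $\hat\err=\min\{\err_A,\err_B\}$ is exactly $0$. Therefore $c\phi^J(\x_0)$ equals the lower endpoint of the support of the identified distribution, and subtracting it yields the distribution of $c\hat\err$. Combining this with the differences above identifies $c\phi^J$ everywhere on $\sX$. The scale $c$ itself cannot be recovered, since replacing $(\phi^J,\Lambda,\hat\err)$ by $(c\phi^J,c^{-1}\Lambda,c\hat\err)$ leaves \eqref{eq:lap1} unchanged.

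The main obstacle is this last step: inverting the Laplace transform from arguments restricted to $[c^{-1}\s_0,\infty)$ rather than all of $\Rp$, when $\s_0>0$. I would handle it by noting that the Laplace transform of a nonnegative random variable is analytic on the open right half-plane, so its values on any interval determine it. Verifying that the support infimum of $\hat\err$ is zero also needs the positivity of $g$ near the origin, not just absolute continuity.
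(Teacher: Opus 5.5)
Your proof is correct and takes the same route as the paper: the log-ratio of \eqref{eq:lap1} at two covariate values with distinct $\phi^J$ identifies $\Lambda$ up to scale, and the location of $\phi^J$ is then separated from $\hat\err$ because Assumption \ref{ass:G} forces the support of $\hat\err$ to start exactly at $0$. You read off $c\phi^J(\x_0)$ as the lower support endpoint, while the paper rules out a location shift $d\neq 0$ (the same idea); your explicit treatment of the restricted argument range $[c^{-1}\s_0,\infty)$ via analyticity, and of the sign of $\phi^J(\x_1)-\phi^J(\x_0)$, fills in details the paper leaves implicit.
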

\begin{proof}
	By Assumption~\ref{ass:variation}, we can find $\x',\x''\in\sX$ such that $\phi^{J}(\x') \ne \phi^{J}(\x'')$ by checking its implication $\Lap_{\Delta T^{1}}(\s \mid (\x',\x')) \ne \Lap_{\Delta T^{1}}(\s \mid (\x'',\x''))$. Evaluating \eqref{eq:lap1} at these values and subsequently taking ratios and logarithms gives
	\begin{equation*} 
		\ln \frac{\Lap_{\Delta T^{1}}(\s \mid (\x',\x'))}{\Lap_{\Delta T^{1}}(\s \mid (\x'',\x''))} 
		= \left[ \phi^{J}(\x'') - \phi^{J}(\x') \right] \Lambda(\s), 
	\end{equation*}
	\noindent which identifies $c^{-1}\Lambda(\s)$, $\s\in\Rp$, up to a scale factor $c\in\Rpp$. Using this, \eqref{eq:lap1} identifies $z\in\Rp\mapsto c\phi^{J}(\x)\s-\ln\Lap_{\hat{\err}}\left(c \s\right)$, and thus $c\phi^{J}(\x)+d$, $x\in{\cal X}$, up to the same $c\in\Rpp$ and a location shift $d\in\R$. Now, suppose we start with $\phi^J$, $\Lambda$, $\Lap_{\hat\err}$ satisfying \eqref{eq:lap1} and Assumption \ref{ass:G}. Then, if we add $d$ to $\phi^J$, we need to substract $d$ from $\hat\err$ (multiply its Laplace transform by $\exp(-d\s)$) to satisfy \eqref{eq:lap1}. But this changes the support of $\hat\err$ to $[-d,\infty)$, which violates Assumption \ref{ass:G} unless $d=0$. So, $c\phi^J$ and $z\mapsto\Lap_{\hat\err}(c z)=\Lap_{c\hat\err}(z)$ are identified up to $c\in\Rpp$.
\end{proof}

\noindent Identification requires one scale normalization, because \eqref{eq:lap1} and Assumption \ref{ass:G} still hold if we multiply both the effective threshold $\phi^J(\x)+\hat\err$ and the latent process $\{Y\}$ by a factor $c\in\Rpp$. We can separate the locations of $\phi^J(\x)$ and $\hat\err$ because Assumption \ref{ass:G} effectively sets the latter. 

To infer $G$ and $\phi^L$, we consider the distributions of the inter-wave durations $\Delta T^1$ and $\Delta T^2$ on the sequential exit events $\{T^2<\infty\}\bigcap\{{\cal E}^1=\{i\}, {\cal E}^2=\{-i\}\}$; $i=A,B$; across covariate values $\vx$.
From \eqref{eq:LapX}, these distributions are characterized by the Laplace transforms $\Lap^2\left(\mathbf{\s}^2;\left(\{i\},\{-i\}\right)\mid \vx\right)=$
\begin{equation}
\label{eq:LapOneTwo}
\mathbb{E}\left[\exp\left(-\Lambda(\s^1)\Delta\barYw^1(\vx)-\Lambda(\s^2)\Delta\barYw^2(\vx)\right)\mathds{1}_{E_i\left(\vx\right)}(\bigerr)\right],
\end{equation}
\noindent where $\Delta\barYw^1(\vx)=\phi^J(\x_i)+\err_i$ and $\Delta\barYw^2(\vx)=\phi^L(\x_{-i})-\phi^J(\x_i)+\err_{-i}-\err_i$ for $\bigerr$ in 
\begin{equation}
\label{eq:Ei}
E_i(\vx)\equiv E^2\left(\left(\{i\},\{-i\}\right);\vx\right)=\{\bigerr: \phi^J(\x_i)+\err_i<\phi^L(\x_{-i})+\err_{-i}\};
\end{equation}

\noindent $i=A,B$. To enable identification of $G$, we assume that, for appropriate choices of $\x^J$ and $\x^L$ in $\sX$, its support is covered by the sets $E_A(\x^J,\x^L)$ and $E_B(\x^L,\x^J)$ of $\bigerr$ for which we observe the two sequential exit patterns. 
\begin{assumption} \label{ass:overlap}
	There exist $\x^J,\x^L\in\sX$ such that $\phi^{J}(\x^J) \le \phi^{L}(\x^L)$.
\end{assumption}
 
\begin{theorem} \label{th:identTwo}
	Under Assumptions~\ref{ass:cont}--\ref{ass:overlap}, $c^{-1}\Lambda$, $c\phi^{L}$, $c\phi^{J}$, and the distribution $G(\cdot/c)$ of $c\bigerr$ are identified up to a scale factor $c\in\Rpp$.
\end{theorem}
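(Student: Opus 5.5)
Lemma \ref{lemma:identTwo} already identifies $c^{-1}\Lambda$ and $c\phi^J$ up to a common scale $c\in\Rpp$. We normalize $c=1$, so that $\Lambda$ and $\phi^J$ are treated as known. What remains is to recover $\phi^L$ and $G$ from the sequential-exit Laplace transforms \eqref{eq:LapOneTwo}. Write $\lambda^1=\Lambda(\s^1)$ and $\lambda^2=\Lambda(\s^2)$. Since $\Lambda$ is known, continuous, strictly increasing and unbounded on $\Rp$, $\lambda^1$ and $\lambda^2$ range over $[\s_0,\infty)$. Substituting the wave increments, \eqref{eq:LapOneTwo} becomes
\begin{equation*}
\Lap^2\left(\mathbf{\s}^2;(\{i\},\{-i\})\mid\vx\right)=\mathbb{E}\left[\exp\left(-(\lambda^1-\lambda^2)(\phi^J(\x_i)+\err_i)-\lambda^2(\phi^L(\x_{-i})+\err_{-i})\right)\mathds{1}_{E_i(\vx)}(\bigerr)\right].
\end{equation*}
Multiplying by the known factor $\exp\bigl((\lambda^1-\lambda^2)\phi^J(\x_i)\bigr)$ leaves a bivariate Laplace-type transform of $(\err_i,\err_{-i})$ restricted to the half-plane $\err_i-\err_{-i}<\phi^L(\x_{-i})-\phi^J(\x_i)$, multiplied by $\exp(-\lambda^2\phi^L(\x_{-i}))$.

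\textbf{Step 1: identify $\phi^L$.} Fix $\x_i$ and vary $\x_{-i}$. I would avoid the transforms and work directly with events. Take $\s^2=0$ and let $\s^1\downarrow 0$. This gives $\Pr(T^2<\infty,\,{\cal E}^1=\{i\},\,{\cal E}^2=\{-i\}\mid\vx)$, and the same limit at $\s^1=0$ gives $\Pr(T^1<\infty,\,{\cal E}^1=\{i\}\mid\vx)$, which is the mass of $E_i(\vx)$ weighted by $\exp(-\s_0\cdot)$. As a function of $\x_{-i}$, this probability depends on $\x_{-i}$ only through the boundary $\phi^L(\x_{-i})-\phi^J(\x_i)$. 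By Assumption \ref{ass:G}, with $g>0$, it is strictly monotone in $\phi^L(\x_{-i})$. So comparing it across $\x_{-i}$ orders $\phi^L$, but does not give its levels. For levels, I would use the second-wave duration instead. Conditional on the event, the joint transform factors as above. Take $\s^1$ with $\lambda^1=\lambda^2$; this is possible by choosing $\s^1=\s^2$. The expression then reduces to $\exp(-\lambda^2\phi^L(\x_{-i}))\,\mathbb{E}[\exp(-\lambda^2\err_{-i})\mathds{1}_{E_i}]$. Now pair this with the opposite ordering $E_{-i}$, or exploit the support condition $\err\geq 0$ as in Lemma \ref{lemma:identTwo}: the exponential rate of decay in $\lambda^2\to\infty$ pins down $\phi^L(\x_{-i})+\operatorname{ess\,inf}\{\err_{-i}\text{ on }E_i\}$. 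At $\x$ values where the event has positive probability near $\err=0$ (Assumption \ref{ass:G}), that infimum is $0$. This identifies $\phi^L(\x_{-i})$ in levels.

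\textbf{Step 2: identify $G$.} With $\phi^L$ and $\phi^J$ known, the transforms give, for each $\vx$, the Laplace transform of the measure $G$ restricted to the half-plane $E_i(\vx)$ and evaluated at arguments $(\lambda^1-\lambda^2,\lambda^2)$. These arguments range over an open set of $\R^2$, so by analyticity and uniqueness of Laplace transforms on open sets, the restriction $G|_{E_i(\vx)}$ is identified. Assumption \ref{ass:overlap} supplies $\x^J,\x^L$ with $\phi^J(\x^J)\le\phi^L(\x^L)$. Then $E_A(\x^J,\x^L)=\{\err_A-\err_B<\phi^L(\x^L)-\phi^J(\x^J)\}$ and $E_B(\x^L,\x^J)=\{\err_B-\err_A<\phi^L(\x^L)-\phi^J(\x^J)\}$ have nonnegative thresholds, so together they cover $\Rp^2$ up to a null diagonal. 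Gluing the two restrictions recovers $G$, and hence the scaled $G(\cdot/c)$ once the normalization is undone.

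\textbf{Main obstacle.} The delicate point is Step 1, pinning down the \emph{level} of $\phi^L$ rather than only its ordering. The boundary of the half-plane and the factor $\exp(-\lambda^2\phi^L)$ both move with $\phi^L$. I would first try the large-$\lambda$ asymptotics, using $g>0$ near the origin to fix the infimum of the support. If that proves awkward, I would fall back on the overlap pair from Assumption \ref{ass:overlap} together with $\x_A=\x_B$ comparisons to remove the unknown restricted transform by taking ratios across $\x_{-i}$.
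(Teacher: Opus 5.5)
\textbf{Overall.} Your Step 2 matches the paper's final step. Using the large-$\lambda$ decay rate to fix a level is a legitimate alternative to the paper's device. The paper instead locates $c\phi^L(\x^L)-c\phi^J(\x^J)$ as the unique $\zeta\ge 0$ at which $P_A(\zeta\mid\x^J,\x^L)+P_B(\zeta\mid\x^L,\x^J)=1$, where $P_i(\zeta\mid\vx)\equiv\Pr(c\Delta\widehat{Y}^2(\vx)>\zeta,\bigerr\in E_i(\vx))$.

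\textbf{The gap.} Step 1 does not identify $\phi^L$ in levels on all of $\sX$, as you claim. On $E_i(\vx)$ you have $\widehat{Y}^2>\phi^J(\x_i)+\err_i\ge\phi^J(\x_i)$. So the decay rate you recover is $\max\{\phi^J(\x_i),\phi^L(\x_{-i})\}$, and your condition ``positive probability near $\bigerr=0$'' is exactly $\phi^L(\x_{-i})\ge\phi^J(\x_i)$. For any $\x'$ with $\phi^L(\x')<\inf_{\sX}\phi^J$, the rate equals $\phi^J(\x_i)$ for every admissible $\x_i$. It then yields only an upper bound on $\phi^L(\x')$. Assumption \ref{ass:overlap} supplies one pair with $\phi^J(\x^J)\le\phi^L(\x^L)$; it does not rule such $\x'$ out. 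In Section \ref{s:estimation}'s two-player specification, $\phi^L=\phi^J-\delta/2$ and $\inf_{\sX}\phi^J=\beta_0$, so these $\x'$ arise whenever $\x\beta<\delta/2$.

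\textbf{How to repair it.} The pieces you already have suffice, taken in a different order.
\begin{enumerate}
\item Use the decay-rate argument only at the overlap pair.
\item Run Step 2 to obtain $G$.
\item For arbitrary $\x'$, invert the identified mass $\Pr(\bigerr\in E_A(\x^J,\x'))=\Pr(\err_A-\err_B<\phi^L(\x')-\phi^J(\x^J))$. By Assumption \ref{ass:G} this is a known, strictly increasing function of $\phi^L(\x')$.
\end{enumerate}
The paper handles these $\x'$ without first recovering $G$. For $\phi^L(\x')\le\phi^L(\x^L)$, it solves $P_A(\zeta\mid\x^J,\x^L)=P_A(0\mid\x^J,\x')$ for $\zeta=c\phi^L(\x^L)-c\phi^L(\x')$, using strict monotonicity in $\zeta$.

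\textbf{Selecting the overlap pair.} You also need to say how the overlap pair is found from data, and why the decay rate there is $\phi^L(\x^L)$ rather than $\phi^J(\x^J)$. A rate strictly above the known $\phi^J(\x^J)$ settles this when Assumption \ref{ass:overlap} holds strictly. Under equality, however, the rate cannot distinguish $\phi^L(\x^L)=\phi^J(\x^J)$ from $\phi^L(\x^L)<\phi^J(\x^J)$. The identified check $\Pr(\bigerr\in E_A(\x^J,\x^L))+\Pr(\bigerr\in E_B(\x^L,\x^J))\ge 1$ resolves this, since it holds if and only if $\phi^J(\x^J)\le\phi^L(\x^L)$.
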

\begin{proof}
	Lemma \ref{lemma:identTwo} implies identification of $c^{-1}\Lambda$, and thus $\salt\in[\s_0/c, \infty)\mapsto\psi(c\salt)$. Substituting $\s^1 = \psi(c\salt^1)$ and $\s^2 = \psi(c\salt^2)$ into (\ref{eq:LapOneTwo}) identifies $(\salt^1,\salt^2)\in[\s_{0}/c,\infty)^2\mapsto$
	\[
	\Lap^2\left(\psi(c\salt^1), \psi(c\salt^2); \left(\{i\},\{-i\}\right)\mid \vx\right)
	= \mathbb{E}\left[\exp\left(-\salt^1c\Delta\barYw^1(\vx)-\salt^2c\Delta\barYw^2(\vx)\right)\mathds{1}_{E_i(\vx)}(\bigerr)\right].
	\]
	\noindent This analytically extends to $(\salt^1,\salt^2)\in\Rp^2$ so, by uniqueness of the Laplace transform, identifies the distribution of $\left(c\Delta\barYw^1(\vx),c\Delta\barYw^2(\vx)\right)$ on $\left\{\bigerr\in E_i(\vx)\right\}$.
	
	For $\zeta\in\Rp$, $\Pr\left(c\Delta\barYw^2(\vx)>\zeta,\bigerr\in E_i(\vx)\right)$ identifies 
	\begin{align*}
	P_{i}(\zeta\mid\vx) \equiv \Pr\left(\zeta/c - \phi^L(\x_{-i})+\phi^J(\x_i)<\err_{-i}-\err_i\right).
	\end{align*} 

	\noindent Assumption \ref{ass:overlap} ensures we can find $\x^J$ and $\x^L$ such that $P_A(0\mid \x^J,\x^L)+P_B(0\mid \x^L,\x^J)\geq 1$ and thus $\phi^{J}(\x^J) \le \phi^{L}(\x^L)$. By Assumption \ref{ass:G}, $P_{A}(\zeta\mid \x^J, \x^L)+P_{B}(\zeta\mid\x^L, \x^J)$ strictly decreases in $\zeta$ and equals $\Pr(\err_A<\err_B)+\Pr(\err_B<\err_A)=1$ at $\zeta = c\phi^L(x^L)-c\phi^J(x^J)$ (which is nonnegative by Assumption \ref{ass:overlap}). Lemma \ref{lemma:identTwo} implies that $c\phi^J$, and thus $c\phi^{J}(x^J)$, is identified, so this determines $c\phi^{L}(x^L)$.

	We can repeat this argument to identify $c\phi^L(x')$ for any $\x'$ such that $P_A(0\mid \x^J,\x')>P_A(0\mid \x^J,\x^L)$ and thus $\phi^{J}(\x^J) \le \phi^{L}(\x^L)<\phi^L(\x')$. For all other $\x'$, $\phi^{L}(\x^L)\geq\phi^{L}(\x')$ and $P_A(c\phi^{L}(\x^L) - c\phi^{L}(\x')\mid \x^J, \x^L)=P_A(0\mid \x^J, \x')$. Because $P_A(\zeta\mid \x^J, \x^L)$ strictly decreases in $\zeta$ and $c\phi^{L}(\x^L)$ is known, this identifies $c\phi^{L}(\x')$.

	Given $c\phi^{J}(\x^J)$ and $c\phi^{L}(\x^L)$, the distributions of $\left(c\Delta\barYw^1(\x^J,\x^L),c\Delta\barYw^2(\x^J,\x^L)\right)$ on $\left\{\bigerr\in E_A(\x^J,\x^L)\right\}$ and
	$\left(c\Delta\barYw^1(\x^L,\x^J),c\Delta\barYw^2(\x^L,\x^J)\right)$ on $\left\{\bigerr\in E_B(\x^L,\x^J)\right\}$ identify the distribution of $c\bigerr$ on $\left\{\bigerr\in E_A(\x^J,\x^L)\bigcup E_B(\x^L,\x^J)\right\}$. As we chose $x^J$ and $x^L$ so that $\Pr\left(\bigerr\in  E_A(\x^J,\x^L)\bigcup E_B(\x^L,\x^J)\right)=1$, this identifies $G(\cdot/c)$.
%
\end{proof}

We can avoid Assumption \ref{ass:overlap} if players are randomly paired. Then, $\err_A$ and $\err_B$ are independent and identically distributed, so that their joint distribution $G$ follows immediately from the distribution of $\hat{\err}$ identified by Lemma \ref{lemma:identTwo}, up to the scale factor $c$. We can then use that, for given $\zeta$ and $\vx$, $P_i(\zeta|\vx)$ is a known monotone function of $c\phi^L(\x_{-i})$ to identify $c\phi^L$. 

\subsection{Identification with More than Two Players} 
\label{ss:iden2}

In the general case with $\np \geq 2$ players ${\cal I}=\left\{1,\ldots,I\right\}$, let the state $S$ again count the number of remaining players. We can directly extend the analysis for the two-player case to prove identification of $(\Lambda, \phi^1,\ldots,\phi^\np,G)$ from $\Lap^w\left(\cdot;\mathbf{e}^w\mid\vx\right)$ for all ordered partial partitions $\mathbf{e}^w$ of ${\cal I}$ and $\vx\in\sX^\np$; $w=1,\ldots,\np$. 

First, we again use the distribution of the first exit times $T^1$ in a game with identical covariate values $\vx=\left(x,\ldots,x\right)$ to identify $\phi^\np$ and $\Lambda$. As in the two-player case (Assumption \ref{ass:variation}), this requires variation in the initial thresholds.
\begin{assumption} \label{ass:variationGen}
$\phi^\np(\x)$ attains at least two distinct values for $\x\in\sX$. 
\end{assumption}

\noindent Next, we recurse the arguments in the proof of Theorem \ref{th:identTwo} to identify $\phi^{\np-1},\ldots,\phi^1$, and $G$. This requires a natural extension of Assumption \ref{ass:overlap}.
\begin{assumption} \label{ass:overlapGen}
There exist $x^1,\ldots,x^\np\in\sX$ such that $\phi^{\np}(x^\np) \leq\cdots\leq \phi^{1}(x^1)$.
\end{assumption}

\noindent In Appendix \ref{app:identGen}, we prove
\begin{theorem}\label{th:identGen}
	Under Assumptions \ref{ass:cont}, \ref{ass:G}, \ref{ass:variationGen}, and \ref{ass:overlapGen}, $c^{-1}\Lambda$; $c\phi^1,\ldots,c\phi^\np$; and $G(\cdot/c)$ are identified up to a scale factor $c\in\Rpp$.
\end{theorem}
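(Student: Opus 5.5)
The proof mirrors Theorem \ref{th:identTwo}, run as a recursion over exit waves. First, for identical covariates $\vx=(\x,\ldots,\x)$, \eqref{eq:LapOne} factorizes as $\exp[-\phi^\np(\x)\Lambda(\s)]\Lap_{\hat\err}(\Lambda(\s))$, where now $\hat\err\equiv\min_i\err_i$. The argument of Lemma \ref{lemma:identTwo} then carries over verbatim: Assumption \ref{ass:variationGen} supplies the required variation, and Assumption \ref{ass:G} pins down the location. This identifies $c^{-1}\Lambda$, $c\phi^\np$, and the distribution of $c\hat\err$ up to $c\in\Rpp$.

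Given $c^{-1}\Lambda$, we substitute $\s^j=\psi(c\salt^j)$ in \eqref{eq:LapX} and extend analytically, exactly as in the proof of Theorem \ref{th:identTwo}. This identifies, for every ordered partial partition $\mathbf{e}^w$ and every $\vx$, the joint distribution of $(c\Delta\barYw^1(\vx),\ldots,c\Delta\barYw^w(\vx))$ on $\{\bigerr\in E^w(\mathbf{e}^w;\vx)\}$. Equivalently, it identifies the distribution of the cumulative wave thresholds $c\barYw^j(\vx)$ on each exit-pattern event.

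The recursion then focuses on fully sequential patterns, in which players exit one at a time. Fix the covariate configuration $\vx$ that assigns $\x^\np,\ldots,\x^1$ of Assumption \ref{ass:overlapGen} to players in some order $\pi$. On the event that players exit singly in the order $\pi(1),\pi(2),\ldots$, the wave thresholds are $\barYw^w=\phi^{\np-w+1}(\x_{\pi(w)})+\err_{\pi(w)}$. We proceed by induction on $S=\np-1,\ldots,1$, assuming $c\phi^{\np},\ldots,c\phi^{S+1}$ are known at the relevant covariate values. The probability that the wave-$(\np-S+1)$ increment exceeds $\zeta$, on the pattern event, is then a known function of $\zeta/c-\phi^S(\x_{\pi(\np-S+1)})$ plus differences of $\err$'s. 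Summing these probabilities over the relevant orderings, i.e. over which player exits at that step, gives an analogue of $P_A+P_B$: it is strictly decreasing in $\zeta$ by Assumption \ref{ass:G} and reaches a known benchmark at $\zeta=c\phi^S(\x^S)-c\phi^{S+1}(\x^{S+1})$. This identifies $c\phi^S(\x^S)$. The monotonicity comparison from Theorem \ref{th:identTwo} then extends this to $c\phi^S(\x')$ for arbitrary $\x'$, by varying the covariate of a single player.

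Finally, we need the union over orderings $\pi$ of the fully sequential events at this configuration. This union has probability one, because the chain $\phi^\np(\x^\np)\le\cdots\le\phi^1(\x^1)$ ensures that no cascading or simultaneous exit can occur when covariates are assigned suitably. To secure this, one may need to assign $\x^{\np-w+1}$ to the $w$-th exiter, so distinct configurations have to be pooled across orderings. Once this holds, the joint distribution of the identified thresholds on each event gives the distribution of $c\bigerr$ on that event, and therefore $G(\cdot/c)$.

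I expect this coverage step to be the main obstacle, and the delicate part is the bookkeeping of covariate assignments. The $\err$-independence of $\vX$ should allow combining different $\vx$. I would first try choosing, for each ordering $\pi$, the configuration that puts $\x^{\np-w+1}$ on player $\pi(w)$. I would then check directly that on $\{\err_{\pi(1)}<\cdots<\err_{\pi(\np)}\}$ the exits are sequential in that order, using the inequalities of Assumption \ref{ass:overlapGen} together with the threshold ordering $\overline{Y}_i^\np\ge\cdots\ge\overline{Y}_i^1$.
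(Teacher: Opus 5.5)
Your first two steps and your final step match the paper's proof: the first-wave factorization with $\hat\err=\min_i\err_i$, the substitution $\s^j=\psi(c\salt^j)$ followed by analytic extension, and the identification of $G(\cdot/c)$ by pooling all fully sequential patterns with $\x^{\np-w+1}$ placed on the $w$-th exiter. The coverage step you flag is actually the easy part, and your proposed check works. Whenever $S'\geq S$ and $S''\geq S$, we have $\phi^{S'}(\x^{S})\geq\phi^{S}(\x^{S})\geq\phi^{S''}(\x^{S''})$. So on $\{\err_{\pi(1)}<\cdots<\err_{\pi(\np)}\}$ player $\pi(w)$ triggers wave $w$ alone and nobody cascades. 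The paper asserts this without checking it.

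The genuine gap is in the recursion for $c\phi^S$ when $\np\geq 3$.

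\emph{Why your configuration fails.} In the configuration you fix, the survivors carry distinct covariates $\x^S,\ldots,\x^1$. The relevant probability then does not depend on the unknowns only through $\zeta/c-\phi^S(\x^S)$.
\begin{itemize}
\item On the fully sequential event, the probability also contains the rest of the chain $\barYw^{\np-S+1}<\cdots<\barYw^{\np}$, and hence the still unidentified $\phi^{S-1}(\x^{S-1}),\ldots,\phi^1(\x^1)$.
\item If you truncate to the partial pattern ending at wave $\np-S$, the next threshold is $\min_j\{\phi^S(\x_{\pi(j)})+\err_{\pi(j)}\}$ over the survivors. This brings in $\phi^S(\x^{S-1}),\ldots,\phi^S(\x^1)$.
\end{itemize}
For example, take $\np=3$, $S=2$ and $\zeta^*=c\phi^2(\x^2)-c\phi^3(\x^3)$. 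Each ordering then contributes $\Pr(\err_{\pi(1)}<\err_{\pi(2)}<\err_{\pi(3)}+\phi^1(\x^1)-\phi^2(\x^2))$. This is strictly increasing in the unidentified gap $\phi^1(\x^1)-\phi^2(\x^2)$, so no sum over orderings yields a known benchmark.

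\emph{The missing idea, which is how the paper proceeds.} Let $w=\np-S+1$ and use the partial pattern $(\{1\},\ldots,\{w-2\},\{i\})$, which stops at the $(\np-S)$-th exit. Give player $i$ the covariate $\tilde\x^{S+1}$ and give all other survivors the common covariate $\tilde\x^S$. The pattern event is then exactly the chain
\[
\barYw^1<\cdots<\barYw^{w-2}<\phi^{S+1}(\tilde\x^{S+1})+\err_i<\phi^S(\tilde\x^S)+\min_{j\neq i}\err_j-\zeta/c,
\]
where the minimum runs over the other survivors. This is a pure shift in the single unknown $\phi^S(\tilde\x^S)$. Now sum over which of the $S+1$ survivors is $i$. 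At $\zeta^*=c\phi^S(\tilde\x^S)-c\phi^{S+1}(\tilde\x^{S+1})$ the events $\{\err_i<\min_{j\neq i}\err_j\}$ partition the sample space. The sum therefore collapses to the identified probability of the preceding partial pattern $(\{1\},\ldots,\{w-2\})$, evaluated with all survivors at $\tilde\x^{S+1}$.

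\emph{The same problem affects your extension step.} To extend to arbitrary $\x'$, you must move all $S$ remaining survivors to $\x'$ at once. Changing a single player's covariate turns the shift into $\min\{\phi^S(\x')+\err_{j_0},\phi^S(\tilde\x^S)+\cdots\}$. The comparison from Theorem \ref{th:identTwo} then breaks down for $S\geq 2$.
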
	

As in the two-player case, we do not need Assumption \ref{ass:overlapGen} if players are randomly matched. Then, $G(\cdot/c)$ is again determined from the distribution of $c\hat\err^\np$ identified from data on the first exit wave and $c\phi^\np,\ldots,c\phi^1$ easily follow.

\subsection{Extensions and Alternative Models}
\label{ss:extensions}

We showed in Section \ref{ss:spec} that the additively separable threshold specification follows naturally from an econometric implementation of Section \ref{s:example}'s example game. We specified the thresholds to be nonnegative so that all agents who are observed to start the game are initially satisfied to be in it. To review these assumptions, and the assumed independence of $\vX$ and $\bigerr$,  we need a perspective on how agents end up playing the game. We can obtain this by embedding the game in a model of selection into it. In general, this could be a dynamic matching model, in which agents repeatedly match up, possibly through costly search, to play the game. This would not only model selection into the game, with exit behavior governed by nonnegative thresholds, but also endogenize the value of foregoing or exiting the game.\footnote{\citet[][Section 4.3]{are10:abbring} discussed this in the context of single-agent optimal stopping.} This is an interesting setting to which our results can be adapted.

Developing this is beyond the scope of the paper, but we can provide some insight by considering a simpler, static selection model in which agents in some population of interest are randomly matched and play the synchronization game once if they find this profitable. Suppose that optimal stopping thresholds are additively separable (of course, this does not change because of selection into the game), but with $\mathbb{R}$-valued $\phi^S_i(\X_i)$ and $\err_i$. If following a match of, say, two agents $A$ and $B$, the game only starts with two players if $\Delta\barYw^1(\vX)=\min\left\{\phi^J(\X_A)+\err_A,\phi^J(\X_B)+\err_B\right\}>0$, then $\vX$ and $\bigerr$ are dependent across observed two-player games, even if they are not in the population of matches. Clearly, the results from the previous sections do not directly apply to this setting, but they can be combined with results on truncated regression models to establish identification. In particular, once we have identified $\Lambda$, an analysis along the lines of Theorem \ref{th:identTwo}'s proofs gives the distribution of $\Delta\barYw^1(\vx)\mid \Delta\barYw^1(\vx)>0$ for all $\vx=(\x_A,\x_B)\in\sX^2$. Specifically, for $\vx=(\x,\x)$, this identifies the distribution of $\Delta\barYw^1(\vx)=\phi^J(\x)+\hat\err$ conditional on $\phi^J(\x)+\hat\err>0$. We can apply \emph{e.g.} \citet[][Lemma 1]{chen_non-parametric_2010} to establish identification of $\phi ^{J}$. Then, identification of $\phi^L$ and $G$ follows as in the previous sections.

This analysis requires that we somehow identify $\Lambda$. In a single-game setup like the ones we have considered so far, this would have to rely on variation of the thresholds with the observed covariates $\vX$, using their independence of $\bigerr$ in the population of matches and the specific selection rule. In a grouped-game setup, in which we have multiple (say, two) observations of games that share the same value of $\bigerr$, but are driven by independent latent processes $\{Y\}$, we can identify $\Lambda$ without reference to (variation in) observed covariates. In particular, \citet[][Theorem 2]{abbring_mixed_2012} implies that $\Lambda$ is identified, even if we allow the latent processes and $\bigerr$ to generally depend on observed covariates, from the first exit times $T^1$. An important difference with \citeauthor{abbring_mixed_2012}'s single-agent case is that our games may generate various exit patterns. As these are fully determined by $\bigerr$ (and $\vX$), we have effectively restricted exit patterns to vary only across, and not within, groups. This may be easy to refute in data, in which case one needs to extend the model, \emph{e.g.} by allowing for some structured variation in $\bigerr$ within groups.  

In the single-game setup with observed covariates, we may want to consider alternatives to additively separable thresholds. Specifically, a multiplicatively separable specification $\overline{Y}_{i}^{S}=\phi^{S}\left(\X_{i}\right) \err_{i}$, with both factors nonnegative and independent, is natural if we want to enforce nonnegative threshold. \citet[][Section 3]{abbring_mixed_2012} gives examples of single-agent decision problems that lead to this specification (where the nonnegativity of the thresholds arises from the selection into the decision problem, as discussed above). \citet[][Theorem 1]{abbring_mixed_2012} can directly be used to prove a variant of Lemma \ref{lemma:identTwo} for this alternative specification, identifying $\phi^J$, $\Lambda$, and the distribution of $\hat\err$ up to scale from data on the first exit time.

\section{Estimation}
\label{s:estimation}

Suppose we have a random sample $\left(\mathbf{T}_n,\vX_n\right)$; $n=1,\ldots,N$; from the distribution of $(\mathbf{T},\vX)$ generated by Section \ref{ss:spec}'s game with additively separable thresholds. Specifically, we take $\{Y\}$ to be a Brownian motion ($\Pi=0$) with upward drift ($\mu\geq 0$), so that its hitting times are inverse Gaussian with Laplace exponent $\Lambda(\s)=\left(-\mu+\sqrt{\mu^2+2\s\sigma^2}\right)/\sigma^2$. We specify the thresholds to be linear in the $(1\times K)$-vector of covariates $\vX$ and the share $(\np-S)/\np$ of players who have exited: $\phi^{S}(\X_i)=\beta_0+\X_i\beta - \frac{\np-S}{\np}\delta$. Here, $\delta\geq 0$ captures the degree of strategic complementarity. The unobservable threshold terms $\bigerr$ follow an exchangeable multivariate lognormal distribution, with parameters $\mu^e$ (mean of $\ln \err_i$), $\sigma^e$ (standard deviation of $\ln \err_i$), and $\rho$ (correlation of $\ln\err_i$ and $\ln\err_j$ for $i\neq j$). We set $\sigma = 1$ to impose the scale normalization required for identification. The parameter vector to be estimated is $\alpha \equiv \left(\beta_0, \beta, \delta, \mu, \rho, \mu^{e}, \sigma^{e}\right)$.

We will present maximum simulated likelihood (MSL) and method of simulated moments (MSM) estimators of $\alpha$ and demonstrate their performance in simulation studies. The estimators can be adapted to alternative parametric specifications. This may have computational implications, so we will discuss this once we have the simulation evidence in place.

\subsection{Maximum Simulated Likelihood}
\label{ss:msle}

Consider the likelihood contribution of a game with outcome $\mathbf{T}$ given that its observed characteristics $\vX$ equal $\vx$ (we suppress the game subscript $n$ throughout this section). Given $\bigerr$, the game's exit pattern, including the number of waves $W$, is fully determined. Moreover, because $\{Y\}$ is a Brownian with upward drift, it hits the effective thresholds $\barYw^1(\vx), \ldots,\barYw^W(\vx)$ at finite times with probability one, so we observe all $W$ exit waves. Thus, we can represent the observed outcome $\mathbf{T}$ by $W$, the identities $\mathbf{e}^W$ of the players exiting in each wave, and the intra-wave durations $\mathbf{\Delta T}\equiv\left(\Delta T^1,\ldots\Delta T^W\right)$. By Section \ref{ss:spec}'s analysis, given $\bigerr\in E^W(\mathbf{e}^W;\vx)$ consistent with the exit pattern $\mathbf{e}^W$, these durations are independent and inverse Gaussian distributed with parameters determined by the corresponding inter-wave threshold differences $\Delta\barYw^1(\vx),\ldots, \Delta\barYw^W(\vx)$. So, the likelihood contribution of $\left(\mathbf{\Delta T},\mathbf{e}^W\right)$ given $\vx$ at parameter value $\alpha$ is
\begin{equation}\label{eq:mle}
	L(\alpha ; \mathbf{\Delta T},\mathbf{e}^W,\vx) = \int\prod_{w=1}^{W} f_{IG}\left( \Delta T^w \mid \Delta \barYw^w \left( \vx\right) \right) \mathds{1}_{E^W(\mathbf{e}^W;\vx)}(\bigerr)g(\bigerr)d\bigerr;
\end{equation}
where  
$
f_{IG}(t \mid y) = \frac{y}{\sqrt{2\pi t^3}} \exp\left( -\frac{\left( y - \mu t\right) ^2}{2 t} \right)
$
is the inverse Gaussian density with parameters $y/\mu$ and $y^2$. Note that the likelihood implicitly depends on $\mu^e$, $\sigma^e$, and $\rho$ through $g$; and on $\beta$, $\beta_0$, and $\delta$ through $E^W(\mathbf{e}^W;\vx)$ and $\Delta\barYw^w(\vx)$.


For a two-player game, constructing $L(\alpha ; \mathbf{\Delta T},\mathbf{e}^W,\vx)$
is easy. If the two firms exit sequentially, then $\mathbf{e}^W=\left(\{i\},\{-i\}\right)$, $E^W(\mathbf{e}^W;\vx)=E_i(\vx)$ defined in \eqref{eq:Ei},  and
\[
\begin{split}
&L(\alpha ; \mathbf{\Delta T},\mathbf{e}^W,\vx)=\\	 
&\int
	f_{IG}\left( \Delta T^1 \mid \phi^J(\x_i)+\err_i\right) 
	f_{IG}\left( \Delta T^2 \mid \phi^L(\x_{-i})-\phi^J(\x_i)+\err_{-i}-\err_i\right) 
	\mathds{1}_{E_i(\vx)}(\bigerr)g(\bigerr)d\bigerr;
\end{split}
\]

\noindent $i=A,B$. If they exit simultaneously, then $\mathbf{e}^W=\left(\{A,B\}\right)$ and $E^W(\mathbf{e}^W;\vx)=\overline{E}_A(\vx)\bigcup\overline{E}_B(\vx)$, where $\overline{E}_i(\vx)\equiv\left\{\bigerr\in\Rp^2:\phi^L(\x_{-i})\leq\phi^J(\x_i)+\err_i-\err_{-i}\leq\phi^J(\x_{-i})\right\}$ is the set of $\bigerr$ on which player $i\in\{A,B\}$ triggers joint exit, and
\[
L(\alpha ; \mathbf{\Delta T},\mathbf{e}^W,\vx) = \sum_{i=A,B}\int f_{IG}\left( \Delta T^1 \mid \phi^J(x_i)+\err_i\right)  \mathds{1}_{\overline{E}_i(\vx)}(\bigerr)g(\bigerr)d\bigerr.
\]

\noindent Importantly, we only need to consider one case--- set of $\bigerr$--- if exit is sequential (either $E_A(\vx)$ or $E_B(\vx)$) and two cases if it is simultaneous ($\overline{E}_A(\vx)$ and $\overline{E}_B(\vx)$). 

However, the number of such cases increases factorially in the number of players $\np$. In particular, $k\leq\np$ players exit simultaneously in an exit wave if they exit in any of their $k!$ possible orders, with the exit of each player (or set of players) triggering, through complementarities, the exit of the next one (or ones) in the same wave (see the discussion at the end of Section \ref{ss:mpg}). So, in general, evaluating $L(\alpha ; \mathbf{\Delta T},\mathbf{e}^W,\vx)$ requires evaluating $\np$-dimensional integrals over factorially many, often small, sets of $\bigerr$. Clearly, naive Monte Carlo integration would be highly inefficient. Instead, we will develop a Geweke-Hajivassiliou-Keane (GHK) style recursive simulator.\footnote{See \citet[][Section 3.2.2]{hajivassiliou1994classical} for details on the GHK simulator.} 

\subsubsection{Likelihood simulation}
\label{sss:likelihoodSim}

We will focus here on the fully exchangeable case in which $\vx=(\x,\ldots,\x)$, so that there is observed variation across games but not across players within games. This is a common empirical setting \citep[\emph{e.g.}][]{de2009inference} and therefore of interest in its own right. It suffices to make our point that MSL estimation works well in small enough problems, but quickly runs into computational limits if the number of players or observations increases.

In this fully exchangeable case, the likelihood contribution $L(\alpha ; \mathbf{\Delta T},\mathbf{e}^W,\vx)$ only depends on $\mathbf{e}^W$ through the number of players exiting in each wave--- the $W$-vector of \emph{wave sizes} $|\mathbf{e}^W|$--- not their identities. So, it is proportional to the likelihood $L(\alpha ; \mathbf{\Delta T},|\mathbf{e}^W|,\vx)$ of observing any exit pattern with wave sizes $|\mathbf{e}^W|$. Our GHK-style simulator computes $L(\alpha ; \mathbf{\Delta T},|\mathbf{e}^W|,\vx)$.

It uses that $\bigerr$ can be represented in terms of a standard normal common factor $\eta_0$ and independent standard normal idiosyncratic factors $\eta_1,\ldots,\eta_\np$:
\[
\frac{\ln\err_i-\mu_e}{\sigma_e} = \sqrt{\rho}\eta_0 + \sqrt{1-\rho}\eta_i,\quad i\in{\cal I}.
\]

\noindent Note that, for any given $S$, the ordering of the idiosyncratic factors $\eta_1,\ldots,\eta_\np$ aligns with the ordering of the thresholds $\phi^S(\x)+\err_1,\ldots,\phi^S(\x)+\err_\np$. 

We compute $L(\alpha ; \mathbf{\Delta T},|\mathbf{e}^W|,\vx)$ as an average over $\nsims$ simulation draws. For each $q = 1, \dots, \nsims$; we independently draw a common factor $\eta_{0,q}$ from the standard normal distribution and $\nu_{1,q}, \dots, \nu_{\np,q}$ from standard uniform distributions. From these, we recursively generate ordered idiosyncratic factors $\eta_q^{(1)} < \eta_q^{(2)} < \dots < \eta_q^{(\np)}$ and conditional exit pattern probabilities $P_q^{(1)},\ldots,P_q^{(\np)}$; recording wave-specific thresholds $\Delta\barYw_q^1,\ldots,\Delta\barYw_q^W$ along the way. 

First, we initialize this recursion by setting $P_q^{(1)}\equiv 1$ (as we always observe a first exit) and $\eta_q^{(1)} \equiv \Phi^{-1}\left( 1 - (1 - \nu_{1,q})^{1/\np} \right)$, with $\Phi$ the standard normal cumulative distribution function, equal to a draw from the minimum of $\np$ independent standard normal variables. We compute the corresponding unobserved threshold component 
$\hat\err_q^{1} \equiv \exp\left( \mu_e + \sigma_e\left( \sqrt{\rho}\eta_{0,q} + \sqrt{1-\rho}\eta_q^{(1)}\right)\right)$ and threshold $\Delta\barYw_q^1 \equiv \beta_0 + \x\beta + \hat\err_q^{1}$. 

Next, let $w_k$ be the wave in which the player with threshold rank $k$ exits and $l_k$ this player's rank within the wave. For example, if $|\mathbf{e}^W|=(2,1)$, then $w_1=w_2=1$, $w_3=2$, $l_1=1$, $l_2=2$, and $l_3=1$. Now, for $k=2,\ldots,\np$; repeat the following.
\begin{itemize}
\item  Given $\eta_q^{(1)},\ldots, \eta_q^{(k-1)}$, the next idiosyncratic factor $\eta_q^{(k)}$ is distributed as the minimum of $\np - k+1$ independent standard normal variables, left-truncated at $\eta_q^{(k-1)}$. Define the corresponding cumulative distribution function  
\[
	F_{k}(\eta \mid \eta^{(k-1)}) \equiv \frac{\Phi(-\eta^{(k-1)})^{{\np}-k+1} - \Phi(-\eta)^{{\np}-k+1}}{\Phi(-\eta^{(k-1)})^{{\np}-k+1}}.
\]

and the value of $\eta_q^{(k)}$ at which the $k$-th player would be indifferent between exiting in the ongoing wave $w_{k-1}$ and staying,
\[
	\overline{\eta}_q^{(k)} \equiv \frac{\ln\left(\hat\err^{w_{k-1}} + \delta l_{k-1}/\np\right) - \mu_e}{\sigma_e \sqrt{1-\rho}} - \sqrt{\frac{\rho}{1-\rho}}\eta_{0,q}.
\]

\item If $l_k>1$, the $k$-th player exits in the ongoing wave $w_k=w_{k-1}$. This requires that $\eta_q^{(k)}\leq\overline{\eta}_q^{(k)}$, which happens with conditional probability $P_q^{(k)} \equiv F_{k}(\overline{\eta}_q^{(k)} \mid \eta_q^{(k-1)})$. We generate $\eta_q^{(k)} \equiv F_{k}^{-1}\left(\nu_{k,q} P_q^{(k)} \mid \eta_q^{(k-1)}\right)$ to be consistent with this.

\item If $l_k=1$, the $k$-th player triggers a new wave $w_k$. This requires $\eta_q^{(k)} > \overline{\eta}_q^{(k)}$, which happens with conditional probability $P_q^{(k)} \equiv 1-F_{k}(\overline{\eta}_q^{(k)} \mid \eta_q^{(k-1)})$. We generate $\eta_q^{(k)} \equiv F_{k}^{-1}\left(\nu_{k,q} P_q^{(k)} + 1-P_q^{(k)} \mid \eta_q^{(k-1)}\right)$ to be consistent with this. We set $\hat\err_q^{w_k} \equiv \exp\left( \mu_e + \sigma_e\left( \sqrt{\rho}\eta_{0,q} + \sqrt{1-\rho}\eta_q^{(k)}\right)\right)$ and  $\Delta\barYw_q^{w_k} \equiv \hat\err_q^{w_k}-\hat\err_q^{w_k-1}$.
\end{itemize}

We combine the output from this procedure to generate likelihood contribution draws $L_q(\alpha ; \mathbf{\Delta T},|\mathbf{e}^W|,\vx) \equiv \left(\prod_{k=1}^{\np} P_q^{(k)}\right) \prod_{w=1}^{W} f_{IG}\left( \Delta T^w \mid \Delta \barYw_q^w\right)$; $q=1,\ldots,Q$. The procedure ensures these are smooth in $\alpha$ and strictly positive. Their average across $q$ gives the simulated likelihood contribution $\frac{1}{\nsims} \sum_{q=1}^\nsims L_q(\alpha ; \mathbf{\Delta T},|\mathbf{e}^W|,\vx)$. We repeat this, with independent simulation draws, for all $N$ observations, and sum the log contributions to construct the full simulated loglikelihood.  

\subsubsection{Performance of the MSL estimator}

\def\MSLEspOnebetaZeroMean {2.0461}
\def\MSLEspOnebetaMean {0.9956}
\def\MSLEspOnemuMean {1.0030}
\def\MSLEspOnedeltaMean {0.9531}
\def\MSLEspOnemuEMean {-0.0980}
\def\MSLEspOnesigmaEMean {1.0521}
\def\MSLEspOnerhoMean {0.1877}
\def\MSLEspOnebetaZeroStD {0.2060}
\def\MSLEspOnebetaStD {0.0689}
\def\MSLEspOnemuStD {0.0531}
\def\MSLEspOnedeltaStD {0.2174}
\def\MSLEspOnemuEStD {0.3356}
\def\MSLEspOnesigmaEStD {0.1733}
\def\MSLEspOnerhoStD {0.1683}
\def\MSLEspOnebetaZeroAStE {0.1921}
\def\MSLEspOnebetaAStE {0.0746}
\def\MSLEspOnemuAStE {0.0477}
\def\MSLEspOnedeltaAStE {0.1635}
\def\MSLEspOnemuEAStE {0.2583}
\def\MSLEspOnesigmaEAStE {0.1215}
\def\MSLEspOnerhoAStE {0.0954}
\def\MSLEspTwobetaZeroMean {1.9957}
\def\MSLEspTwobetaMean {1.0007}
\def\MSLEspTwomuMean {1.0100}
\def\MSLEspTwodeltaMean {1.0197}
\def\MSLEspTwomuEMean {0.0114}
\def\MSLEspTwosigmaEMean {0.9956}
\def\MSLEspTworhoMean {0.1816}
\def\MSLEspTwobetaZeroStD {0.2044}
\def\MSLEspTwobetaStD {0.0683}
\def\MSLEspTwomuStD {0.0519}
\def\MSLEspTwodeltaStD {0.1776}
\def\MSLEspTwomuEStD {0.2895}
\def\MSLEspTwosigmaEStD {0.1238}
\def\MSLEspTworhoStD {0.1398}
\def\MSLEspTwobetaZeroAStE {0.2102}
\def\MSLEspTwobetaAStE {0.0765}
\def\MSLEspTwomuAStE {0.0519}
\def\MSLEspTwodeltaAStE {0.1785}
\def\MSLEspTwomuEAStE {0.2768}
\def\MSLEspTwosigmaEAStE {0.1241}
\def\MSLEspTworhoAStE {0.1352}
\def\MSLEspThreebetaZeroMean {2.0157}
\def\MSLEspThreebetaMean {1.0088}
\def\MSLEspThreemuMean {1.0043}
\def\MSLEspThreedeltaMean {0.9897}
\def\MSLEspThreemuEMean {-0.0294}
\def\MSLEspThreesigmaEMean {1.0174}
\def\MSLEspThreerhoMean {0.1905}
\def\MSLEspThreebetaZeroStD {0.1017}
\def\MSLEspThreebetaStD {0.0347}
\def\MSLEspThreemuStD {0.0260}
\def\MSLEspThreedeltaStD {0.0881}
\def\MSLEspThreemuEStD {0.1527}
\def\MSLEspThreesigmaEStD {0.0631}
\def\MSLEspThreerhoStD {0.0951}
\def\MSLEspThreebetaZeroAStE {0.0962}
\def\MSLEspThreebetaAStE {0.0377}
\def\MSLEspThreemuAStE {0.0252}
\def\MSLEspThreedeltaAStE {0.0850}
\def\MSLEspThreemuEAStE {0.1340}
\def\MSLEspThreesigmaEAStE {0.0583}
\def\MSLEspThreerhoAStE {0.0735}
\def\MSLEspFourbetaZeroMean {2.0138}
\def\MSLEspFourbetaMean {0.9953}
\def\MSLEspFourmuMean {0.9980}
\def\MSLEspFourdeltaMean {0.9887}
\def\MSLEspFourmuEMean {-0.0172}
\def\MSLEspFoursigmaEMean {1.0085}
\def\MSLEspFourrhoMean {0.1956}
\def\MSLEspFourbetaZeroStD {0.1178}
\def\MSLEspFourbetaStD {0.0613}
\def\MSLEspFourmuStD {0.0351}
\def\MSLEspFourdeltaStD {0.0693}
\def\MSLEspFourmuEStD {0.0902}
\def\MSLEspFoursigmaEStD {0.0476}
\def\MSLEspFourrhoStD {0.0465}
\def\MSLEspFourbetaZeroAStE {0.1108}
\def\MSLEspFourbetaAStE {0.0672}
\def\MSLEspFourmuAStE {0.0348}
\def\MSLEspFourdeltaAStE {0.0691}
\def\MSLEspFourmuEAStE {0.0818}
\def\MSLEspFoursigmaEAStE {0.0426}
\def\MSLEspFourrhoAStE {0.0389}
\def\MSLEspOneI {2}
\def\MSLEspTwoI {2}
\def\MSLEspThreeI {2}
\def\MSLEspFourI {5}
\def\MSLEspOneN {500}
\def\MSLEspTwoN {500}
\def\MSLEspThreeN {2{,}000}
\def\MSLEspFourN {500}
\def\MSLEspOneR {1{,}000}
\def\MSLEspTwoR {10{,}000}
\def\MSLEspThreeR {10{,}000}
\def\MSLEspFourR {10{,}000}
\def\MSLEspOnetime {2{,}185.6}
\def\MSLEspTwotime {23{,}857.1}
\def\MSLEspThreetime {91{,}434.2}
\def\MSLEspFourtime {67{,}419.9}
\def\MSLEspOneflag {0}
\def\MSLEspTwoflag {0}
\def\MSLEspThreeflag {0}
\def\MSLEspFourflag {0}
\def\MSLEsptruebetaZero {2.0}
\def\MSLEsptruebeta {1.0}
\def\MSLEsptruemu {1.0}
\def\MSLEsptruedelta {1.0}
\def\MSLEsptruemuE {0.0}
\def\MSLEsptruesigmaE {1.0}
\def\MSLEsptruerho {0.2}

\def\MSMspOnebetaZeroMean {1.9376}
\def\MSMspOnebetaMean {1.0218}
\def\MSMspOnemuMean {1.0316}
\def\MSMspOnedeltaMean {1.0689}
\def\MSMspOnemuEMean {0.1178}
\def\MSMspOnesigmaEMean {0.9662}
\def\MSMspOnerhoMean {0.2251}
\def\MSMspOnebetaZeroStD {0.2249}
\def\MSMspOnebetaStD {0.0978}
\def\MSMspOnemuStD {0.0597}
\def\MSMspOnedeltaStD {0.1923}
\def\MSMspOnemuEStD {0.2827}
\def\MSMspOnesigmaEStD {0.1236}
\def\MSMspOnerhoStD {0.1664}
\def\MSMspOnebetaZeroAStE {0.2478}
\def\MSMspOnebetaAStE {0.1051}
\def\MSMspOnemuAStE {0.0688}
\def\MSMspOnedeltaAStE {0.2045}
\def\MSMspOnemuEAStE {0.3233}
\def\MSMspOnesigmaEAStE {0.1298}
\def\MSMspOnerhoAStE {0.1827}
\def\MSMspTwobetaZeroMean {2.0015}
\def\MSMspTwobetaMean {1.0061}
\def\MSMspTwomuMean {1.0106}
\def\MSMspTwodeltaMean {1.0122}
\def\MSMspTwomuEMean {-0.0043}
\def\MSMspTwosigmaEMean {0.9921}
\def\MSMspTworhoMean {0.1578}
\def\MSMspTwobetaZeroStD {0.1744}
\def\MSMspTwobetaStD {0.0899}
\def\MSMspTwomuStD {0.0458}
\def\MSMspTwodeltaStD {0.2171}
\def\MSMspTwomuEStD {0.2801}
\def\MSMspTwosigmaEStD {0.2042}
\def\MSMspTworhoStD {0.1138}
\def\MSMspTwobetaZeroAStE {0.1773}
\def\MSMspTwobetaAStE {0.0902}
\def\MSMspTwomuAStE {0.0469}
\def\MSMspTwodeltaAStE {0.2406}
\def\MSMspTwomuEAStE {0.2817}
\def\MSMspTwosigmaEAStE {0.2146}
\def\MSMspTworhoAStE {0.1430}
\def\MSMspThreebetaZeroMean {2.0068}
\def\MSMspThreebetaMean {0.9997}
\def\MSMspThreemuMean {0.9987}
\def\MSMspThreedeltaMean {0.9955}
\def\MSMspThreemuEMean {-0.0137}
\def\MSMspThreesigmaEMean {1.0055}
\def\MSMspThreerhoMean {0.1937}
\def\MSMspThreebetaZeroStD {0.0312}
\def\MSMspThreebetaStD {0.0096}
\def\MSMspThreemuStD {0.0077}
\def\MSMspThreedeltaStD {0.0294}
\def\MSMspThreemuEStD {0.0537}
\def\MSMspThreesigmaEStD {0.0200}
\def\MSMspThreerhoStD {0.0281}
\def\MSMspThreebetaZeroAStE {0.0328}
\def\MSMspThreebetaAStE {0.0117}
\def\MSMspThreemuAStE {0.0084}
\def\MSMspThreedeltaAStE {0.0301}
\def\MSMspThreemuEAStE {0.0563}
\def\MSMspThreesigmaEAStE {0.0215}
\def\MSMspThreerhoAStE {0.0287}
\def\MSMspFourbetaZeroMean {1.9992}
\def\MSMspFourbetaMean {1.0013}
\def\MSMspFourmuMean {1.0003}
\def\MSMspFourdeltaMean {0.9998}
\def\MSMspFourmuEMean {-0.0018}
\def\MSMspFoursigmaEMean {0.9996}
\def\MSMspFourrhoMean {0.1985}
\def\MSMspFourbetaZeroStD {0.0104}
\def\MSMspFourbetaStD {0.0090}
\def\MSMspFourmuStD {0.0045}
\def\MSMspFourdeltaStD {0.0196}
\def\MSMspFourmuEStD {0.0174}
\def\MSMspFoursigmaEStD {0.0085}
\def\MSMspFourrhoStD {0.0071}
\def\MSMspFourbetaZeroAStE {0.0109}
\def\MSMspFourbetaAStE {0.0090}
\def\MSMspFourmuAStE {0.0046}
\def\MSMspFourdeltaAStE {0.0185}
\def\MSMspFourmuEAStE {0.0162}
\def\MSMspFoursigmaEAStE {0.0089}
\def\MSMspFourrhoAStE {0.0074}
\def\MSMspRhoDrop {32}
\def\MSMspTimeInflation {3}
\def\MSMspOneI {2}
\def\MSMspTwoI {5}
\def\MSMspThreeI {2}
\def\MSMspFourI {20}
\def\MSMspOneN {500}
\def\MSMspTwoN {500}
\def\MSMspThreeN {50{,}000}
\def\MSMspFourN {50{,}000}
\def\MSMspOneR {100}
\def\MSMspTwoR {100}
\def\MSMspThreeR {1}
\def\MSMspFourR {1}
\def\MSMspOnetime {402.3}
\def\MSMspTwotime {470.3}
\def\MSMspThreetime {545.7}
\def\MSMspFourtime {1{,}277.4}
\def\MSMspOneflag {0}
\def\MSMspTwoflag {0}
\def\MSMspThreeflag {0}
\def\MSMspFourflag {0}
\def\MSMsptruebetaZero {2.0}
\def\MSMsptruebeta {1.0}
\def\MSMsptruemu {1.0}
\def\MSMsptruedelta {1.0}
\def\MSMsptruemuE {0.0}
\def\MSMsptruesigmaE {1.0}
\def\MSMsptruerho {0.2}

Our MSL estimator maximizes the simulated loglikelihood using, to avoid convergence to local maxima, a multi-start routine with five random starting values. Under standard regularity conditions, if $\nsims/\sqrt{N}\rightarrow\infty$ as $N\rightarrow\infty$, it is asymptotically equivalent to the maximum likelihood estimator \citep[][Proposition 3.2]{gourieroux1996simulation} and the outer-product-of-the-gradient estimator of its asymptotic variance-covariance matrix is consistent. We use this estimator to calculate asymptotic standard errors.

\begin{table}[tbp]
	\caption{Monte Carlo Experiments for MSL}
	\label{table:MSLE}
	\begin{center}
		\vspace*{3mm}
		\begin{tabular}{@{} l c c ccc c ccc @{}}
			\toprule
			&                      & & \multicolumn{3}{c}{\MCone}   & & \multicolumn{3}{c}{\MCtwo}   \\ 
			\cmidrule(lr){4-6} \cmidrule(l){8-10}  
			&                      & & Mean & StD & AStE & & Mean & StD & AStE \\ 
			\cmidrule(lr){4-6} \cmidrule(l){8-10} 
			$\beta_0$  & \MSLEsptruebetaZero   & & \MSLEspOnebetaZeroMean & \MSLEspOnebetaZeroStD & \MSLEspOnebetaZeroAStE & & \MSLEspTwobetaZeroMean & \MSLEspTwobetaZeroStD & \MSLEspTwobetaZeroAStE \\
			$\beta$    & \MSLEsptruebeta       & & \MSLEspOnebetaMean     & \MSLEspOnebetaStD     & \MSLEspOnebetaAStE     & & \MSLEspTwobetaMean     & \MSLEspTwobetaStD     & \MSLEspTwobetaAStE     \\
			$\mu$      & \MSLEsptruemu         & & \MSLEspOnemuMean       & \MSLEspOnemuStD       & \MSLEspOnemuAStE       & & \MSLEspTwomuMean       & \MSLEspTwomuStD       & \MSLEspTwomuAStE       \\
			$\delta$   & \MSLEsptruedelta      & & \MSLEspOnedeltaMean    & \MSLEspOnedeltaStD    & \MSLEspOnedeltaAStE    & & \MSLEspTwodeltaMean    & \MSLEspTwodeltaStD    & \MSLEspTwodeltaAStE    \\ 
			$\mu^e$    & \MSLEsptruemuE        & & \MSLEspOnemuEMean      & \MSLEspOnemuEStD      & \MSLEspOnemuEAStE      & & \MSLEspTwomuEMean      & \MSLEspTwomuEStD      & \MSLEspTwomuEAStE      \\
			$\sigma^e$ & \MSLEsptruesigmaE     & & \MSLEspOnesigmaEMean   & \MSLEspOnesigmaEStD   & \MSLEspOnesigmaEAStE   & & \MSLEspTwosigmaEMean   & \MSLEspTwosigmaEStD   & \MSLEspTwosigmaEAStE   \\
			$\rho$     & \MSLEsptruerho        & & \MSLEspOnerhoMean      & \MSLEspOnerhoStD      & \MSLEspOnerhoAStE      & & \MSLEspTworhoMean      & \MSLEspTworhoStD      & \MSLEspTworhoAStE      \\ 
			\cmidrule(lr){4-6} \cmidrule(l){8-10} 
			$\np$      & \multicolumn{1}{c}{} & \multicolumn{1}{c}{} & \multicolumn{3}{c}{\MSLEspOneI} & & \multicolumn{3}{c}{\MSLEspTwoI} \\
			$N$        & \multicolumn{1}{c}{} & \multicolumn{1}{c}{} & \multicolumn{3}{c}{\MSLEspOneN} & & \multicolumn{3}{c}{\MSLEspTwoN} \\
			$\nsims$   & \multicolumn{1}{c}{} & \multicolumn{1}{c}{} & \multicolumn{3}{c}{\MSLEspOneR} & & \multicolumn{3}{c}{\MSLEspTwoR} \\  
			time (sec.)& \multicolumn{1}{c}{} & \multicolumn{1}{c}{} & \multicolumn{3}{c}{\MSLEspOnetime}& & \multicolumn{3}{c}{\MSLEspTwotime} \\ 
			
			\\[2ex]
			
			&                      & & \multicolumn{3}{c}{\MCthree} & & \multicolumn{3}{c}{\MCfour} \\ 
			\cmidrule(lr){4-6} \cmidrule(l){8-10}  
			&                      & & Mean & StD & AStE & & Mean & StD & AStE \\ 
			\cmidrule(lr){4-6} \cmidrule(l){8-10} 
			$\beta_0$  & \MSLEsptruebetaZero   & & \MSLEspThreebetaZeroMean & \MSLEspThreebetaZeroStD & \MSLEspThreebetaZeroAStE & & \MSLEspFourbetaZeroMean & \MSLEspFourbetaZeroStD & \MSLEspFourbetaZeroAStE \\
			$\beta$    & \MSLEsptruebeta       & & \MSLEspThreebetaMean     & \MSLEspThreebetaStD     & \MSLEspThreebetaAStE     & & \MSLEspFourbetaMean     & \MSLEspFourbetaStD     & \MSLEspFourbetaAStE     \\
			$\mu$      & \MSLEsptruemu         & & \MSLEspThreemuMean       & \MSLEspThreemuStD       & \MSLEspThreemuAStE       & & \MSLEspFourmuMean       & \MSLEspFourmuStD       & \MSLEspFourmuAStE       \\
			$\delta$   & \MSLEsptruedelta      & & \MSLEspThreedeltaMean    & \MSLEspThreedeltaStD    & \MSLEspThreedeltaAStE    & & \MSLEspFourdeltaMean    & \MSLEspFourdeltaStD    & \MSLEspFourdeltaAStE    \\ 
			$\mu^e$    & \MSLEsptruemuE        & & \MSLEspThreemuEMean      & \MSLEspThreemuEStD      & \MSLEspThreemuEAStE      & & \MSLEspFourmuEMean      & \MSLEspFourmuEStD      & \MSLEspFourmuEAStE      \\
			$\sigma^e$ & \MSLEsptruesigmaE     & & \MSLEspThreesigmaEMean   & \MSLEspThreesigmaEStD   & \MSLEspThreesigmaEAStE   & & \MSLEspFoursigmaEMean   & \MSLEspFoursigmaEStD   & \MSLEspFoursigmaEAStE   \\
			$\rho$     & \MSLEsptruerho        & & \MSLEspThreerhoMean      & \MSLEspThreerhoStD      & \MSLEspThreerhoAStE      & & \MSLEspFourrhoMean      & \MSLEspFourrhoStD      & \MSLEspFourrhoAStE      \\ 
			\cmidrule(lr){4-6} \cmidrule(l){8-10} 
			$\np$      & \multicolumn{1}{c}{} & \multicolumn{1}{c}{} & \multicolumn{3}{c}{\MSLEspThreeI} & & \multicolumn{3}{c}{\MSLEspFourI} \\
			$N$        & \multicolumn{1}{c}{} & \multicolumn{1}{c}{} & \multicolumn{3}{c}{\MSLEspThreeN} & & \multicolumn{3}{c}{\MSLEspFourN} \\
			$\nsims$   & \multicolumn{1}{c}{} & \multicolumn{1}{c}{} & \multicolumn{3}{c}{\MSLEspThreeR} & & \multicolumn{3}{c}{\MSLEspFourR} \\  
			time (sec.)& \multicolumn{1}{c}{} & \multicolumn{1}{c}{} & \multicolumn{3}{c}{\MSLEspThreetime}& & \multicolumn{3}{c}{\MSLEspFourtime} \\ 
			\bottomrule
		\end{tabular}
	\end{center}
	\vspace*{3mm}
	\scriptsize{	
		Note: This table reports the means, standard deviations (StD), and average asymptotic standard errors (AStE) of MSL estimates across 100 Monte Carlo samples for four simulation designs, \MCone--\MCfour. The true values of the estimated parameters are fixed across designs and listed in the second column. The covariates $\vX=(\X,\ldots,\X)$, with $\X$ scalar ($K=1$) and standard lognormal. The designs vary in the number of players $\np$, sample size $N$, and number of simulation draws $\nsims$ used by the estimator. \computeplatform 
	}
\end{table}

Table \ref{table:MSLE} presents Monte Carlo evidence on the finite-sample performance of the MSL estimator. We consider four designs, \MCone--\MCfour, that vary in the number of players $\np$, sample size $N$, and number of simulation draws $\nsims$, but all use the same true parameter values $\left( \beta_0, \beta, \mu, \delta, \mu^{e}, \sigma^{e}, \rho\right)  = \left( \MSLEsptruebetaZero, \MSLEsptruebeta, \MSLEsptruemu, \MSLEsptruedelta, \MSLEsptruemuE, \MSLEsptruesigmaE, \MSLEsptruerho\right)$, where the covariates $\vX=(\X,\ldots,\X)$ with $\X$ scalar ($K=1$) and standard lognormal. For each design, we report the means, standard deviations (StD), and average asymptotic standard errors (AStE) of the estimates across 100 Monte Carlo samples. We also report the mean computation times per sample, which cover data generation, sequential multi-start optimization, and calculation of the standard errors. These are conservative, as we could speed the procedure up by parallelizing the multi-start routine and using more powerful hardware (see the note to the table).

Panel \MCone\ reports on a small simulation design with $\np=\MSLEspOneI$ players, $N=\MSLEspOneN$ observations, and $\nsims=\MSLEspOneR$ simulation draws. This uses limited computational resources, but estimates all parameters except $\beta$ and $\mu$ with nonnegligible bias. Moreover, the asymptotic standard errors for all but $\beta$, $\mu$, and $\beta_0$ severely underestimate the standard deviations of the estimates across samples. 

A comparison with Design \MCtwo, which uses 10 times as many simulation draws ($\nsims=\MSLEspTwoR$), shows that this is primarily due to simulation error. In particular, in Panel \MCtwo, the biases mostly disappear and the gaps between the asymptotic standard errors and standard deviations close. Computation times, however, increase about linearly with $\nsims$ to well over 6 hours per sample.

Design \MCthree\ increases the sample size fourfold to \(N=\MSLEspThreeN\) while keeping \(\nsims=\MSLEspThreeR\). As to be expected, the standard errors decrease to about $1/\sqrt{4} = 1/2$ of their values in Panel \MCtwo. To a limited extent, the biases and the discrepancies between the standard deviations and asymptotic standard errors reappear. If we take \citeauthor{gourieroux1996simulation}'s condition for asymptotic equivalence that $\nsims$ increases faster than $\sqrt{N}$ as a guide, we should at least double the number of simulation draws, which would more than double the 25-hour computation time.

Finally, Design \MCfour\ returns to the smaller samples of Design \MCtwo\ ($N=\MSLEspFourN$), but for games with $\np=\MSLEspFourI$ players. With more players, there is richer variation in exit patterns and times to identify the parameters, but the simulator needs to calculate higher dimensional integrals, using longer recursions. Overall, the estimator for this larger game is much more precise, but the increased demands on the simulator show up in some minor biases and a tripling of computation times to over 18 hours.  

From all this, we conclude that MSL can estimate the parameters of our model with small bias and reasonable precision, and that our asymptotic standard errors are a reliable guide to inference, in small enough problems.\footnote{In \citet[][Section S5.1]{ecsgsupp26:abbringyu}, we explore whether jackknife debiasing would allow us to use fewer simulation draws and thus handle larger problems, but conclude it does not.} This straightforwardly extends to some alternative parametric specifications, in particular of $\phi^S$. Other changes to the model bring nontrivial additional computation. For example, we can generalize the latent process $\{Y\}$ to any parametric L\'{e}vy processes with negative shocks, but this would require that we replace the inverse Gaussian expressions for the hitting-time densities with a version of \cites{jem21:abbringsalimans} procedure for calculating them. Similarly, if we relax  exchangeability of the thresholds, we will have to resort to a more complicated simulator.

\subsection{Method of Simulated Moments}\label{ss:msm}

MSL can be used in small enough problems and provides, under conditions that ensure asymptotic equivalence with maximum likelihood, an efficient benchmark. However, although there is scope for speeding up the MSL procedure and pushing the computational boundaries a bit beyond those suggested by Table \ref{table:MSLE}, MSL computation quickly becomes infeasible if the numbers of players and observations grow. Richer specification of the latent process and thresholds would further limit the scale of the problems that MSL can handle. Our MSM estimator is a computationally efficient alternative for these cases, because it is $\sqrt{N}$-consistent for any fixed number of simulation draws $\nsims$ per observation, including $\nsims=1$. 

\subsubsection{Moments}

We will again focus on the fully exchangeable case in which $\vX_n$ takes values $\vx_n=(\x_n,\ldots,\x_n)$; $n=1,\ldots,N$ (here, it is convenient to keep track of the game subscript $n$). In this case, we can focus on \emph{ordered} exit times $T_n^{(1)} \le T_n^{(2)} \le \dots \le T_n^{(\np)}$ without losing structural information. We consider the following moments.
\begin{enumerate}
     \item \textbf{Sequential exit:} We match $m_1(\vx_n,\mathbf{T}_n)\equiv\tilde{\x}_n\mathds{1}\left\{T_n^{(2)} > T_n^{(1)}\right\}$, where $\widetilde{\x}_n\equiv\left(1~\x_n\right)$ is a $1\times(1+K)$ vector, to its population equivalent $m^*_1(\vx_n;\alpha)$. Exchangeability implies that $\mathds{1}\left\{T_n^{(2)} > T_n^{(1)}\right\}$ only depends on $\bigerr_n$ and is independent of $\widetilde{\vX}_n$. So, $m^*_1(\vx_n;\alpha)=\tilde{\x}_n\Pr(T_n^{(2)} > T_n^{(1)})$ and, using Section \ref{sss:likelihoodSim}'s factorization of $\bigerr_n$, we can express $\Pr(T_n^{(2)} > T_n^{(1)})=$ 
    \[
			\int\int \np(\np-1) \phi(\eta_0)\phi(\eta^{(2)})[\Phi(-\eta^{(2)})]^{I-2} \Phi \left(\tfrac{\ln(\varepsilon^{(2)} - \delta/\np) - \mu_e - \sigma_e\sqrt{\rho}\eta_0}{\sigma_e\sqrt{1-\rho}}\right) d\eta^{(2)}d\eta_0,
    \]
    where $\varepsilon^{(2)}\equiv\exp(\mu_e + \sigma_e(\sqrt{\rho}\eta_0 + \sqrt{1-\rho}\eta^{(2)}))$. We evaluate this integral numerically to calculate $m^*_1(\vx_n;\alpha)$. This way, we avoid the discontinuity in $\alpha$ that would arise from simulating over the discrete indicator $m_1(\vx_n,\cdot)$.
	
	We calculate all remaining population moments by simulation.

	\item \textbf{Wave count:} We match $m_2(\vx_n,\mathbf{T}_n) \equiv \tilde{\x}_n\widetilde{W}_n$, where
	   \[
        	\widetilde{W}_n\equiv 1 + \sum_{k=2}^{I} \bigl(1 - \exp\left(-50\left(T^{(k)} - T^{(k-1)}\right)\right)\bigr)
    	\]
    	is a smoothed count of the number of waves, to a simulated version of it, $m^*_2(\vx_n;\alpha)\equiv\frac{1}{\nsims}\sum_{q=1}^{\nsims}m_2\left(\vx_n,\mathbf{T}_{n,q}(\alpha)\right)$. Here, $\mathbf{T}_{n,q}(\alpha)$; $q=1,\ldots,\nsims$; are simulated exit times for a game with covariates $\vx_n$ and parameters $\alpha$. 
		
		The probability of sequential exit and wave count are particularly informative about complementarities ($\delta$) and correlation of the unobservables ($\rho$). 

	\item \textbf{Hitting times:} 
		Like \citet{de2009inference}, we 	include mean and harmonic mean (or, equivalently, mean of the reciprocal) hitting times because these are sufficient statistics for the parameters of the inverse Gaussian distribution. So, we align $m_3(\vx_n,\mathbf{T}_n)\equiv\tilde{\x}_n T_n^{(1)}$, $m_4(\vx_n,\mathbf{T}_n)\equiv\tilde{\x}_n T_n^{(2)}$, and $m_5(\vx_n,\mathbf{T}_n)\equiv\tilde{\x}_n/T_n^{(1)}$ to their simulated counterparts $m_3^*(\vx_n;\alpha)$, $m_4^*(\vx_n;\alpha)$, and $m_5^*(\vx_n;\alpha)$. 
   
    \item \textbf{Laplace transforms:} 
		We capture aspects of the distribution of the first two exit times beyond their means with
		$m_6(\vx_n,\mathbf{T}_n)\equiv\tilde{\x}_n\exp(-T_n^{(1)})$, $m_7(\vx_n,\mathbf{T}_n)\equiv\tilde{\x}_n\exp(-T_n^{(2)})$, and $m_8(\vx_n,\mathbf{T}_n)\equiv\tilde{\x}_n\exp(-0.6 T_n^{(1)} - 0.4 T_n^{(2)})$; and the game's total duration $T_n^{(\np)}$ with $m_9(\vx_n,\mathbf{T}_n)\equiv\tilde{\x}_n\exp(-0.5 T_n^{(\np)})$. We match these to their simulated analogues $m_6^*(\vx_n;\alpha)$, $m_7^*(\vx_n;\alpha)$, $m_8^*(\vx_n;\alpha)$, and $m_9^*(\vx_n;\alpha)$.  
\end{enumerate}

\noindent We collect these $9(1+K)$ moments in vectors
$m(\vx_n,\mathbf{T}_n)\equiv(m_1(\vx_n,\mathbf{T}_n), \ldots, m_9(\vx_n,\mathbf{T}_n))'$
and 
$m^*(\vx_n;\alpha)\equiv(m_1^*(\vx_n;\alpha), \ldots, m_9^*(\vx_n;\alpha))'$.
Our MSM estimator minimizes 
\[\left[\frac{1}{N} \sum_{n=1}^{N} \left( m\left(\mathbf{x}_n, \mathbf{T}_n\right) - m^*\left(\mathbf{x}_n;\alpha\right) \right) \right]'\mathbb{W}\left[\frac{1}{N} \sum_{n=1}^{N} \left( m\left(\mathbf{x}_n, \mathbf{T}_n\right) - m^*\left(\mathbf{x}_n;\alpha\right) \right) \right],
\]
where the weighting matrix $\mathbb{W}$ has the inverse sample variances of $m(\vx_n,\mathbf{T}_n)$ on its diagonal and zeros elsewhere.\footnote{This is a one-step estimator. In \citet[][Section S5.2]{ecsgsupp26:abbringyu} we show that an optimal two-step version of this estimator, if anything, performs slightly worse in finite samples.} We use the corresponding sandwich variance-covariance estimator to compute asymptotic standard errors.

\subsubsection{Performance of the MSM estimator}

Under standard regularity conditions, the MSM estimator is $\sqrt{N}$-consistent and asymptotically normal for any fixed $\nsims$ \citep[][Proposition 2.3]{gourieroux1996simulation}. It is asymptotically equivalent to the generalized method of moments estimator that uses the population moments $\mathbb{E}\left[m(\vx_n,\mathbf{T}_n)\right]$ instead of their simulated counterparts $m^*(\vx_n;\alpha)$ if $\nsims\rightarrow\infty$ as $N\rightarrow\infty$. The sandwich estimator of the asymptotic variance-covariance matrix is consistent for fixed $\nsims$.

\begin{table}[tbp]
	\caption{Monte Carlo Experiments for One-Step MSM}
	\label{table:MSM}
	\begin{center}
		\vspace*{3mm}
		\begin{tabular}{@{} l c c ccc c ccc @{}}
			\toprule
			&                      & & \multicolumn{3}{c}{\MCone}   & & \multicolumn{3}{c}{\MCtwo}   \\ 
			\cmidrule(lr){4-6} \cmidrule(l){8-10}  
			&                      & & Mean & StD & AStE & & Mean & StD & AStE \\ 
			\cmidrule(lr){4-6} \cmidrule(l){8-10} 
			
			$\beta_0$  & \MSMsptruebetaZero   & & \MSMspOnebetaZeroMean & \MSMspOnebetaZeroStD & \MSMspOnebetaZeroAStE & & \MSMspTwobetaZeroMean & \MSMspTwobetaZeroStD & \MSMspTwobetaZeroAStE \\
			$\beta$    & \MSMsptruebeta       & & \MSMspOnebetaMean     & \MSMspOnebetaStD     & \MSMspOnebetaAStE     & & \MSMspTwobetaMean     & \MSMspTwobetaStD     & \MSMspTwobetaAStE     \\
			$\mu$      & \MSMsptruemu         & & \MSMspOnemuMean       & \MSMspOnemuStD       & \MSMspOnemuAStE       & & \MSMspTwomuMean       & \MSMspTwomuStD       & \MSMspTwomuAStE       \\
			$\delta$   & \MSMsptruedelta      & & \MSMspOnedeltaMean    & \MSMspOnedeltaStD    & \MSMspOnedeltaAStE    & & \MSMspTwodeltaMean    & \MSMspTwodeltaStD    & \MSMspTwodeltaAStE    \\ 
			$\mu^e$    & \MSMsptruemuE        & & \MSMspOnemuEMean      & \MSMspOnemuEStD      & \MSMspOnemuEAStE      & & \MSMspTwomuEMean      & \MSMspTwomuEStD      & \MSMspTwomuEAStE      \\
			$\sigma^e$ & \MSMsptruesigmaE     & & \MSMspOnesigmaEMean   & \MSMspOnesigmaEStD   & \MSMspOnesigmaEAStE   & & \MSMspTwosigmaEMean   & \MSMspTwosigmaEStD   & \MSMspTwosigmaEAStE   \\
			$\rho$     & \MSMsptruerho        & & \MSMspOnerhoMean      & \MSMspOnerhoStD      & \MSMspOnerhoAStE      & & \MSMspTworhoMean      & \MSMspTworhoStD      & \MSMspTworhoAStE      \\ 
			\cmidrule(lr){4-6} \cmidrule(l){8-10} 
			
			$\np$      & \multicolumn{1}{c}{} & \multicolumn{1}{c}{} & \multicolumn{3}{c}{\MSMspOneI} & & \multicolumn{3}{c}{\MSMspTwoI} \\
			$N$        & \multicolumn{1}{c}{} & \multicolumn{1}{c}{} & \multicolumn{3}{c}{\MSMspOneN} & & \multicolumn{3}{c}{\MSMspTwoN} \\
			$\nsims$   & \multicolumn{1}{c}{} & \multicolumn{1}{c}{} & \multicolumn{3}{c}{\MSMspOneR} & & \multicolumn{3}{c}{\MSMspTwoR} \\  
			time (sec.)& \multicolumn{1}{c}{} & \multicolumn{1}{c}{} & \multicolumn{3}{c}{\MSMspOnetime}& & \multicolumn{3}{c}{\MSMspTwotime} \\ 
			
			\\[2ex]
			
			&                      & & \multicolumn{3}{c}{\MCthree} & & \multicolumn{3}{c}{\MCfour} \\ 
			\cmidrule(lr){4-6} \cmidrule(l){8-10}  
			&                      & & Mean & StD & AStE & & Mean & StD & AStE \\ 
			\cmidrule(lr){4-6} \cmidrule(l){8-10} 
			
			$\beta_0$  & \MSMsptruebetaZero   & & \MSMspThreebetaZeroMean & \MSMspThreebetaZeroStD & \MSMspThreebetaZeroAStE & & \MSMspFourbetaZeroMean & \MSMspFourbetaZeroStD & \MSMspFourbetaZeroAStE \\
			$\beta$    & \MSMsptruebeta       & & \MSMspThreebetaMean     & \MSMspThreebetaStD     & \MSMspThreebetaAStE     & & \MSMspFourbetaMean     & \MSMspFourbetaStD     & \MSMspFourbetaAStE     \\
			$\mu$      & \MSMsptruemu         & & \MSMspThreemuMean       & \MSMspThreemuStD       & \MSMspThreemuAStE       & & \MSMspFourmuMean       & \MSMspFourmuStD       & \MSMspFourmuAStE       \\
			$\delta$   & \MSMsptruedelta      & & \MSMspThreedeltaMean    & \MSMspThreedeltaStD    & \MSMspThreedeltaAStE    & & \MSMspFourdeltaMean    & \MSMspFourdeltaStD    & \MSMspFourdeltaAStE    \\ 
			$\mu^e$    & \MSMsptruemuE        & & \MSMspThreemuEMean      & \MSMspThreemuEStD      & \MSMspThreemuEAStE      & & \MSMspFourmuEMean      & \MSMspFourmuEStD      & \MSMspFourmuEAStE      \\
			$\sigma^e$ & \MSMsptruesigmaE     & & \MSMspThreesigmaEMean   & \MSMspThreesigmaEStD   & \MSMspThreesigmaEAStE   & & \MSMspFoursigmaEMean   & \MSMspFoursigmaEStD   & \MSMspFoursigmaEAStE   \\
			$\rho$     & \MSMsptruerho        & & \MSMspThreerhoMean      & \MSMspThreerhoStD      & \MSMspThreerhoAStE      & & \MSMspFourrhoMean      & \MSMspFourrhoStD      & \MSMspFourrhoAStE      \\ 
			\cmidrule(lr){4-6} \cmidrule(l){8-10} 
			
			$\np$      & \multicolumn{1}{c}{} & \multicolumn{1}{c}{} & \multicolumn{3}{c}{\MSMspThreeI} & & \multicolumn{3}{c}{\MSMspFourI} \\
			$N$        & \multicolumn{1}{c}{} & \multicolumn{1}{c}{} & \multicolumn{3}{c}{\MSMspThreeN} & & \multicolumn{3}{c}{\MSMspFourN} \\
			$\nsims$   & \multicolumn{1}{c}{} & \multicolumn{1}{c}{} & \multicolumn{3}{c}{\MSMspThreeR} & & \multicolumn{3}{c}{\MSMspFourR} \\  
			time (sec.)& \multicolumn{1}{c}{} & \multicolumn{1}{c}{} & \multicolumn{3}{c}{\MSMspThreetime}& & \multicolumn{3}{c}{\MSMspFourtime} \\ 
			\bottomrule
		\end{tabular}
	\end{center}
	\vspace*{3mm}
	\scriptsize{	
		Note: This table reports the means, standard deviations (StD), and average asymptotic standard errors (AStE) of MSM estimates across 100 Monte Carlo samples for four simulation designs, \MCone--\MCfour. The true values of the estimated parameters are fixed across designs and listed in the second column. The covariates $\vX=(\X,\ldots,\X)$, with $\X$ scalar ($K=1$) and standard lognormal. The designs vary in the number of players $\np$, sample size $N$, and number of simulation draws $\nsims$ used by the estimator. \computeplatform 
	}
\end{table}

We explore the finite-sample performance of the MSM estimator in Monte Carlo experiments. We use two designs, \MCone\ and \MCtwo, that mirror those for MSL (but with lower $\nsims$) and two designs, \MCthree\ and \MCfour, for much larger problems than can be handled by MSL. To ensure the optimization procedure avoids local minima, we initialize it with a derivative-free global grid search for the best 5 of 500 random starting values, followed by a multi-start sequential quadratic programming procedure. 

Panel \MCone\ of Table \ref{table:MSM} reports the MSM estimates of a two-player game with $N=\MSMspOneN$ observations and $\nsims=\MSMspOneR$ simulation draws. Even in this small design, the point estimates show little bias and the asymptotic standard errors are close to the standard deviations across Monte Carlo samples. This suggests that the choice of moments and sandwich estimator work well in this baseline case.

In Panel \MCtwo, we increase the number of players to $\np=\MSMspTwoI$, with $N=\MSMspTwoN$ and $\nsims=\MSMspTwoR$. Relative to Design \MCone, the estimator becomes more precise for the parameters that are most directly tied to the interaction structure, while computation times increase only modestly. This is consistent with larger games generating more informative variation in the sequence and timing of exits.

In Panel \MCthree, we consider much larger samples ($N=\MSMspThreeN$), while reducing the number of simulation draws at $\nsims=\MSMspThreeR$. As expected, the standard deviations decrease relative to the smaller-sample designs, and the asymptotic standard errors continue to track them closely. At the same time, the estimates remain close to the true parameter values, indicating that the estimator performs well even when the number of simulation draws per observation is kept small.\footnote{In \citet[][Section S5.3]{ecsgsupp26:abbringyu}, we show that, even in small problems ($\np=2$ and $N=500$), a moderate number $\nsims$ of simulation draws suffices.} 

Finally, in Panel \MCfour, we report results for a larger game with $\np=\MSMspFourI$ players, $N=\MSMspFourN$ observations, and $\nsims=\MSMspFourR$ simulation draws. The estimator remains stable in this higher-dimensional setting, with small biases, standard deviations across Monte Carlo samples that are well aligned with the asymptotic standard errors, and manageable computation times. Compared with the MSL results, this suggests that MSM gives up some efficiency in small problems, but scales much more easily as the numbers of players and games grow.

Overall, Table \ref{table:MSM} indicates that the MSM estimator can recover the structural parameters of the model with small bias and reasonable precision across a range of designs, while keeping computation feasible in settings that are difficult for MSL. 

\section{Conclusion}
\label{s:conclusion}

We have put the econometrics of mixed hitting-time models to work on optimal stopping games with strategic complements, synchronization games. In particular, we have shown  that the latent L\'{e}vy process that drives the payoffs from stopping, the covariates' effect on the thresholds that characterize the game's equilibrium outcomes, and the joint distribution of the unobserved heterogeneity in these same thresholds can be identified and estimated from data on durations and covariates under fairly weak conditions. This provides a new approach to disentangling strategic interactions and other sources of synchronization, common shocks and unobserved heterogeneity, in optimal stopping.

This empirical approach can be adapted to stopping games with strategic substitutes, such as pre-emption games and wars of attrition, and cooperative stopping games. These will bring their own challenges, such as different types of equilibrium multiplicity and a need for mixed strategies, but will still allow equilibrium duration outcomes to be characterized in terms of easy-to-analyze thresholds. 

\appendix
\section*{Appendix}

\section{Characterization of the Data}\label{app:characterization}

\begin{proof}[Proof of Lemma \ref{lemma:characterization}]
Recall that we can represent $\mathbf{T}$ as  $\left(W^*; T^1,\ldots, T^{W^*};{\cal E}^1,\ldots,{\cal E}^{W^*-1}\right)$; where either $T^{W^*}<\infty$, $W^*=W$, and ${\cal E}^{W^*}={\cal I}\setminus\bigcup_{w=1}^{W^*-1}{\cal E}^w$; or $T^{W^*-1}<\infty$ but $T^{W^*}=\infty$. Because $T^w=\sum_{j=1}^w\Delta T^j$, with $T^0\equiv 0$, we can also represent $\mathbf{T}$ as  $\left(W^*;\Delta T^1,\ldots,\Delta T^{W^*};{\cal E}^1,\ldots,{\cal E}^{W^*-1}\right)$. Define an ordered partial partition $\mathbf{e}^w$ of ${\cal I}$ to be {\em complete} if $\bigcup_{j=1}^w e^j={\cal I}$ and {\em strictly partial} otherwise. Then, given $\vx$, the distribution of $\mathbf{T}$ can equivalently be characterized by  
\begin{enumerate}
\item the distribution of $\left(\Delta T^1,\ldots,\Delta T^{W^*}\right)$ on $\{W^*=w\}\bigcap\{{\cal E}^1=e^1,\ldots,{\cal E}^{w-1}=e^{w-1}\}$;
\item the distributions of $\left(\Delta T^1,\ldots,\Delta T^{w}\right)$ on
\begin{enumerate}
\item $\left\{T^{w}<\infty\right\}\bigcap\left\{{\cal E}^1=e^1,\ldots,{\cal E}^{w-1}=e^{w-1},{\cal E}^{w}={\cal I}\setminus\bigcup_{j=1}^{w-1}e^j\right\}$ and 
\item $\left\{T^{w}=\infty\right\}\bigcap\left\{{\cal E}^1=e^1,\ldots,{\cal E}^{w-1}=e^{w-1}\right\}$; and
\end{enumerate}
\item the Laplace transforms of these distributions,
\begin{equation}
\label{eq:lapTxcomplete}
\hspace*{-3em}\mathbf{\s}^{w}\mapsto\Lap^{w}\left(\mathbf{\s}^{w}\; ;\left(\mathbf{e}^{w-1},{\cal I}\setminus\bigcup_{j=1}^{w-1}e^j\right) \mid \vx\right)
\end{equation}
and
\begin{align}
\nonumber &\hspace*{-3em}1-\sum_{e^{1}\subseteq{\cal I}}\Lap^{1}\left(0\; ;e^{1} \mid \vx\right)&\text{ if }w=1,\\
\label{eq:lapTxdefect}
&\hspace*{-3em}\mathbf{\s}^{w-1}\mapsto\Lap^{w-1}\left(\mathbf{\s}^{w-1};\mathbf{e}^{w-1} \mid \vx\right)-\sum_{e^{w}\subseteq{\cal I}\setminus\bigcup_{j=1}^{w-1}e^j}\Lap^{w}\left(\left(\mathbf{\s}^{w-1},0\right);\left(\mathbf{e}^{w-1},e^{w}\right) \mid \vx\right)&\text{ if }w\geq 2;
\end{align}
\end{enumerate}

\noindent for all ordered strictly partial partitions $\mathbf{e}^{w-1}$ of ${\cal I}$ and $w=1,\ldots,\np$.

Clearly, we can construct \eqref{eq:lapTxcomplete} and \eqref{eq:lapTxdefect}, and thus the distribution of $\mathbf{T}\mid\vx$, if we know $\Lap^w\left(\cdot\; ;\mathbf{e}^w \mid \vx\right)$ for all ordered partial partitions $\mathbf{e}^w$ of ${\cal I}$; $w=1,\ldots,\np$. 

Conversely, \eqref{eq:lapTxcomplete} immediately gives $\Lap^w(\cdot\; ;\mathbf{e}^w\mid\vx)$ for all ordered complete partitions $\mathbf{e}^w$ of ${\cal I}$, which necessarily have $e^w={\cal I}\setminus\bigcup_{j=1}^{w-1}e^j$. Because all ordered partial partitions $\mathbf{e}^\np$ are complete, this gives $\Lap^\np(\cdot\; ;\mathbf{e}^\np\mid\vx)$ for all ordered partial partitions $\mathbf{e}^\np$ of ${\cal I}$. Using \eqref{eq:lapTxdefect} with $w=\np$, we can then construct $\Lap^{\np-1}(\cdot\; ;\mathbf{e}^{\np-1}\mid\vx)$ for all ordered strictly partial partitions $\mathbf{e}^{\np-1}$ of ${\cal I}$. Iterating this procedure, we can construct $\Lap^{w-1}(\cdot\; ;\mathbf{e}^{w-1}\mid\vx)$ for all ordered strictly partial partitions $\mathbf{e}^{w-1}$ of ${\cal I}$; $w=\np,\ldots,2$.
\end{proof}


\section{Identification with More than Two Players}
	\label{app:identGen}

\begin{proof}[Proof of Theorem \ref{th:identGen}]
By \eqref{eq:LapOne}, $\Lap_{\Delta T^{1}}(\s \mid (\x,\ldots,\x)) = \exp\left[ -\phi^{\np}(\x) \Lambda(\s) \right] \Lap_{\hat\err^\np}\left( \Lambda(\s) \right)$, where $\hat\err^\np\equiv\min_{i\in{\cal I}}\err_i$. Using this, and Assumption \ref{ass:variationGen} instead of Assumption \ref{ass:variation}, the first part of Lemma \ref{lemma:identTwo}'s proof establishes identification of $c^{-1}\Lambda$ and $c\phi^{\np}$. The identification of
\begin{equation}
\label{eq:lapYgen}
	\left(\salt^1,\ldots,\salt^w\right)\in\Rp^w\mapsto
	\mathbb{E}\left[\exp\left(-\sum_{j=1}^w\salt^jc\Delta\barYw^j(\vx)\right)\mathds{1}_{E^w(\mathbf{e}^w;\vx)}(\bigerr)\right]
\end{equation}
\noindent follows along the lines of the first part of Theorem \ref{th:identTwo}'s proof. 

Next, we recursively identify $c\phi^{S}$ for $S=\np-1,\ldots,1$. In step $S$ of this recursion, suppose that $c\phi^\np,\ldots,c\phi^{S+1}$ have been identified (note that we have already established this for 
the initial step, $S=\np-1$). Consider the partial sequential exit pattern $(\mathbf{e}^{w-2},\{i\})$, where $\mathbf{e}^{w-2}\equiv\left(\{1\},\ldots,\{w-2\}\right)$, player $i\geq w-1$, and $w\equiv\np-S+1$. By \eqref{eq:lapYgen}, we can identify the distribution of $\left(c\Delta\barYw^1(\vx),\ldots,c\Delta\barYw^w(\vx)\right)$ on $\left\{\bigerr\in E^{w-1}\left((\mathbf{e}^{w-2},\{i\});\vx\right)\right\}$ and thus, for $\zeta\in\Rp$ and covariates $\vx_i$ such that $\x_i=\tilde\x^{S+1}$ and $\x_j=\tilde\x^S$ for all $j\in\{w+1,\ldots,\np\}\setminus\{i\}$,
\begin{align*}
P^S_i\left(\zeta\mid\vx_i\right)
	&\equiv \Pr\left(c\Delta\barYw^w(\vx_i)>\zeta,\bigerr\in E^{w-1}\left((\mathbf{e}^{w-2},\{i\});\vx_i\right)\right)\\
	&= \Pr\bigl(\phi^\np(\x_1)+\err_1<\cdots<\phi^{S+2}(\x_w)+\err_w<\\
	&\hspace*{11.5em}\phi^{S+1}(\tilde\x^{S+1})+\err_i<\phi^S(\tilde\x^S)+\hat\err^S_{-i}-\zeta/c\bigr),
\end{align*}
where $\hat\err^S_{-i}\equiv\min_{j\in\{w+1,\ldots,\np\}\setminus\{i\}}\err_j$.
By Assumption \ref{ass:overlapGen}, we can pick $\tilde\x^{S+1}$ (which may differ from the previous iteration's $\tilde\x^S$) and $\tilde\x^S$ in $\vx_i$ so that $\sum_{i=\np-S}^{\np} P^S_i\left(0\mid\vx_i\right)\geq \Pr\left(\bigerr\in E^{w-2}(\mathbf{e}^{w-2};\vx_i)\right)$, corresponding to $\phi^{S+1}(\tilde\x^{S+1})\leq\phi^S(\tilde\x^S)$. By Assumption \ref{ass:cont}, $\sum_{i=\np-S}^{\np} P^S_i\left(\zeta\mid\vx_i\right)$ strictly decreases in $\zeta$ and equals $\Pr\left(\bigerr\in E^{w-2}\left(\mathbf{e}^{w-2};\vx_i\right)\right)$ at $\zeta=c\phi^S(\tilde\x^S)-c\phi^{S+1}(\tilde\x^{S+1})\geq 0$. As we know $c\phi^{S+1}(\tilde\x^{S+1})$, this identifies $c\phi^S(\tilde\x^S)$. Using this, and $P_i^S$ instead of $P_A$, we can identify $c\phi^S$ as in Theorem \ref{th:identTwo}'s proof.
 
Finally, to identify $G$, consider the set ${\cal E}_{seq}$ of all complete sequential exit patterns (all permutations of $\left(\{1\},\ldots,\{\np\}\right)$). Pick  $x^1,\ldots,x^\np\in\sX$ such that $\phi^{\np}(x^\np) \leq\cdots\leq \phi^{1}(x^1)$ (which is possible by Assumption \ref{ass:overlapGen}). Given $c\phi^{\np}(\x^\np),\ldots,c\phi^1(\x^1)$; the distributions of $\left(c\Delta\barYw^1(\vx_\mathbf{e}),\ldots,\Delta\barYw^\np(\vx_\mathbf{e})\right)$ on $\left\{\bigerr\in E^\np(\mathbf{e};\vx_\mathbf{e})\right\}$ for all $\mathbf{e}\in{\cal E}_{seq}$ and corresponding $\vx_\mathbf{e}$ that endow the player exiting in wave $\np-S+1$ with covariates $\x^S$; $S=\np,\ldots,1$; identify the distribution of $c\bigerr$ on $\left\{\bigerr\in \bigcup_{\mathbf{e}\in{\cal E}_{seq}} E^\np(\mathbf{e};\vx_\mathbf{e})\right\}$. As Assumption \ref{ass:overlapGen} ensures that $\Pr\left(\bigerr\in \bigcup_{\mathbf{e}\in{\cal E}_{seq}} E^\np(\mathbf{e};\vx_\mathbf{e})\right)=1$, this identifies $G(\cdot/c)$.
\end{proof}

\bibliographystyle{chicago}
\bibliography{ecsg}

\end{document}